\newtheorem{theorem}{Theorem}
\newtheorem{proposition}[theorem]{Proposition}
\newtheorem{corollary}[theorem]{Corollary}
\newtheorem{example}[theorem]{Example}
\newtheorem{lemma}[theorem]{Lemma}
\newtheorem{definition}[theorem]{Definition}
\newtheorem{remark}[theorem]{Remark}
\newcommand{\beq}{\begin{eqnarray}}
\newcommand{\eeq}{\end{eqnarray}}
\newcommand{\n}{\nonumber}
\newcommand{\field}[1]{\mathbb{#1}}
\newcommand{\F}{\field{F}}
\newcommand{\cM}{{\cal M}}
\newcommand{\cC}{{\cal C}}
\newfont{\bbb}{msbm10 scaled 500}
\newfont{\bb}{msbm10 scaled 1100}
\newcommand{\av}{{\bf a}}
\newcommand{\bv}{{\bf b}}
\newcommand{\cv}{{\bf c}}
\newcommand{\dv}{{\bf d}}
\newcommand{\ev}{{\bf e}}
\newcommand{\fv}{{\bf f}}
\newcommand{\gv}{{\bf g}}
\newcommand{\lv}{{\bf l}}
\newcommand{\pv}{{\bf p}}
\newcommand{\rv}{{\bf r}}
\newcommand{\sv}{{\bf s}}
\newcommand{\uv}{{\bf u}}
\newcommand{\vv}{{\bf v}}
\newcommand{\xv}{{\bf x}}
\newcommand{\yv}{{\bf y}}
\newcommand{\Um}{{\bf U}}
\newcommand{\Vm}{{\bf V}}
\newcommand{\Ac}{{\cal A}}
\newcommand{\Bc}{{\cal B}}
\newcommand{\Cc}{{\cal C}}
\newcommand{\Dc}{{\cal D}}
\newcommand{\Ec}{{\cal E}}
\newcommand{\Gc}{{\cal G}}
\newcommand{\Kc}{{\cal K}}
\newcommand{\Mc}{{\cal M}}
\newcommand{\Rc}{{\cal R}}
\newcommand{\Sc}{{\cal S}}
\newcommand{\Xc}{{\cal X}}
\newcommand{\nuv}{\hbox{\boldmath$\nu$}}
\definecolor{OXO-emph}{RGB}{153,0,0}
\newcommand\ceilb[1]{\left\lceil #1 \right\rceil}
\newcommand\floorb[1]{\left\lfloor #1 \right\rfloor}
\newcommand{\algrule}[1][.2pt]{\par\vskip.5\baselineskip\hrule height #1\par\vskip.5\baselineskip}
\DeclareMathAlphabet{\mathpzc}{OT1}{pzc}{m}{it}
\begin{document}

\title{Optimal Locally Repairable and Secure Codes\\ for Distributed Storage Systems}

\author{Ankit~Singh~Rawat, O.~Ozan~Koyluoglu, Natalia~Silberstein, and~Sriram~Vishwanath
\thanks{The authors are with the Laboratory of Informatics, Networks and Communications, Department of Electrical and Computer Engineering, The University of Texas at Austin, Austin, TX 78751 USA. E-mail: ankitsr@utexas.edu, \{ozan, natalys, sriram\}@austin.utexas.edu.}
\thanks{This paper was presented in parts at the
Workshop on Trends in Coding Theory, Ascona, Switzerland, October 2012; at IEEE Information Theory and Applications Workshop, San Diego, CA, February 2013; and at IEEE International Symposium on Information Theory, Istanbul, Turkey, July 2013.}}

\maketitle

\begin{abstract}
This paper aims to go beyond resilience into the study of security and local-repairability for distributed storage systems (DSS). Security and local-repairability are both important as features of an efficient storage system, and this paper aims to understand the trade-offs between resilience, security, and  local-repairability in these systems. In particular, this paper first investigates security in the presence of colluding eavesdroppers, where eavesdroppers are assumed to work together in decoding stored information. Second, the paper focuses on coding schemes that enable optimal local repairs. It further brings these two concepts together, to develop locally repairable coding schemes for DSS that are secure against eavesdroppers.

The main results of this paper include: a. An improved bound on the secrecy capacity for minimum storage regenerating codes, b. secure coding schemes that achieve the bound for some special cases,  c. a new bound on minimum distance  for locally repairable codes, d.  code  construction for locally repairable codes that attain the minimum distance bound, and e. repair-bandwidth-efficient locally repairable codes with and without security constraints.
\end{abstract}

\begin{IEEEkeywords}
Coding for distributed storage systems,
locally repairable codes,
repair bandwidth efficient codes,
security.
\end{IEEEkeywords}

\IEEEpeerreviewmaketitle


\section{Introduction}

\subsection{Background}

Distributed storage systems (DSS) are of increasingly importance, given the vast amounts of data being generated and accessed worldwide. OceanStore \cite{oceanstore}, Google File System (GFS) \cite{gfs} and TotalRecall \cite{totalrecall} are a few examples of existing DSS. An essential component of DSS is resilience to node failures, which is why every DSS today incorporates a mechanism for protection against failures, thus preventing permanent loss of  data stored on the system. Typically, this resilience is afforded by replication, and in recent years, using more sophisticated coding theoretic techniques.

Node failures are one of the many design challenges faced by DSS. There are two other challenges, arguably of equal importance: 1) security and 2) local-repairability. Due to the decentralized nature of such systems, it is important whether they are secure against a variety of possible attacks. Our focus in this paper is on passive eavesdroppers located at multiple nodes in the DSS that can collude in attempting to gain an understanding of the stored data. In addition to being decentralized, DSS systems are often widely geographically distributed; therefore, local-repairability in storage proves very useful. In this paper, we develop a deeper understanding of local-repairability in storage, and subsequently combine locality and security to develop codes for secure locally-repairable DSS.

The security of communication or storage systems can be analyzed with their resilience to active or passive attacks~\cite{Goldreich:Foundations04, Delfs:Introducion07}. Active attacks in such systems include settings were an adversary modifies existing packets or injects new ones into the system, whereas the passive attack models include eavesdroppers observing the messages being stored/transmitted. For DSS, cryptographic approaches are often ineffective, as key distribution and management between all nodes in the system is extremely challenging to accomplish. A coding/information theoretic approach to security typically offers stronger security guarantees than cryptographic schemes \cite{Shannon:Communication49,Wyner:The75}. Moreover, in the context of distributed storage, such approach is logistically easier to realize than mechanisms that require key management. A secrecy-enabling coding scheme is designed based on a worst-case estimate of the information leaked to eavesdroppers, and can naturally complement other existing coding schemes being utilized in distributed storage systems. In its simplest form, security against an eavesdropper can be achieved using a one-time pad scheme~\cite{Vernam:Cipher26}. For example, consider that the contents of the two nodes are given by $X_1=R$, and $X_2=R \oplus D$, where $R$ is a uniformly random bit, and $D$ is the data bit. By contacting both nodes, one can clearly obtain the data by computing $X_1\oplus X_2$. However, one can not get any information about the data bit by observing any one of the two nodes as the mutual information between the data and the content of one of the nodes is zero, i.e., $I(X_i;D)=0$ for $i=1,2$. This example corroborates the significance of information theoretic approach in securing DSS. 

Local-repairability of DSS is an additional property which can be one of the primary design criteria for the system. The corresponding performance metric associated with a coding scheme is its  {\em locality} $r$ which is defined as the number of  nodes that must participate  in a  repair process when a particular node fails.  Local-repairability requires small locality, which implies that fewer nodes are involved in the node repair process. This makes the entire repair process easier from a logistical perspective. In addition, local-repairability is of significant interest when a cost is associated with contacting each node in the system. Local-repairability, in its simplest form, can be accomplished by splitting the data into groups, and then separately coding and storing each group. However, this na\"{\i}ve approach requires the connection to all the groups in order to retrieve the whole data, and may not be the most efficient in terms of other metrics of interest, e.g., resilience. Therefore, there is a growing interest in more sophisticated mechanisms for achieving local-repairability in DSS. The systems designed with locality in mind can also present benefits in terms of security. In other words, local-repairability and security against eavesdropper attack go hand in hand, and, as we illustrate in this paper, a joint design of both features can prove to be particularly useful.

In DSS, encoding data before storing it on the system provides the same level of resilience against node failures as that of the conventional approach of uncoded replication, but with much less storage space. The advantages that can be leveraged in terms of storage space may result in a degradation of other performance metrics. Being one of such metrics, {\em repair bandwidth} refers to the amount of data that needs to be transferred in the event of a single node failure in order to regenerate the data on the failed node. This metric is highly relevant as a large fraction of network bandwidth in DSS can be occupied by the data being transferred during the repair process. Thus, it is desirable to have coding schemes with small repair bandwidth. Most of the codes designed for storage in the past are {\em maximum distance separable} (MDS) codes~\cite{BlaBraBruJai, BlaRoth, CasBruck}. These codes usually entail a high repair bandwidth as the entire original file is reconstructed in order to regenerate the encoded data stored at a particular storage node. The DSS employing MDS codes have {\em `any $k$ out of $n$' property}, where the content of any $k$ storage nodes out of $n$ nodes in the system is sufficient to recover the original file. 

In \cite{dimakis}, Dimakis et al. address the issue of repair bandwidth and establish a trade-off between per node storage $\alpha$ and repair bandwidth $\gamma$ for codes that have desirable `any $k$ out of $n$' property\footnote{These codes are not necessarily MDS codes. For MDS codes, we have the file size $\Mc = k\alpha$, in addition to `any $k$ out of $n$' property.}. The class of codes that attain this trade-off  is referred to as \emph{regenerating codes}. Utilizing network coding framework, \cite{dimakis} considers the notion of {\em functional repair} where node repair process may not exactly replicate  the original failed node; nonetheless, the repaired node preserves `any $k$ out of $n$' property of the overall storage system. However, it is desirable to perform {\em exact repair} in DSS, where the data regenerated after the repair process is an exact replica of what was stored on the failed node. This is essential due to the ease of maintenance and other practical advantages, e.g., maintaining a code in its systematic form. Exact repair is also advantageous compared to the functional repair in the presence of eavesdroppers, as the latter scheme requires updating the coding rules which may leak additional information to eavesdroppers \cite{pawar11}. Noting the resilience of exact repair to eavesdropping attacks and the necessity of it for practical purposes, it is of significant interest to design regenerating codes that not only enjoy an optimal trade-off in repair bandwidth vs. storage, but also satisfy exact repair in addition to security and/or small locality constraints.


\subsection{Contributions and organization}

In this paper, we consider secure minimum storage regenerating codes, and locally repairable codes for DSS with/without security constraints. As a security constraint, we adopt the passive and colluding eavesdropper model presented in \cite{SRK_globecom11}. Under this model, during the entire life span of the DSS, the eavesdropper can get access to data stored on an $\ell_1$ number of nodes, and, in addition, it observes both the stored content and the data downloaded (for repair) on an additional $\ell_2$ number of nodes. This attack model generalizes the eavesdropper model proposed in \cite{pawar11}, which considers the case of $\ell_2=0$. Since the amount of information downloaded during a node repair is equal to the information stored on the repaired node for minimum bandwidth regenerating codes, the two notions are different only for the minimum storage regenerating codes.

With this general eavesdropper model, we extend the existing results on the design of secure minimum storage regenerating codes for DSS in Section~\ref{sec:secrecy_BW}. First, we derive an upper bound on secrecy capacity, the amount of data that can be stored on the system without leaking information to an eavesdropper, for a DSS employing bandwidth efficient node repair. Our bound is novel in that it can take into account the additional downloaded data at the eavesdroppers; as a result, it is tighter than the available bounds in the literature~\cite{pawar11}. In Section~\ref{subsec:secure_msr}, we present a secure, exact-repairable coding scheme that has a higher code rate compared to that of the scheme presented in~\cite{SRK_globecom11}. Utilizing a special case of the obtained bound, we show our coding scheme achieves the bound on secrecy capacity  for any $(\ell_1,\ell_2)$ when the eavesdropper observes node repairs of a set of $\ell_1+\ell_2$ ($<k$) nodes for any $\ell_2\leq 2$.

In Section~\ref{sec:lrc}, we consider locally repairable DSS where an original file is encoded using a code with (small) locality before storing it on the system. In particular, we focus on vector codes with \emph{all symbol locality} which we refer to as vector locally repairable codes. Similar to scalar codes with locality considered in \cite{PKLK12}, we allow for these codes to have multiple local parities per local group. In Section~\ref{subsec:d_min_upper}, we derive an upper bound on minimum distance of vector locally repairable codes. The bound is quite general and applies to non-linear codes as well. We develop this bound using the proof technique used in \cite{Gopalan12}, \cite{DimDim12}. Subsequently, we present a novel explicit construction for vector locally  repairable codes, based on Gabidulin codes, a family of {\em maximum rank distance} (MRD) codes. We show that these codes are optimal in the sense that they achieve the derived bound on minimum distance. Remarkably, this establishes a per node storage vs. resilience trade-off for vector locally repairable codes with multiple local parities per local group, similar to the work of  Papailiopoulos et al~\cite{DimDim12}, which considers vector locally repairable codes with a single local parity per local group. Here, as a special case of our general construction for vector locally repairable codes, we also describe an explicit construction for scalar code with the all symbol locality property for the range of parameters where only existential results are present in the literature~\cite{Gopalan12, PKLK12}.

In Section~\ref{sec:local_repair_bw}, we introduce the notion of repair bandwidth efficiency for locally repairable DSS. We obtain an upper bound on the file size  for locally repairable codes that allow node repairs to be performed with a given repair bandwidth. We then present minimum distance optimal locally repairable coding schemes that are repair bandwidth efficient and achieve the file size bound\footnote{In a parallel and independent work~\cite{KPLV12}, Kamath et al. also provide upper bounds on minimum distance together with constructions and existence results for vector locally repairable codes. \cite{KPLV12} studies repair bandwidth efficient locally repairable codes as well.}. We combine locally repairable codes with minimum storage regenerating codes to obtain such schemes.

Finally, we consider the problem of providing security against passive eavesdropping attacks for locally repairable codes in Section~\ref{sec:secrecy_local}. We derive an upper bound on the size of data that can be stored on a minimum distance optimal locally repairable DSS that is secure against an $(\ell_1, \ell_2)$-eavesdropper. We consider two cases: (i) single parity node per local group in Section~\ref{subsec:single_local} and (ii) multiple parity nodes per local group in Section~\ref{subsec:secure_LRC_2}. We present secure locally repairable coding schemes for DSS in both cases using secrecy pre-coding. We also highlight the advantages of having multiple local parity nodes per local group when security constraints are present.

In all the scenarios that we study in this paper, the achievability results allow for \emph{exact} repair, and we obtain secure file size upper bounds  from min-cut analyses over the secrecy graph representation of DSS. Our main secrecy achievability coding argument is obtained by utilizing a secret sharing scheme with MRD codes (based on linearized polynomials), similar to the classical work of~\cite{shamir79}. In the following subsection, we present a summary of related work to the problems studied in this paper. In Section II, we provide a general system model together with necessary background material. We also present some preliminary results in Section II which are utilized throughout the paper.



\subsection{Related work}
\label{subsec:related_work}

In \cite{dimakis}, Dimakis et al. characterize the information theoretic trade-off between repair bandwidth and per node storage for DSS satisfying the `any $k$ out of $n$' property. The authors map the life span of DSS with a given set of node failures to a multicast problem over a dynamic network. Using this mapping, the authors show that network coding based storage schemes achieve the lower bound on repair bandwidth allowing functional repair \cite{dimakis}. \cite{WuDimRam07} and \cite{Wu2010} present coding schemes that achieve the lower bound on repair bandwidth. The work in \cite{WuDim09, SRKR_itw10, SuhRam_isit10} devise codes that achieve the lower bound derived in \cite{dimakis} when data is downloaded from all surviving nodes in order to perform exact repair of a failed node. The coding schemes in \cite{WuDim09} and \cite{SRKR_itw10, SuhRam_isit10} are tailored for $k < 3$ and $k \leq \frac{n}{2}$, respectively. In \cite{RSK11}, Rashmi et al. design exact-repairable codes, which allow node repair to be performed by contacting $d \leq n-1$ surviving nodes. These codes are optimal for all parameters $(n,k,d)$ at the minimum bandwidth regeneration (MBR) point of the storage-bandwidth trade-off. At the minimum storage regeneration (MSR) point, these codes belong to low rate regime, as their rate is upper bounded by $\frac{1}{2} + \frac{1}{2n}$. 
Recently, researchers have devised high rate exact-repairable codes for the MSR point. \cite{hadamard_dimitris} presents codes for DSS with two parity nodes, which accomplish exact regeneration while being optimal in repair bandwidth. In \cite{cadambe_isit11} and \cite{zigzag13}, permutation-matrix based codes are designed to achieve the bound on repair bandwidth for systematic node repair for all $(n,k)$ pairs. \cite{zigzag_allerton11} further generalizes the idea of \cite{zigzag13} to get MDS array codes for DSS that allow optimal exact regeneration for parity nodes as well.

Towards obtaining coding schemes with small locality, Oggier et al. present coding schemes which facilitate local node repair in \cite{oggier_hom, oggier_proj}. In \cite{Gopalan12}, Gopalan et al. establish an upper bound on the minimum distance of locally repairable linear scalar codes, which is analogous to the Singleton bound. They also show that Pyramid codes proposed in \cite{pyramid} achieve this bound. Subsequently, the work by Prakash et al. extends the bound to a more general definition of locally repairable scalar linear codes \cite{PKLK12} with multiple local parities per a local group. In \cite{DimDim12}, Papailiopoulos et al. generalize the bound in \cite{Gopalan12} to vector codes (potentially non-linear) having one local parity in each local group, and establish per node storage vs. resilience trade-off. They also present locally repairable coding schemes that  exhibit `$k$ out of $n$' property at the cost of small amount of excess storage space per node.

The problem of designing secure DSS against eavesdropping attack has been addressed in \cite{pawar11}. In \cite{pawar11}, Pawar et al. consider an eavesdropper which can get access to the data stored on $\ell$ $(<k)$ storage nodes of DSS operating at the MBR point. The authors derive an upper bound on the secrecy capacity of such systems, and present a coding scheme in the ``bandwidth limited regime'' that achieve this bound. Shah et al. consider the design of secure regenerating codes at the MSR point as well~\cite{SRK_globecom11}. Since the amount of data downloaded for node repair at the MSR point is more than what is eventually stored on the repaired node, the eavesdropper may obtain more information if it is able to access the data downloaded during node repair process. Therefore, at the MSR point, the eavesdropper is modeled as accessing the data stored on $\ell_1$ nodes and data downloaded during $\ell_2$ node repairs (corresponding to distinct nodes), with $\ell_1 + \ell_2 < k$. In~\cite{SRK_globecom11}, Shah et al. present a coding scheme, based on product matrix codes \cite{RSK11}, that achieves the bound on secrecy capacity derived in \cite{pawar11}. They further use product matrix codes based solution for the MSR point as well, which matches the bound in \cite{pawar11} only when $\ell_2 = 0$. Thus, the secrecy capacity for MSR codes is considered to be open when the eavesdropper is allowed to observe downloaded information during node repairs. Moreover, the solution at the MSR point in \cite{SRK_globecom11} gives only low rate schemes as product matrix codes are themselves low rate codes.

There is a closely related line of work on analyzing DSS in the presence of active attacks where an adversary is allowed to modify the content stored on a certain number of nodes throughout the life span of the DSS. Under active attacks, the objective of designing coding schemes to allow for successful decoding of the original data at a data collector, even in the presence of erroneous data injected by the active adversary, is studied in \cite{pawar11, RSRK_isit12, SRV12}.



\section{System Model and Preliminaries}
\label{sec:sys_model}

We consider a DSS with $n$ live nodes at a time. Let $\fv$ denote a file of size $\mathcal{M}$ over a finite field $\mathbb{F}$ that needs to be stored on the DSS. Each node in the system is assumed to store $\alpha$ symbols over $\mathbb{F}$. In what follows, we refer to $\alpha$ as a node size. In order to store the file $\fv = (f_1, f_2,\ldots, f_{\cM})\in \F^{\cM}$ on the DSS, it is first encoded to $n$ blocks, $\xv = (\xv_1, \xv_2,\ldots, \xv_n)\in (\F^{\alpha})^n$, each $\alpha$ symbol long over $\F$. The vector $\xv$ of length $n\alpha$ over $\F$ can be considered as a codeword of a vector code\footnote{Vector codes are formally defined in Section~\ref{subsec:vec}.} $\Cc$. The size of a finite field $\F$ is specified later in the context of various code constructions presented in this paper. Given the codeword $\xv$, node $i$ in an $n$-node DSS stores encoded block $\xv_i$. In this paper, we use $\xv_i$ to represent both block $\xv_i$ and a storage node storing this encoded block.

Given this setup, the network evolves over failures and repairs. In the event of failure of $i$-th storage node, a new node, namely the newcomer, is introduced to the system. This node contacts to $d$ surviving storage nodes and downloads $\beta$ symbols from each of these nodes, which translates to repair bandwidth $\gamma = d\beta$. The newcomer node uses these $d\beta$ number of downloaded symbols to regenerate $\alpha$ symbols $\xv_i$ and stores these symbols. This exact repair process is assumed to preserve `any $k$ out of $n$' property of the DSS, i.e., data stored on any $k$ nodes (potentially including the nodes that are repaired) allows the original file $\fv$ to be reconstructed. In the sequel, we use {\em (n,k)-DSS} to refer to the systems with `any $k$ out of $n$' property.

During the node repair process, the symbols transmitted from node $i$ in order to repair node $j$ are denoted as $\dv_{i,j}$, and the set $\dv_j$ is used to denote all of the symbols downloaded at node $j$. For linear encoding schemes, we use $\Dc_{i,j}$ to denote the subspace spanned by the symbols $\dv_{i,j}$. $\Dc_j$ then represent the subspace downloaded to node $j$, which have a certain dimension in this subspace representation. For a given set of nodes $\Ac$, we use the notation $\xv_\Ac\triangleq \{\sv_i, i\in\Ac\}$. A similar notation is adopted for the downloaded symbols, and the subspace representation. Throughout the text, we usually stick to the notation of having vectors denoted by lower-case bold letters; and, sets and subspaces being denoted with calligraphic fonts. $[n]$ denotes the set $\{1, 2,\ldots, n\}$. We use $a \mid b$ ($a \nmid b$) to represent that  $a$ divides $b$ ($a$ does not divide $b$). Next, we present a formal description of vector codes.


\subsection{Vector codes}
\label{subsec:vec}

An $(n, M, d_{\min}, \alpha)_{q}$ vector code $\cC\subseteq (\F_q^{\alpha})^n$ is a collection of $M$ vectors of length $n\alpha$ over $\F_{q}$. A codeword $\cv \in \cC$ consists of $n$ blocks, each of size $\alpha$ over $\F_q$. We can replace each $\alpha$-long block with an element in $\F_{q^{\alpha}}$ to obtain a vector $\cv = (\cv_1, \cv_2,\ldots, \cv_n) \in \F^n_{q^{\alpha}}$. The minimum distance $d_{\min}$ of $\cC$ is defined as the minimum Hamming distance between any two codewords in $\cC$, when codewords are viewed as vectors in $\F^n_{q^{\alpha}}$, i.e.,
\begin{align}
\label{eq:vec_dmin}
d_{\min} = \min_{\cv, \tilde{\cv} \in \cC: \cv \neq \tilde{\cv}}d_{H}(\cv, \tilde{\cv}).
\end{align}
Here, $d_H(\av, \bv) = \sum_{i = 1}^{n}\mathds{1}_{\{\av_i \neq \bv_i\}}$ denotes the Hamming distance between vectors $\av$ and $\bv$  in $\F^n_{q^{\alpha}}$ with $\mathds{1}_{\{\av_i \neq \bv_i\}}$ representing indicator function, which takes value $1$ if $\av_i \neq \bv_i$ and zero otherwise. An alternative definition for the minimum distance of an $(n, M, d_{\min}, \alpha)_{q}$ vector code is as follows~\cite{DimDim12}:

\begin{definition}\label{def:dmin}
Let $\cv$ be a codeword in $\cC$ selected uniformly at random from $M$ codewords. The minimum distance of $\cC$ is defined as
\begin{equation}
\label{eq:def_dmin}
d_{\min} =  n - \max_{\Ac \subseteq [n]: H(\cv_{\Ac}) < \mathcal{M}}|\Ac|,
\end{equation}
where $\Ac = \{i_1,\ldots, i_{|\Ac|}\} \subseteq [n]$, $\cv_{\Ac} = (\cv_{i_1},\ldots, \cv_{i_{|\Ac|}})$, and $H(\cdot)$ denotes $q$-entropy.
\end{definition}
A vector code is said to be maximum distance separable (MDS) code if $\alpha | \log_{q}M$ and $d_{\min} = n - \frac{\log_{q}M}{\alpha} + 1$. Vector codes are  also known as \emph{array} codes~\cite{BlaRoth, CasBruck}. A linear $(n, M, d_{\min}, \alpha)_{q}$ vector (array) code is a linear subspace of $\F_q^{\alpha n}$ of dimension $\cM = \log_{q}M$. We use $[n, \cM, d_{\min}, \alpha]_{q}$ to denote a linear $(n, M, d_{\min}, \alpha)_{q}$ array code. An $[n, \cM, d_{\min}, \alpha]_{q}$ array code is called {\em MDS array code} if $\alpha | \Mc$ and $d_{\min} = n - \frac{\cM}{\alpha} + 1$. Constructions for MDS array codes can be found e.g. in~\cite{BlaBraBruJai, BlaRoth, CasBruck}.

The encoding process of an $(n, M = q^{\cM}, d_{\min}, \alpha)_{q}$ vector code can be summarized by a function
\begin{align}
\label{eq:encoding_function1}
\mathbb{G}: \mathbb{F}^{\cM}_q \rightarrow \left(\mathbb{F}^{\alpha}_q\right)^n.
\end{align}
In general, $\mathbb{G}$ can be an arbitrary function. However, for an $[n, \cM, d_{\min}, \alpha]_{q}$  (linear) array code, the encoding function is defined by an $\cM \times n\alpha$ generator matrix $G = [\gv^1_1, \ldots, \gv^{\alpha}_{1} | \ldots | \gv^1_{n}, \ldots, \gv^{\alpha}_{n}]$ over $\F_q$. Here, $\gv^1_i, \ldots, \gv^{\alpha}_{i}$ represent the encoding vectors associated with $\cv_i$.

In order to store  a file $\fv$ on a DSS using a vector code $\cC$, $\fv$ is first encoded to a codeword $\cv = (\cv_1, \cv_2,\ldots, \cv_n)\in \cC $. Each symbol of the codeword is then stored on a distinct node. In particular, we have $\xv_i = \cv_i$, where $\xv_i$ denotes the content of $i$th node.


\subsection{Information flow graph}

In their seminal work \cite{dimakis}, Dimakis et al. model the operation of DSS using a multicasting problem over an information flow graph. See Fig.~\ref{fig:info_flow} for illustration of an information flow graph. The information flow graph consists of three types of nodes:
\begin{itemize}
\item Source node ($S$): Source node contains $\mathcal{M}$ symbols of the original file $\fv$. The source node is connected to $n$ nodes.
\item Storage nodes ($\xv_i = (x^{\rm in}_i,x^{\rm out}_i)$): Each storage node in DSS is represented by a pair of nodes in the information flow graph: 1) input node $x^{\rm{in}}_i$ and 2) output node $x^{\rm{out}}_i$. Here, $x^{\rm{in}}_i$ denotes the data downloaded by node $i$, whereas $x^{\rm out}_i$ denotes the $\alpha$ symbols actually stored on node $i$. An edge of capacity $\alpha$ is introduced between $x^{\rm{in}}_i$ and $x^{\rm{out}}_i$ in order to enforce the storage constraint of $\alpha$ symbols per node. For a newcomer node $j$, $x^{\rm{in}}_j$ is connected to output nodes ($x^{\rm{out}}$) of $d$ live nodes with links of capacity $\beta$ symbols each, representing the data downloaded during a node repair.
\item Data collector nodes ($\rm{DC}_i$): A data collector contacting $k$ storage nodes in DSS is modeled by a $\rm{DC}$ node contacting output  node ($x^{\rm{out}}$) of $k$ live nodes in the information flow graph by the edges of capacity $\infty$ each.
\end{itemize}

\begin{figure*}[t]
 \centering
 \includegraphics[width=1\textwidth]{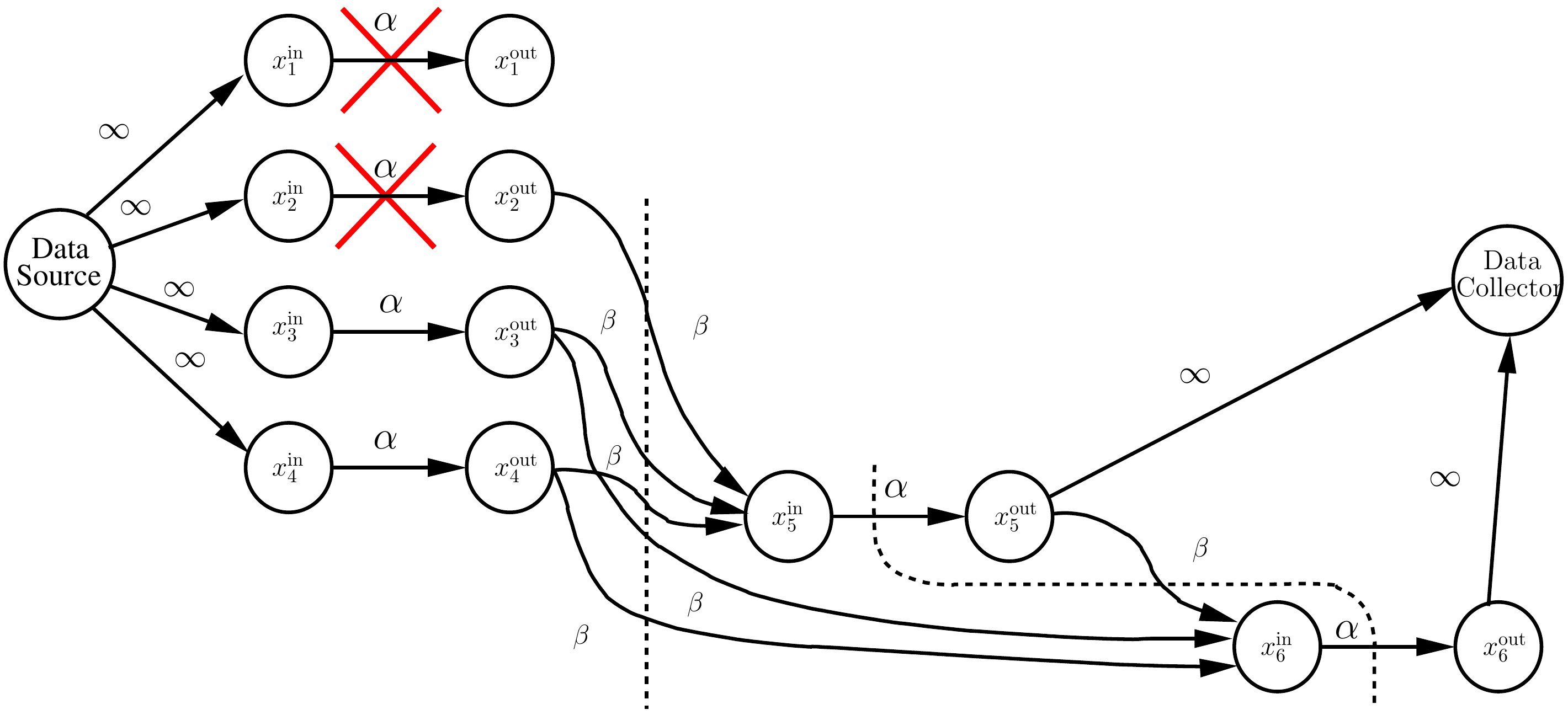}
 \caption{Information flow graph representation of DSS. In this particular example, the system has $4$ live storage nodes at a time while ensuring `any $2$ out of $4$' property. Let node $\xv_1$ fail first. A newcomer $\xv_5$ contacts $\{\xv_2, \xv_3, \xv_4\}$ during the repair of $\xv_1$. In the event of failure of $\xv_2$, nodes in $\{\xv_3, \xv_4, \xv_5\}$ send data to $\xv_6$ for node repair. A data collector contacts $2$ nodes $\{\xv_5, \xv_6\}$ out of $4$ live nodes $\{\xv_3, \xv_4, \xv_5, \xv_6\}$ to reconstruct the information stored on the system. }\label{fig:info_flow}
\end{figure*}

For a given graph $\Gc$ and data collectors $\rm{DC}_i$, the file size stored on such a DSS can be bounded using the max-flow-min cut theorem for multicasting using network coding~\cite{Ahlswede:Network00}.
\begin{lemma}[Max-flow-min-cut theorem for multicasting~\cite{Ahlswede:Network00,dimakis}]
$$\Mc \leq \min_{\Gc} \min_{\rm{DC}_i} \rm{max flow}(S \to \rm{DC}_i,\Gc),$$
where $\rm{flow}(S \to \rm{DC}_i,\Gc)$ represents the flow from the source node $S$ to data collector $\rm{DC}_i$ over the graph $\Gc$.
\end{lemma}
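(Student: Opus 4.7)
The plan is to invoke the multicast network coding theorem of Ahlswede et al.\ directly, applied to each realization of the information flow graph. The key observation is that, by the `any $k$ out of $n$' property of the $(n,k)$-DSS, \emph{every} data collector $\rm{DC}_i$ must be able to recover the original file $\fv$ of size $\Mc$, regardless of which particular sequence of failures and repairs has occurred up to the point $\rm{DC}_i$ performs its reconstruction. In information-theoretic terms, $H(\fv \mid \rm{DC}_i \text{'s observation})=0$ must hold for every choice of $\rm{DC}_i$ and every evolution of the system.

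First I would fix an arbitrary information flow graph $\Gc$ (i.e., a particular realization of node failures and repairs consistent with the DSS model) together with an arbitrary data collector $\rm{DC}_i$ in $\Gc$. The DSS operation on $\Gc$ together with the set $\{S, \rm{DC}_i\}$ defines a single-source, single-sink instance of a network coding problem, and, when considered jointly over all possible $\rm{DC}_i$ nodes in $\Gc$, a multicast problem: the source $S$ has $\Mc$ units of information that must be delivered to each $\rm{DC}_i$ through the edges of $\Gc$, where edge capacities are dictated by $\alpha$ (the storage links) and $\beta$ (the repair links), and the $\rm{DC}_i$-to-node links are of infinite capacity.

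Next I would apply the classical max-flow bound for a single sink: if the source can successfully deliver $\Mc$ symbols to a sink, then $\Mc$ cannot exceed the max-flow from $S$ to that sink, $\mathrm{maxflow}(S\to \rm{DC}_i,\Gc)$. This is just the standard cut-set upper bound, and does not require the achievability part of the Ahlswede--Cai--Li--Yeung theorem; only its converse direction (equivalently, the Ford--Fulkerson cut-set bound on information rate) is needed. Since this inequality must hold for every data collector in $\Gc$, we obtain $\Mc \leq \min_{\rm{DC}_i}\mathrm{maxflow}(S\to \rm{DC}_i,\Gc)$.

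Finally, since the coding scheme must work on \emph{every} possible realization $\Gc$ of failures and repairs (the adversary, so to speak, chooses the failure/repair pattern), the same bound holds uniformly in $\Gc$. Taking the minimum over $\Gc$ yields the claimed inequality
\[
\Mc \;\leq\; \min_{\Gc}\; \min_{\rm{DC}_i}\; \mathrm{maxflow}(S \to \rm{DC}_i,\Gc).
\]
There is no genuine obstacle here beyond being careful about quantifier order: the bound must hold simultaneously for all $\Gc$ and all $\rm{DC}_i$, because the DSS is required to support reconstruction along every evolution trajectory and from every $k$-subset of live nodes. The only modeling point worth stating explicitly in the proof is that the finite edge capacities $\alpha$ and $\beta$ in $\Gc$ correctly capture the per-node storage and per-link repair bandwidth constraints, so that the graph-theoretic max-flow coincides with the operationally meaningful information flow from $S$ to $\rm{DC}_i$.
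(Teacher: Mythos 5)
Your argument is correct and matches the reasoning behind the result as the paper uses it: the paper states this lemma without proof, citing Ahlswede et al.\ and Dimakis et al., and your cut-set converse applied to every data collector in every realization $\Gc$ of the information flow graph is exactly the standard derivation in those works. Your observation that only the converse (cut-set) direction of the multicast theorem is needed for this inequality, with achievability reserved for attaining the bound in \eqref{eq:dimakis_thm}, is accurate and consistent with how the paper deploys the lemma.
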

Therefore,  a file of size $\mathcal{M}$  can be delivered to a data collector $\rm{DC}$, only if the min-cut is at least $\mathcal{M}$. In \cite{dimakis},  Dimakis et al. consider $k$ successive node failures and evaluate the min-cut over possible graphs, and obtain the bound given by
\begin{eqnarray}
\mathcal{M} \leq \sum_{i=0}^{k-1}\min\{(d - i)\beta, \alpha\}.\label{eq:dimakis_thm}
\end{eqnarray}
This bound can be achieved by employing linear codes, linear network code in particular. The codes that attain the bound in (\ref{eq:dimakis_thm}) are known as \emph{regenerating codes }\cite{dimakis}. Given a file size $\mathcal{M}$, a trade-off between storage per node $\alpha$ and {\em repair bandwidth} $\gamma \triangleq d\beta$ can be established from \eqref{eq:dimakis_thm}. Two classes of codes that achieve two extreme points of this trade-off are known as {\em minimum storage regenerating} (MSR) codes and {\em minimum bandwidth regenerating} (MBR) codes. The former is obtained by first choosing a minimum storage per node $\alpha=\Mc /k$, and then minimizing $\gamma$ satisfying \eqref{eq:dimakis_thm}; on the other hand, the latter is obtained by first finding the minimum possible $\gamma$, and then finding the minimum $\alpha$ in~(\ref{eq:dimakis_thm}). For MSR codes, we have:
\begin{equation}
\label{eq:msr_point}
(\alpha_{\rm{msr}}, \beta_{\rm{msr}}) = \left(\frac{\mathcal{M}}{k}, \frac{\mathcal{M}}{k(d-k+1)}\right).
\end{equation}
On the other hand, MBR codes are characterized by
\begin{equation}
\label{eq:mbr_point}
(\alpha_{\rm{mbr}}, \beta_{\rm{mbr}}) = \left(\frac{2\mathcal{M}d}{k(2d - k +1)}, \frac{2\mathcal{M}}{k(2d-k+1)}\right).
\end{equation}
For a given DSS with $d\leq n-1$, it can be observed that having $d=n-1$ reduces the repair bandwidth for both the MSR and the MBR codes. Though the bound in \eqref{eq:dimakis_thm} is derived for functional repair, the bound applies to codes with exact repairs as well. The achievability of the bound for all possible and a subset of parameters has been shown at MBR and MSR points, respectively. (See related work in Section~\ref{subsec:related_work} for details.)

Note that regenerating codes are vector codes with additional parameters $d$ (number of nodes contacted during a node repair) and $\beta$ (amount of data downloaded from each contacted node during node repair process). In what follows, we use a modified notation for regenerating codes which highlights these two parameters as well. We refer to regenerating codes as an $(n,\cM, d_{\min},\alpha, \beta, d)$ regenerating code, where $\Mc$ denotes the file size. In particular, an MSR code associated with an $n$ node DSS storing $\Mc$ symbols long file is referred to as an $(n, \cM, d_{\min}=n-\frac{\cM}{\alpha}+1,\alpha, \beta, d)$ MSR code.

\subsection{Eavesdropper model}
\label{subsec:eavesdropper}

In this paper, we consider the eavesdropper model defined in \cite{SRK_globecom11}, which generalizes the eavesdropper model considered in \cite{pawar11}. In \cite{pawar11}, Pawar et al. consider a passive eavesdropper with access to the data stored on $\ell$ $(<k)$ storage nodes. The eavesdropper is assumed to know the coding scheme employed by the DSS. At  the MBR point, a newcomer downloads $\alpha_{\rm{mbr}}=\gamma_{\rm{mbr}}=d\beta_{\rm{mbr}}$ amount of data. Thus, an eavesdropper does not gain any additional information if it is allowed to access the data downloaded during repair. However, repair bandwidth is strictly greater than the per node storage at the MSR point; therefore, an eavesdropper potentially gains more information if it has access to data downloaded during node repair as well. Motivated by this, we consider an $(\ell_1,\ell_2)$ eavesdropper which can access the data stored on any $\ell_1$  nodes with $\Ec_1$ denoting the set of indices of such nodes, and, additionally, the eavesdropper observes the downloaded data for any $\ell_2$  node repairs. We use $\Ec_2$ to denote the set of indices of $\ell_2$ nodes where the eavesdropper accesses the data  downloaded  during node repair. Hence, the eavesdropper has access to $x_i^{\rm{out}},x_j^{\rm{in}},x_j^{\rm{out}}$ for $i\in\Ec_1$ and $j\in\Ec_2$. We summarize the eavesdropper model together with the definition of achievability of a secure file size in the following.
\begin{definition}[Security against an $(\ell_1,\ell_2)$ eavesdropper]
A DSS is said to achieve a secure file size of $\Mc^s$ against an $(\ell_1,\ell_2)$ eavesdropper, if, for any sets $\Ec_1$ and $\Ec_2$ of size $\ell_1$ and $\ell_2$, respectively, $I(\fv^s;\ev)=0$, where $\fv^s$ denotes the secure information of size $\Mc^s$, and $\ev$ represents the eavesdropper's observation vector given by $\ev\triangleq\{x_i^{\rm{out}},x_j^{\rm{in}},x_j^{\rm{out}}: i\in\Ec_1, j\in\Ec_2\}$.
\end{definition}
Note that this definition coincides with the $\{\ell,\ell'\}$ secure distributed storage system in~\cite{SRK_globecom11}, where $\ell=\ell_1+\ell_2$ and $\ell'=\ell_2$.


We remark that, as it will be clear from the following sections, when we consider storing a secure file size $\Mc^s$ in DSS, the remaining $\Mc-\Mc^s$ symbols can be utilized as an additional data which does not have security constraints. Yet, noting the possibility of storing this insecure data, we will refer to this uniformly distributed part as the random data, which is utilized to achieve security. Throughout the text, we use the following lemma to show that the proposed codes satisfy the secrecy constraints.
\begin{lemma}[Secrecy Lemma~\cite{Wyner:The75, SRK_globecom11}]
\label{thm:SecrecyLemma}
Consider a system with information bits $\fv^{s}$,
random bits $\rv$ (independent of $\fv^{s}$),
and an eavesdropper with observations
given by $\ev$. If $H(\ev)\leq H(\rv)$ and
$H(\rv|\fv^{s},\ev)=0$, then the mutual information leakage to eavesdropper is zero, i.e., $I(\fv^{s};\ev)=0$.
\end{lemma}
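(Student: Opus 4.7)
The plan is to show $I(\fv^{s};\ev)=0$ by sandwiching the conditional entropy $H(\ev|\fv^{s})$ between $H(\ev)$ and itself via the auxiliary quantity $I(\rv;\ev|\fv^{s})$. The intuition is that the two hypotheses together force the random bits $\rv$ to ``use up'' all of the entropy of $\ev$: condition $H(\rv|\fv^{s},\ev)=0$ says $\rv$ is essentially determined by the eavesdropper's view together with the secret, and $H(\ev)\le H(\rv)$ prevents the eavesdropper from having more entropy than $\rv$ to begin with. These together will leave no entropy ``left over'' to correlate $\ev$ with $\fv^{s}$.

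First I would write the target as $I(\fv^{s};\ev)=H(\ev)-H(\ev|\fv^{s})$ and observe that the inequality $H(\ev|\fv^{s})\le H(\ev)$ always holds; hence it suffices to prove the reverse inequality $H(\ev|\fv^{s})\ge H(\ev)$. Next I would introduce the conditional mutual information $I(\rv;\ev|\fv^{s})$ and expand it in two ways. Expanding as $I(\rv;\ev|\fv^{s})=H(\rv|\fv^{s})-H(\rv|\fv^{s},\ev)$ and invoking (i) independence of $\rv$ and $\fv^{s}$ to replace $H(\rv|\fv^{s})$ by $H(\rv)$, and (ii) the hypothesis $H(\rv|\fv^{s},\ev)=0$, yields $I(\rv;\ev|\fv^{s})=H(\rv)$. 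Expanding as $I(\rv;\ev|\fv^{s})=H(\ev|\fv^{s})-H(\ev|\rv,\fv^{s})\le H(\ev|\fv^{s})$ yields the upper bound $I(\rv;\ev|\fv^{s})\le H(\ev|\fv^{s})$.

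Combining these gives the chain
\begin{equation*}
H(\rv)=I(\rv;\ev|\fv^{s})\le H(\ev|\fv^{s})\le H(\ev)\le H(\rv),
\end{equation*}
where the last inequality is the remaining hypothesis. All terms must then be equal, forcing $H(\ev|\fv^{s})=H(\ev)$, and hence $I(\fv^{s};\ev)=0$, which is the claim.

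There is no real obstacle here; the only subtlety is remembering to invoke the independence of $\rv$ and $\fv^{s}$ in the step $H(\rv|\fv^{s})=H(\rv)$, since that is what allows the hypothesis $H(\ev)\le H(\rv)$ (an unconditional statement about $\rv$) to close the chain against the conditional quantity $I(\rv;\ev|\fv^{s})$. Everything else is basic entropy manipulation.
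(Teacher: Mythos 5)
Your proposal is correct and is essentially the paper's own argument in a slightly rearranged form: the paper bounds $I(\fv^{s};\ev)\leq H(\ev)-I(\ev;\rv|\fv^{s})\leq H(\rv)-I(\ev;\rv|\fv^{s})=H(\rv|\fv^{s},\ev)=0$, using exactly the same ingredients (the two expansions of $I(\rv;\ev|\fv^{s})$, independence of $\rv$ and $\fv^{s}$, non-negativity of $H(\ev|\fv^{s},\rv)$, and the two hypotheses) that you assemble into your sandwich $H(\rv)=I(\rv;\ev|\fv^{s})\leq H(\ev|\fv^{s})\leq H(\ev)\leq H(\rv)$. No gap; the difference is purely one of presentation.
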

\begin{proof}
See Appendix~\ref{app:SecrecyLemma}.
\end{proof}


\subsection{Maximum rank distance codes}
\label{subsec:gabidulin}

The coding schemes presented in this paper involve a pre-coding step, where the file $\fv$ is encoded using a Gabidulin code~\cite{Gab85}. Gabidulin codes are an example of optimal rank-metric codes, called maximum rank distance (MRD) codes~\cite{Gab85,Rot91}. Here, we give a brief description of rank-metric codes and highlight various properties of Gabidulin codes that are useful to understand our code constructions.

Let $\F_{q^m}$ be en extension field of $\F_q$. An element $\nu \in \F_{q^m}$ can be represented as the vector $\nuv$$=(\nu_1,\ldots, \nu_m)^T \in \F_q^m$, such that $\nu=\sum_{i=1}^{m}\nu_ib_i$, for a fixed basis $\{b_1,\ldots, b_m\}$ of the extension field $\F_{q^m}$. Using this, a vector $\bf{v}$ $=(v_1, \ldots, v_N) \in \F_{q^m}^N$ can be represented by an $m \times  N$ matrix $\Vm=[v_{i,j}]$ over $\F_q$, which is obtained by replacing each $v_i$ of $\vv$ by its vector representation $(v_{i,1},\ldots,v_{i,m})^T$.

\begin{definition}
The \emph{rank} of a vector $\vv \in \F_{q^m}^N$, denoted by $\rm{rank}(\vv)$ is defined as the rank of its $m\times N$ matrix representation $\Vm$ (over $\F_q$). Similarly, for two vectors $\vv,\uv \in \F_{q^m}^N$, the \emph{rank distance} is defined by
$$d_R(\vv,\uv)=\rm{rank}(\Vm-\Um).$$
\end{definition}
An $[N,K,D]_{q^m}$ \textmd{rank-metric code} $\cC \subseteq \F_{q^m}^N$ is a linear block code over $\F_{q^m}$ of length $N$ with dimension $K$ and minimum rank distance $D$. A rank-metric code that attains the Singleton bound $D \leq N-K+1$ in rank-metric is called \emph{maximum rank distance} (MRD) code. For $m \geq  N$, a family of MRD codes, called Gabidulin codes, was presented by Gabidulin~\cite{Gab85}. Similar to Reed-Solomon codes, Gabidulin codes can be obtained by evaluation of polynomials, however, for Gabidulin codes a special family of polynomials, called \emph{linearized polynomials}, is used:

\begin{definition}
A linearized polynomial $\mathpzc{f}(y)$ over $\F_{q^m}$ of $q$-degree $t$ has the form
$$\mathpzc{f}(y)=\sum_{i=0}^{t}a_iy^{q^i},$$
where $a_i \in \F_{q^m}$, and $a_{t} \neq 0$.
\end{definition}

Encoding a message $(f_1, f_2,\ldots, f_{K})$ to a codeword of an $[N, K, D = N - K +1]$ Gabidulin code over $\mathbb{F}_{q^m}$ ($m \geq N$) consists of two steps~\cite{Gab85}:
\begin{itemize}
\item \textbf{Step~1:} Construct a data polynomial $\mathpzc{f}(y) = \sum_{i = 1}^{K}f_{i}y^{q^{i-1}}$ over $\mathbb{F}_{q^m}$.
\item \textbf{Step~2:} Evaluate $\mathpzc{f}(y)$ at $\{y_1, y_2,\ldots, y_N\}~\subset \mathbb{F}_{q^m}$, $N$ linearly independent (over $\F_{q}$) points from $\F_{q^m}$, to obtain a codeword $\cv= (\mathpzc{f}(y_1),\ldots, \mathpzc{f}(y_N))\in \F_{q^m}^N$.
\end{itemize}
\begin{remark}
\label{rem:linearized_poly}
Note that evaluation of a linearized polynomial is an $\F_{q}$-linear transformation from $\F_{q^m}$ to itself~\cite{MWSl78}. In other words, for any $a, b \in \F_q$ and $\nu_1, \nu_2 \in \F_{q^m}$, we have
$$\mathpzc{f}(a\nu_1 + b\nu_2)= a\mathpzc{f}(\nu_1)+ b\mathpzc{f}(\nu_2).$$
\end{remark}

\begin{remark}
\label{rem:gabidulin_rec}
Given evaluations of $\mathpzc{f}(\cdot)$ at any $K$ linearly independent (over $\mathbb{F}_q$) points in $\mathbb{F}_{q^m}$, say $(z_1,\ldots, z_{K})$, one can get evaluations of $\mathpzc{f}(\cdot)$ at $q^{K}$ points spanned by $\mathbb{F}_q$-linear combinations of $(z_1,\ldots, z_{K})$ using $\F_q$-linearity (a.k.a. linearized property) of $\mathpzc{f}(\cdot)$. (See Remark~\ref{rem:linearized_poly}.) This allows one to recover polynomial $\mathpzc{f}(\cdot)$ of degree $q^{K - 1}$, and therefore, to reconstruct data vector $(f_1,\ldots, f_{K})$ by performing polynomial interpolation. Therefore, an $[N, K, D]$ Gabidulin code (an MRD code in general) is an MDS code and thus can correct any  $D-1=N-K$ erasures (symbols erasures in the codeword, when it is considered as a vector over $\F_{q^m}$ or column erasures when it is considered as a matrix over $\F_q$). We refer to these erasures as \emph{rank erasures} to distinguish from \emph{node erasures} (failures) in the sequel.
\end{remark}

As described in Remark~\ref{rem:gabidulin_rec}, $\F_q$-linearity of the polynomials used in encoding process of Gabidulin codes allows for the recovery of the original message from a sufficient number of evaluations of the polynomial at linearly independent points. Note that these evaluations points may potentially be different from the points used during the encoding process. This makes Gabidulin codes an attractive option to utilize for pre-coding in the coding schemes for DSS presented in this paper, where additional linear operations need to be performed on a codeword from Gabidulin code to meet certain system requirements, e.g., small locality. The importance of this property of Gabidulin codes becomes clear when we present analyses of our different coding schemes in the remaining sections of this paper. Towards this, we present the following result for linearized polynomials that we utilize later to study various code constructions proposed in this paper.

\begin{lemma}
\label{lem:linearized_property}
Let $\pv = (\mathpzc{f}(y_1), \mathpzc{f}(y_2),\ldots, \mathpzc{f}(y_{t\alpha}))\in \F_{q^m}^{t\alpha}$ be the vector containing evaluation of a linearized polynomial $\mathpzc{f}(\cdot)$ over $\F_{q^m}$ at  $t\alpha$ linearly independent (over $\F_{q}$) points $\{y_1, y_2,\ldots, y_{t\alpha}\}$, and let $\hat{\cv} = \pv\hat{G}$ be the codeword of an $[\hat{n}, t\alpha, d_{\min} = \hat{n} - t + 1, \alpha]_q$ MDS array code over $\F_q$ (with $\hat{G}\in \F_q^{t\alpha\times\hat{n}\alpha }$ as its generator matrix) corresponding to message vector $\pv$. Then, any $s$ (out of $\hat{n}$) vector symbols of $\hat{\cv} = (\hat{\cv}_1,\ldots, \hat{\cv}_{\hat{n}})\in (\F_{q^m}^{\alpha})^{\hat{n}}$ corresponds to evaluations of $\mathpzc{f}(\cdot)$ at $\min\{s, t\}\alpha$ linearly independent (over $\F_q$) points  from the subspace spanned by $\{y_1, y_2,\ldots, y_{t\alpha}\} \subset \F_{q^m}$. 
\end{lemma}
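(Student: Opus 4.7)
The plan is to leverage the $\F_q$-linearity of the linearized polynomial $\mathpzc{f}$ to recast every scalar entry of $\hat{\cv}$ as $\mathpzc{f}$ evaluated at a point of the $\F_q$-subspace $V=\mathrm{span}_{\F_q}\{y_1,\ldots,y_{t\alpha}\}\subseteq\F_{q^m}$, and then to translate the problem of counting $\F_q$-linearly independent evaluation points coming from any $s$ blocks into the language of $\F_q$-column-ranks of the generator matrix $\hat G$, where the MDS array property of $\hat G$ can be brought to bear directly.

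First I would unpack $\hat{\cv}=\pv\hat G$ coordinate by coordinate: fixing a block index $i\in[\hat n]$ and a within-block index $j\in[\alpha]$, the corresponding entry of $\hat{\cv}$ has the form $\sum_{l=1}^{t\alpha} g^{(i,j)}_l\,\mathpzc{f}(y_l)$ with coefficients $g^{(i,j)}_l\in\F_q$ read off $\hat G$. By Remark~\ref{rem:linearized_poly}, I can pull the sum inside $\mathpzc{f}$ and rewrite this entry as $\mathpzc{f}\!\left(\sum_l g^{(i,j)}_l y_l\right)$, so every entry of $\hat{\cv}$ is an evaluation of $\mathpzc{f}$ at some point of $V$. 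Moreover, since $\{y_1,\ldots,y_{t\alpha}\}$ is $\F_q$-linearly independent, the coordinate map $\Phi:\F_q^{t\alpha}\to V$, $(a_l)\mapsto\sum_l a_l y_l$, is an $\F_q$-linear isomorphism. Consequently, for any collection of blocks, the $\F_q$-dimension of the set of evaluation points it produces equals the $\F_q$-rank of the corresponding coefficient vectors inside $\hat G$.

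Next I would pick any subset $\mathcal{I}\subseteq[\hat n]$ of size $s$ and let $\hat G_{\mathcal{I}}$ denote the $t\alpha\times s\alpha$ submatrix of $\hat G$ obtained by selecting the columns belonging to those $s$ blocks. The remaining task is to show $\mathrm{rank}_{\F_q}(\hat G_{\mathcal{I}})=\min\{s,t\}\alpha$, after which $\Phi$ delivers the claim. For $s\geq t$, picking any $t$-subset $\mathcal{I}'\subseteq\mathcal{I}$, the MDS property of the $[\hat n,t\alpha,\hat n-t+1,\alpha]_q$ array code is exactly the statement that the $t\alpha\times t\alpha$ matrix $\hat G_{\mathcal{I}'}$ is invertible over $\F_q$ (any $t$ blocks determine $\pv$), giving the lower bound $t\alpha$; and $t\alpha$ is also an upper bound since the ambient space is $\F_q^{t\alpha}$. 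For $s<t$, I extend $\mathcal{I}$ to some $\tilde{\mathcal{I}}\supseteq\mathcal{I}$ of size $t$; the full-rank condition $\mathrm{rank}_{\F_q}(\hat G_{\tilde{\mathcal{I}}})=t\alpha$ together with the fact that the extra $t-s$ blocks can contribute at most $(t-s)\alpha$ independent columns forces $\mathrm{rank}_{\F_q}(\hat G_{\mathcal{I}})\geq s\alpha$, matched trivially by the upper bound $s\alpha$.

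The subtle point I expect to be the main obstacle is keeping straight which field carries which notion of linear independence: the codeword $\hat{\cv}$ lives in $\F_{q^m}$, but $\hat G$ is over $\F_q$, and the independence relevant for eventually inverting $\mathpzc{f}$ is $\F_q$-independence of evaluation points. Making this bookkeeping transparent is exactly the step at which the $\F_q$-linearity of linearized polynomials becomes indispensable; it is precisely this property that bridges the inner Gabidulin pre-coding with any outer $\F_q$-linear MDS array code used downstream in the DSS constructions of the following sections.
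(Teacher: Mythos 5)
Your proposal is correct and follows essentially the same route as the paper's proof: use the $\F_q$-linearity of the linearized polynomial to write each coordinate of $\hat{\cv}$ as $\mathpzc{f}$ evaluated at an $\F_q$-combination of the $y_l$'s, and then reduce the claim to the statement that the $\F_q$-rank of the selected $s\alpha$ columns of $\hat{G}$ is $\min\{s,t\}\alpha$ by the MDS array property. The only difference is that you spell out the rank computation (the $s\geq t$ and $s<t$ cases) which the paper simply asserts, so no gap and no genuinely different method.
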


\begin{proof}
See Appendix~\ref{app:linearized_property}.
\end{proof}

The efficient decoding algorithms for Gabidulin codes can be found e.g. in~\cite{GaPi08}.

\subsection{Locally repairable codes}
\label{subsec:lrc}

In this subsection, we present a formal definition of locally repairable codes. First, we define the notion of a punctured vector code. Given an $(n, M, d_{\min}, \alpha)_{q}$ vector code $\cC$ ($[n, \Mc = \log_q{M}, d_{\min}, \alpha]_q$ in case of linear code) and a set $\Sc \subset [n]$, we use $\cC|_{\Sc}$ to denote the code obtained by puncturing $\cC$ on $[n]\backslash\Sc$. In other words, codewords of $\cC|_{\Sc}$ are obtained by keeping those vector symbols in $\cv  = (\cv_1,\ldots, \cv_{n}) \in \cC$ which have their indices in set $\Sc$.

Next, we present the definition of vector LRCs, which generalizes the definition of \emph{scalar} LRCs ($\alpha = 1$) presented in~\cite{PKLK12}. (See Fig.~\ref{fig:lrc_def}.)

\begin{definition}
\label{def:vectorLRC}
An $(n, M, d_{\min},\alpha)_{q}$ vector code $\cC$ is said to have $(r, \delta)$ \emph{locality} if for each vector symbol $\cv_i \in \F_{q}^{\alpha}$, $i \in [n]$, of a codeword $\cv = (\cv_1,\ldots, \cv_n)\in \cC$,
there exists a set of indices $\Gamma(i)$ such that
\begin{itemize}
\item $i\in\Gamma(i)$
\item $|\Gamma(i)|\leq r+\delta - 1$
\item Minimum distance of $\cC|_{\Gamma(i)}$ is at least $\delta$.
\end{itemize}
An $(n, M, d_{\min}, \alpha)_q$ vector code $\cC$ with $(r, \delta)$ locality is referred to as an $(r, \delta, \alpha)$ locally repairable code (LRC).
\end{definition}

\begin{figure*}[t]
 \centering
 \includegraphics[width=0.9\textwidth]{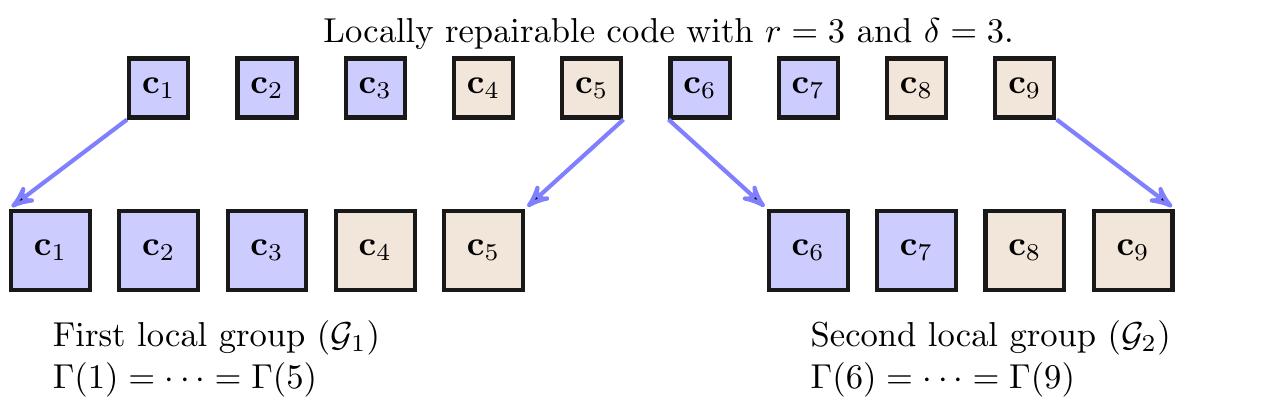}
 \caption{Illustration of a $9$-length $(r = 3, \delta = 3, \alpha)$ locally repairable code. We have $\cv_i \in \F^{\alpha}$, for each $i \in [9]$. The code has two disjoint local groups: $\{\cv_1,\ldots, \cv_5\}$ and $\{\cv_6,\ldots, \cv_9\}$. Brown nodes inside each local group highlight redundancy within the local group.}\label{fig:lrc_def}
\end{figure*}

\begin{remark}
$(r, \delta, \alpha = 1)$ LRCs are named as $(r,\delta)$ scalar LRCs.
\end{remark}

\begin{remark}
\label{rem:lrc_def}
The last requirement in Definition~\ref{def:vectorLRC} implies that each element $j \in \Gamma(i)$ can be written as a function of any set of $r$ elements in $\Gamma(i)\backslash \{j\}$. Moreover, the last two requirements in the definition ensure that $H(\Gamma(i))\leq r\alpha$.
\end{remark}

Distinct sets in $\{\Gamma(i)\}_{i \in [n]}$ are called local groups. Note that two nodes $i_1 \neq i_2$ may have $\Gamma(i_1) = \Gamma(i_2)$. We use $g$ to denote the number of distinct local groups in $\cC$.  (See Fig.~\ref{fig:lrc_def}.) The definition of LRCs presented in this paper generalizes that given in~\cite{DimDim12}, which is restricted to $\delta=2$. In order to store  a file $\fv$ on a DSS using an LRC, $\fv$ is first encoded to a codeword of the LRC. Each symbol of the codeword is then stored on a distinct node. In particular, we have $\xv_i = \cv_i$, where $\xv_i$ denotes the content of $i$th node. It follows from Remark~\ref{rem:lrc_def} that, for an $(r, \delta, \alpha)$ LRC, $d = r$ is possible by using $r$ nodes in $\Gamma(i)\backslash\{i\}$ to repair node $i$.

Note that Definition~\ref{def:vectorLRC} requires that each symbol in codeword $\cv$ is part of a local group, which ensures that every encoded symbol can be repaired in a local manner. This property of a code is termed as {\em all-symbol locality} in the literature. When locality is ensured only for a subset of symbols in $\cv$ which is sufficient to recover the original data, e.g., systematic symbols in a systematic code, the code is said to have {\em information locality}. In this paper, we focus on codes with all-symbol locality.

In \cite{PKLK12}, Prakash et al. present the following upper bound on the minimum distance of an $(r,\delta)$ scalar LRC:
\begin{equation}
\label{scalarUpperBound}
d_{\min}\leq n-\cM+1-\left(\left\lceil\frac{\cM}{r}\right\rceil-1\right)(\delta-1).
\end{equation}

It was established in~\cite{Gopalan12} that Pyramid codes presented in~\cite{pyramid} attain this bound. Note that Pyramid codes have information locality. An explicit construction of optimal scalar LRCs with all-symbols locality is known only for the case $n=\left\lceil\frac{\Mc}{r}\right\rceil(r+\delta-1)$~\cite{PKLK12, HLM2007}. Towards optimal scalar LRCs for broader range of parameters, \cite{PKLK12} establishes the existence of scalar codes with all-symbols locality for the setting when $(r+\delta-1)|n$, assuming the field size $|\F|>\cM n^{\cM}$. In this paper, we provide an explicit construction of optimal scalar LRCs  with all-symbols locality for even broader range of system parameters by relaxing the restriction of $(r+\delta-1)|n$.

Papailiopoulos et al.~\cite{DimDim12} present the following upper bound on the minimum distance of an $(r,\delta = 2,\alpha)$ LRC.
\begin{equation}
\label{eq:dimitrisBound}
d_{\min}\leq n-\left\lceil\frac{\cM}{\alpha}\right\rceil-\left\lceil\frac{\cM}{r\alpha}\right\rceil+2
\end{equation}
In \cite{DimDim12}, the authors also propose a construction of LRCs with $\delta = 2$ that achieve the bound in \eqref{eq:dimitrisBound}. In this paper, we generalize this bound for any $\delta \geq 2$ and present $(r, \delta,\alpha)$ LRCs with multiple local parity nodes that are optimal with respect to the generalized bound.

Here, we highlight a couple of advantages of LRCs with multiple parity nodes per local group. These codes may allow system administrators to deal with variability in response times in a large scale storage system. In a large scale system, storage nodes exhibit variable response time due to multiple reasons, including resource sharing, software updates, maintenance activities, and energy maintenance (turning on a node in power saving mode introduces latency in its response time)~\cite{tail_scale}. In particular, some of the contacted nodes can take longer than expected time to respond with data during node repair (or direct data access from systematic nodes). In order to mitigate this effect, an administrator can send data request to more than required number of nodes in a local group for node repair, and complete the repair process as soon as the required number of nodes from the set of contacted nodes have responded with data. Similar ideas to reduce response time for file reconstruction with MDS codes have been proposed in~\cite{tail_scale, GLS13}. Besides this, LRCs having multiple local parities per local group exhibit a stronger resilience to eavesdropping attacks. In particular, as detailed in Section~\ref{sec:secrecy_local}, both scalar LRCs and vector LRCs with single local parity have poor secrecy rate in the presence of a passive eavesdropper.


\section{Secrecy in repair bandwidth efficient DSS}
\label{sec:secrecy_BW}

Considering that the eavesdropped nodes may not carry secure information to the data collectors in the bound given by $\eqref{eq:dimakis_thm}$, \cite{pawar11} establishes the following upper bound on the secure file size when the eavesdropper observes the content of $\ell$ nodes.
\begin{equation}
\label{eq:pawar_bound}
\mathcal{M}^{s} \leq \sum_{i = \ell+1}^{k}\min\{(d-i+1)\beta,\alpha\}.
\end{equation}
Pawar et al. show that this bound is tight in the {\em bandwidth limited regime}, $\gamma \leq \Gamma = (n-1)\alpha$ with $d = n-1$, by presenting a coding scheme that is secure against the passive eavesdropper observing $\ell$ storage nodes. This scheme essentially corresponds to the MBR point (see (\ref{eq:mbr_point})) where a newcomer node contacts all the remaining nodes. \cite{SRK_globecom11} proposes product matrix based secure coding schemes achieving this bound for any $\ell$ at the MBR point. However, the coding scheme proposed in \cite{SRK_globecom11} can only store a secure file size of $(k-\ell_1-\ell_2)(\alpha - \ell_2\beta)$ at the MSR point. At the MSR point, the bound in (\ref{eq:pawar_bound}) reduces to
$$\Mc^{s}\leq(k-\ell_1-\ell_2)\alpha.$$
As a result, \cite{SRK_globecom11} concludes that the proposed scheme based on product matrix codes achieves secrecy capacity only when $\ell_2=0$. This corresponds to the scenario for which the eavesdroppers are not allowed to observe downloaded packets. This leaves the following questions open:
\begin{itemize}
\item Can bound (\ref{eq:pawar_bound}) be further tightened for MSR point?
\item Is it possible to obtain secure coding schemes at the MSR point that have better rate and/or secrecy capacity than that of the scheme proposed in \cite{SRK_globecom11}?
\end{itemize}
In this section, we answer both questions affirmatively. We first derive a generic upper bound on the amount of data that can be securely stored on DSS for bandwidth efficient repairable codes at the MSR point, which also applies to bandwidth efficient exact-repairable code. Next, we prove a result specific to exact-repairable code for $d=n-1$ which allows us to provide an upper bound on the secrecy capacity against an $(\ell_1,\ell_2)$-eavesdropper at the MSR point. This bound is tighter than a bound that can be obtained from the generic bound we provide. We subsequently combine the classical secret sharing scheme due to \cite{shamir79} with an existing class of exact-repairable MSR codes to securely store data in the presence of an $(\ell_1,\ell_2)$-eavesdropper. We show that this approach gives a higher rate coding scheme compared to that of \cite{SRK_globecom11} and achieves the secrecy capacity for any $\ell_1$  with $\Ec_2 \subset [k]$ and $\ell_2\leq 2$.


\subsection{Improved bound on secrecy capacity at the MSR point}
\label{sec:BW_improved}

In order to get the desired bound, we rely on the standard approach of computing a cut in the information flow graph associated with DSS. We consider a particular pattern of eavesdropped nodes, where eavesdropper observes content put on $\ell_1$ nodes and data downloaded during first $\ell_2$ node failures that do not involve already eavesdropped $\ell_1$ nodes. Using the max-flow min-cut theorem, this case translates into an upper bound on the secrecy capacity for an $(n,k)$-DSS operating at the MSR point (see (\ref{eq:msr_point})).


\begin{figure*}[t!]
 \centering
 \includegraphics[width= 0.8\textwidth]{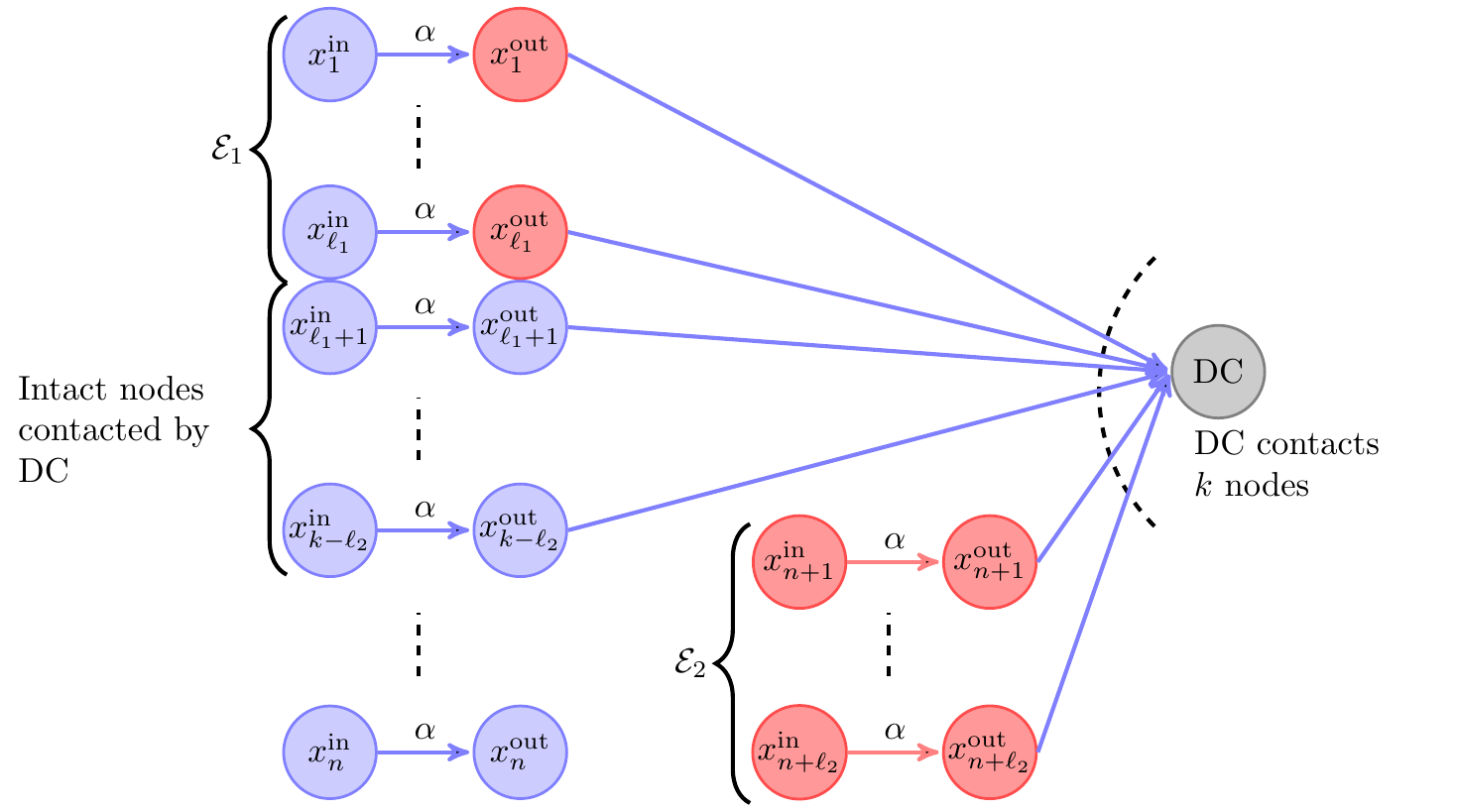}
 \caption{An information flow graph associated with an $(n,k)$-DSS in the presence of an $(\ell_1, \ell_2)$-eavesdropper. Red colored nodes are observed by the eavesdropper. Note that we have $\xv_i = (x^{\rm in}_i, x^{\rm out}_i)$ in an information flow graph representation for a DSS. In particular, we have $\Ec_1 = \{\xv_1,\ldots, \xv_{\ell_1}\}$ and $\Ec_2 = \{\xv_{n+1}, \ldots, \xv_{n+\ell_2}\}$. Here, we assume that $\xv_{k-\ell_2+1}, \ldots, \xv_{k}$ fail subsequently in the order specified by their indices and are repaired by introducing nodes $\xv_{n+1},\ldots, \xv_{n+\ell_2}$ respectively. Data collector (DC) contacts $\xv_1,\ldots, \xv_{k-\ell_2}, \xv_{n+1},\ldots, \xv_{n+\ell_2}$ to reconstruct the original data stored on the DSS.}\label{fig:upperbound1}%
\end{figure*}

\begin{theorem}
\label{thm:sec_cap}
For an $(n,k)$-DSS employing an MSR code, we have
\begin{align}
\label{eq:sec_cap_gen0}
\Mc^{s} \leq \sum^{k-\ell_2}_{i = \ell_1+1} \left(\alpha - I(\xv_{i}; \dv_{i,n+1},\ldots, \dv_{i,n+\ell_2}) \right),
\end{align}
where $\xv_{i}$ and $\dv_{i,j}$ denote the data stored on node $i$ and the data downloaded from node $i$ to perform node repair at node $j$, respectively.
For a linear coding scheme, the bound in \eqref{eq:sec_cap_gen0} reduces to the following.
\begin{align}
\label{eq:sec_cap_gen}
\Mc^{s} \leq \sum^{k-\ell_2}_{i=\ell_1+1}\left(\alpha - \dim \left(\sum_{j=1}^{\ell_2} \Dc_{i,n+j}\right)\right).
\end{align}
Here, $\Dc_{i, j}$ denote the subspace spanned by the encoding vectors associated with symbols transferred from node $i$ to node $j$ during node repair.
\end{theorem}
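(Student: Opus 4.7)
The plan is to construct a specific cut in the information flow graph of the form shown in Fig.~\ref{fig:upperbound1} and then exploit the interplay between the secrecy and reconstructibility constraints. Fix the eavesdropper's read set as $\Ec_1=\{\xv_1,\dots,\xv_{\ell_1}\}$, let nodes $\xv_{k-\ell_2+1},\dots,\xv_k$ fail in sequence and be replaced by newcomers $\xv_{n+1},\dots,\xv_{n+\ell_2}$ whose downloads are observed by the eavesdropper, so $\Ec_2=\{\xv_{n+1},\dots,\xv_{n+\ell_2}\}$. Choose a data collector DC that contacts the $k$ nodes indexed by $\Sc\triangleq\{1,\dots,k-\ell_2\}\cup\{n+1,\dots,n+\ell_2\}$, which forms a legitimate reconstruction set by the `any $k$ out of $n$' property. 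Since the code must defend against the worst-case eavesdropping pattern of parameters $(\ell_1,\ell_2)$, this particular pattern yields a necessary upper bound on $\Mc^s$.

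Next, combine the secrecy condition $I(\fv^s;\ev)=0$ with the reconstructibility $H(\fv^s\mid\xv_\Sc)=0$ through the standard chain
$$\Mc^s=H(\fv^s)=H(\fv^s\mid\ev)=I(\fv^s;\xv_\Sc\mid\ev)\le H(\xv_\Sc\mid\ev),$$
reducing the problem to upper-bounding $H(\xv_\Sc\mid\ev)$. The key observation --- which is precisely what enables an improvement over the earlier bound~\eqref{eq:pawar_bound} --- is that each newcomer's content $\xv_{n+j}$ is a deterministic function of the download $\dv_{n+j}$, and the latter is fully contained in $\ev$; hence the $\xv_{n+j}$'s contribute no additional entropy given $\ev$. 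Similarly, $\xv_i$ for $i\in\Ec_1$ is already in $\ev$. Thus $H(\xv_\Sc\mid\ev)=H(\xv_{\ell_1+1},\dots,\xv_{k-\ell_2}\mid\ev)$.

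Apply the chain rule, and then use that conditioning reduces entropy by retaining only the portion of $\ev$ that carries information about $\xv_i$ directly, namely the symbols $\{\dv_{i,n+1},\dots,\dv_{i,n+\ell_2}\}$ that node $i$ transmits to each eavesdropped newcomer:
$$H(\xv_\Sc\mid\ev)\le\sum_{i=\ell_1+1}^{k-\ell_2}H\bigl(\xv_i\mid\dv_{i,n+1},\dots,\dv_{i,n+\ell_2}\bigr)\le\sum_{i=\ell_1+1}^{k-\ell_2}\bigl(\alpha-I(\xv_i;\dv_{i,n+1},\dots,\dv_{i,n+\ell_2})\bigr),$$
where each summand on the right equals $H(\xv_i)-I(\xv_i;\dv_{i,n+1},\dots,\dv_{i,n+\ell_2})$ and we use $H(\xv_i)\le\alpha$ since each node stores $\alpha$ symbols over $\F$; this establishes~\eqref{eq:sec_cap_gen0}. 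For a linear code, each $\dv_{i,n+j}$ is a linear image of $\xv_i$ whose image is the subspace $\Dc_{i,n+j}$; assuming $\xv_i$ is uniformly distributed (which is without loss of generality for the bound), the joint mutual information equals the rank of the stacked linear map, which is exactly $\dim\bigl(\sum_{j=1}^{\ell_2}\Dc_{i,n+j}\bigr)$, yielding~\eqref{eq:sec_cap_gen}.

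The most delicate step is the chain-rule/monotonicity move above: one could worry that dependencies among different parts of $\ev$ (e.g., downloads into $\xv_{n+j}$ coming from nodes other than $i$, or the stored content at nodes in $\Ec_1$) constrain $\xv_i$ in ways not captured by looking only at $\{\dv_{i,n+j}\}_j$. This is handled automatically by the inequality $H(\xv_i\mid\ev)\le H(\xv_i\mid\dv_{i,n+1},\dots,\dv_{i,n+\ell_2})$, since dropping conditioning can only increase entropy; hence the resulting bound is valid even if not tight in general. The improvement over~\eqref{eq:pawar_bound} comes exactly from the $I(\xv_i;\dv_{i,n+1},\dots,\dv_{i,n+\ell_2})$ penalty terms, which quantify the information about $\xv_i$ that leaks through the eavesdropped repair downloads and which the earlier bound ignored.
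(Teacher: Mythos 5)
Your proposal is correct and follows essentially the same route as the paper's proof: the same choice of eavesdropped/repaired node pattern and data collector, the same secrecy-plus-reconstruction chain leading to $\sum_{i}H(\xv_i\mid \dv_{i,n+1},\ldots,\dv_{i,n+\ell_2})$, and the same specialization to $\dim\bigl(\sum_{j}\Dc_{i,n+j}\bigr)$ in the linear case. The only cosmetic differences are that you collapse the paper's difference-of-conditional-entropies step into $H(\fv^s\mid\ev)=I(\fv^s;\xv_\Sc\mid\ev)$ and use $H(\xv_i)\le\alpha$ where the paper implicitly uses equality at the MSR point; neither affects validity.
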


\begin{proof}
Consider Fig.~\ref{fig:upperbound1}, which describes a particular case that may arise during the lifespan of a DSS. Here, $\xv_1, \xv_2,\ldots, \xv_n$ represent the original $n$ storage nodes in DSS as defined in Sec.~\ref{sec:sys_model}. Assume that nodes $\xv_{k-\ell_2+1}, \ldots, \xv_{k}$ subsequently fail in the order specified by their indices. These $\ell_2$ failures are repaired by introducing nodes $\xv_{n+1},\ldots, \xv_{n+\ell_2}$ in the system following a node repair process associated with the coding scheme employed by the DSS. Consider $\Ec_1 = \{\xv_1,\ldots, \xv_{\ell_1}\}$ as the set of $\ell_1$ nodes where eavesdropper observes the stored content, and let $\Ec_2 = \{\xv_{n+1},\ldots, \xv_{n+\ell_2}\}$ be the set of nodes which are exposed to the eavesdropper during their node repair, allowing eavesdropper to have access to all the data downloaded during node repair of set $\Ec_2$.
Let $\Rc$ denote the set of  $k -(\ell_1+\ell_2)$ remaining original nodes $\{\xv_{\ell_1+1},\ldots, \xv_{k-\ell_2}\}$, which are not observed by the eavesdropper directly, and information stored on these nodes may leak to eavesdroppers only when these nodes participate in node repair.
Assume that a data collector contacts a set of $k$ nodes given by $\Kc = \Ec_1 \cup \Ec_2 \cup \Rc$ in order to reconstruct the original data.  For a file $\fv^s$ to be securely stored on the DSS, we have
{\allowdisplaybreaks
\begin{align}
\Mc^{s} = H(\fv^s) &= H(\fv^s|\xv_{\Ec_1}, \dv_{\Ec_2}) \label{eq:sec} \\
&= H(\fv^s|\xv_{\Ec_1}, \dv_{\Ec_2}) - H(\fv^s|\xv_{\Ec_1}, \xv_{\Ec_2}, \xv_{\Rc}) \label{eq:rec} \\
&\leq  H(\fv^s|\xv_{\Ec_1}, \dv_{\Ec_2}) - H(\fv^s|\xv_{\Ec_1}, \dv_{\Ec_2}, \xv_{\Rc}) \n \\
& =  I(\fv^s; \xv_{\Rc}|\xv_{\Ec_{1}}, \dv_{\Ec_{2}}) \n \\
&\leq H(\xv_{\Rc}|\xv_{\Ec_1}, \dv_{\Ec_2}) \n \\
& \leq  H(\xv_{\Rc}|\dv_{\Ec_2}) \n \\
&= \sum^{k-\ell_2}_{i = \ell_1+1} H(\xv_{i}|\xv_{\ell_1+1},\ldots, \xv_{i-1}, \dv_{\Ec_2}) \n \\
&\leq \sum^{k-\ell_2}_{i = \ell_1+1} H(\xv_{i}|\dv_{i,n+1},\ldots, \dv_{i,n+\ell_2}) \n \\
& = \sum^{k-\ell_2}_{i = \ell_1+1} \left(\alpha - I(\xv_{i}; \dv_{i,n+1},\ldots, \dv_{i,n+\ell_2}) \right).
\end{align}}

Here, (\ref{eq:sec}) follows from the fact that coding scheme employed in DSS is secure against an $(\ell_1, \ell_2)$-eavesdropper, i.e., $I(\fv^s;\xv_{\Ec_1},\dv_{\Ec_2}) = H(\fv^s) - H(\fv^s|\xv_{\Ec_1},\dv_{\Ec_2}) = 0$, and (\ref{eq:rec}) is a consequence of the fact that the original file can be recovered from data stored on any set of $k$ nodes.

For linear coding schemes, $I(\xv_{i}: \dv_{i,n+1},\ldots, \dv_{i,n+\ell_2})$ can be replaced by $\dim \left(\sum_{j=1}^{\ell_2}\Dc_{i,n+j}\right)$ to obtain
\begin{align}
H(\fv^{s}) \leq \sum^{k-\ell_2}_{i=\ell_1+1}\left(\alpha - \dim \left(\sum_{j=1}^{\ell_2}\Dc_{i,n+j}\right)\right)
\end{align}
\end{proof}

In Theorem~\ref{thm:sec_cap}, $\dim \left(\sum_{j=1}^{\ell_2}D_{i,n+j}\right)$ is lower bounded by $\beta$, and using this we obtain the following corollary.

\begin{corollary}\label{thm:alpha-beta_bound}
For an $(n,k)$-DSS employing an $(n,\cM, d_{\min}=n-\frac{\cM}{\alpha}+1,\alpha, \beta, d)$ MSR code, we have
\begin{equation}
\label{eq:secrecy_triv_bound}
\Mc^{s} \leq (k - \ell_1 - \ell_2)(\alpha - \beta).
\end{equation}
\end{corollary}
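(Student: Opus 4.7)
The plan is to apply Theorem~\ref{thm:sec_cap} and then lower bound the mutual information terms (equivalently, the subspace dimensions in the linear case) by $\beta$. Starting from the non-linear bound
$$\Mc^s \leq \sum_{i=\ell_1+1}^{k-\ell_2}\bigl(\alpha - I(\xv_i;\dv_{i,n+1},\ldots,\dv_{i,n+\ell_2})\bigr),$$
it suffices to show that $I(\xv_i;\dv_{i,n+1},\ldots,\dv_{i,n+\ell_2})\geq \beta$ for every $i\in\{\ell_1+1,\ldots,k-\ell_2\}$; summing then yields $\Mc^s\leq (k-\ell_1-\ell_2)(\alpha-\beta)$ directly.

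To lower bound the mutual information, I would use monotonicity and drop all but a single repair download:
$$I(\xv_i;\dv_{i,n+1},\ldots,\dv_{i,n+\ell_2}) \;\geq\; I(\xv_i;\dv_{i,n+1}) \;=\; H(\dv_{i,n+1}) - H(\dv_{i,n+1}\mid \xv_i) \;=\; H(\dv_{i,n+1}),$$
where the last equality uses that $\dv_{i,n+1}$ is a deterministic function of the content $\xv_i$ stored at node $i$, so $H(\dv_{i,n+1}\mid \xv_i)=0$. The remaining task is then to argue $H(\dv_{i,n+1})=\beta$ for an MSR code at the optimal repair bandwidth: the packet $\dv_{i,n+1}$ carries $\beta$ $q$-ary symbols, giving the trivial upper bound $H(\dv_{i,n+1})\leq \beta$, and equality must hold since otherwise the newcomer would be recovering its $\alpha$ fresh symbols from a total download whose $q$-entropy is strictly less than $d\beta$, contradicting the cut-set tightness $\alpha=(d-k+1)\beta$ at the MSR point. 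In the linear formulation this is exactly $\dim(\Dc_{i,n+1})=\beta$, whence $\dim\bigl(\sum_{j=1}^{\ell_2}\Dc_{i,n+j}\bigr)\geq \beta$.

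The one subtlety I foresee is making sure node $\xv_i$ actually participates as a helper in at least one of the repairs of $\xv_{n+1},\ldots,\xv_{n+\ell_2}$ within the worst-case failure pattern used in the proof of Theorem~\ref{thm:sec_cap}. When $d=n-1$ this is automatic, since every live node is a helper in every repair. For $d\leq n-1$ the upper bound comes from a worst-case min-cut over information-flow graphs, so we are free to specify the helper sets; choosing each of them to contain $\{\xv_{\ell_1+1},\ldots,\xv_{k-\ell_2}\}$ (all of which are live throughout the $\ell_2$ repairs in question) makes the argument go through uniformly, and the claimed bound follows.
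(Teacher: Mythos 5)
Your overall route is the same as the paper's: the corollary is read off from Theorem~\ref{thm:sec_cap} by lower bounding each leakage term $I(\xv_i;\dv_{i,n+1},\ldots,\dv_{i,n+\ell_2})$ (equivalently $\dim\bigl(\sum_{j}\Dc_{i,n+j}\bigr)$ in the linear case) by $\beta$ and summing over the $k-\ell_1-\ell_2$ indices. The paper asserts this lower bound in a single sentence; your reduction to a single repair via monotonicity and the identity $I(\xv_i;\dv_{i,n+1})=H(\dv_{i,n+1})$ (valid because $\dv_{i,n+1}$ is a function of $\xv_i$) is fine, and your remark about forcing node $i$ to be a helper in the worst-case failure scenario is consistent with what the paper implicitly assumes (it is automatic for $d=n-1$).

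The genuine gap is your justification of $H(\dv_{i,n+1})=\beta$. You claim that if $H(\dv_{i,n+1})<\beta$ then the newcomer recovers its $\alpha$ symbols from a total download of entropy strictly less than $d\beta$, ``contradicting'' $\alpha=(d-k+1)\beta$. That is not a contradiction: the newcomer only needs to reconstruct $\alpha=(d-k+1)\beta\leq d\beta$ symbols, so a total download entropy below $d\beta$ is perfectly compatible with successful repair. The fact you need is true but requires the `any $k$ out of $n$' property together with $\Mc=k\alpha$: let $\Hc$ be the helper set for the repair of node $n+1$ and pick $\Bc\subseteq\Hc\setminus\{i\}$ with $|\Bc|=k-1$; since the repaired node together with $\xv_{\Bc}$ determines the file and $H(\xv_{\Bc})\leq(k-1)\alpha$, we get $H(\xv_{n+1}\mid\xv_{\Bc})=\alpha$, and since $\xv_{n+1}$ is a function of the downloads while $H(\dv_{i',n+1}\mid\xv_{i'})=0$ for $i'\in\Bc$, this gives $\alpha\leq H(\dv_{i,n+1})+(d-k)\beta$, i.e., $H(\dv_{i,n+1})\geq\alpha-(d-k)\beta=\beta$. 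With that step repaired (or the analogous full-rank argument for $\dim\Dc_{i,n+1}$ in the linear setting), your proof is complete and coincides with the paper's.
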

This shows that the secure code construction proposed in \cite{SRK_globecom11} is optimal for $\ell_2 = 1$.

The following lemma is specific to exact-repairable linear MSR codes that employ interference alignment for node repair with $d=n-1$. It is shown in \cite{SRKR12_IA} that interference alignment is a necessary component of an exact-repairable linear scalar ($\beta = 1$) MSR code. The necessity of interference alignment holds for $\beta > 1$ as well. Therefore, the following bound is fairly general and applies to all known exact-repairable  codes at the MSR point. Following the standard terminology in DSS literature, node $i$ has $\beta \times \alpha$ repair matrices $\{V_{i,j}\}$ associated with remaining nodes $j \neq i$. In the event of failure of node $j$, a newcomer downloads $V_{i,j}\xv_i^T$ from every node $i$, $i\neq j$. We use $V_{i,j}$ to denote both matrix $V_{i,j}$ and row-space of $V_{i,j}$.
\begin{lemma}
\label{lem:intersection}
Consider an $(n,k)$-DSS employing a systematic MSR code with $(n-k)$ linear parity nodes. Assume that $d = n-1$, i.e., all the remaining nodes are contacted to repair a failed node. Let $V_{i,j}$ be the repair matrices associated with node $i$, which is used to perform node repair for node $j$. Then, for each $i \in [k]$, i.e., for systematic nodes, we have
\begin{equation}
\dim\left(\bigcap_{j \in \mathcal{A}}V_{i,j}\right) = {\rm{rank}}\left(\bigcap_{j \in \mathcal{A}}V_{i,j}\right) \leq \frac{\alpha}{(n-k)^{|\mathcal{A}|}},
\end{equation}
where $\mathcal{A} \subseteq [k]\backslash\{i\}$.
\end{lemma}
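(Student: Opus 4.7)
The plan is to prove the bound by induction on $s=|\mathcal{A}|$, leveraging the necessary interference-alignment (IA) conditions that any exact-repairable linear MSR code with $d=n-1$ must satisfy. First I would make these conditions explicit. For repair of a systematic node $j\in[k]$, the newcomer downloads $V_{i,j}\xv_i^T$ from each surviving systematic node $i\neq j$ and $U_{m,j}\yv_{k+m}^T$ from each parity node $m\in[n-k]$, where $\yv_{k+m}=\sum_{i\in[k]}A_{m,i}\xv_i$ for the $\alpha\times\alpha$ parity-coding matrices $A_{m,i}$. Cancellation of the interference from $\xv_i$ in the parity downloads forces $\mathrm{rowspan}(U_{m,j}A_{m,i})\subseteq V_{i,j}$ for all $m$, while invertibility of the stacked matrix $[U_{m,j}A_{m,j}]_{m=1}^{n-k}$ is needed to recover $\xv_j$. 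Setting $R_{m,j}:=\mathrm{rowspan}(U_{m,j})$ and assuming non-degeneracy, these become (i) $R_{m,j}A_{m,i}=V_{i,j}$ for all $i\neq j$ and all $m$, and (ii) $\bigoplus_{m=1}^{n-k}R_{m,j}A_{m,j}=\F^\alpha$. Combining (i) and (ii), for any fixed $(i,j)$ with $i\neq j$ and $M_{m,i,j}:=A_{m,i}^{-1}A_{m,j}$, one obtains the direct-sum decomposition
\[\bigoplus_{m=1}^{n-k} V_{i,j}\,M_{m,i,j}=\F^\alpha.\]

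With these tools in hand, the inductive step (for $|\mathcal{A}|=s\geq 2$) proceeds as follows. Fix any $j^\ast\in\mathcal{A}$, set $\mathcal{A}':=\mathcal{A}\setminus\{j^\ast\}$, and write $\mathcal{W}:=\bigcap_{j\in\mathcal{A}}V_{i,j}$. The strategy is to map $\mathcal{W}$ via right-multiplication by the $n-k$ matrices $\{M_{m,i,j^\ast}\}_m$ into the ``role-swapped'' intersection $\mathcal{W}'_{j^\ast}:=\bigcap_{j\in\mathcal{A}'}V_{j^\ast,j}$. Since $j^\ast\in\mathcal{A}\subseteq[k]\setminus\{i\}$, we have $j^\ast\in[k]$ and $\mathcal{A}'\subseteq[k]\setminus\{j^\ast\}$, so the inductive hypothesis applies to $(j^\ast,\mathcal{A}')$ and yields $\dim(\mathcal{W}'_{j^\ast})\leq\alpha/(n-k)^{s-1}$. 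The base cases $s=0,1$ are immediate from $\dim(\F^\alpha)=\alpha$ and $\dim(V_{i,j})=\beta=\alpha/(n-k)$.

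The key inclusion is this: for any $\wv\in\mathcal{W}$ and any $j\in\mathcal{A}'$, condition (i) lets me write $\wv=\uv A_{m,i}$ for the unique $\uv\in R_{m,j}$ (using $\wv\in V_{i,j}=R_{m,j}A_{m,i}$), so $\wv M_{m,i,j^\ast}=\uv A_{m,j^\ast}\in R_{m,j}A_{m,j^\ast}=V_{j^\ast,j}$, again by (i). Hence $\mathcal{W}\,M_{m,i,j^\ast}\subseteq\mathcal{W}'_{j^\ast}$ for every $m$. Since $\mathcal{W}\subseteq V_{i,j^\ast}$, each image $\mathcal{W}\,M_{m,i,j^\ast}$ lies in the distinct direct-sum component $V_{i,j^\ast}M_{m,i,j^\ast}$ of $\F^\alpha$, so the $n-k$ images sit in pairwise trivially intersecting summands and their sum is an internal direct sum of dimension $(n-k)\dim(\mathcal{W})$. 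Placing this sum inside $\mathcal{W}'_{j^\ast}$ yields $(n-k)\dim(\mathcal{W})\leq\alpha/(n-k)^{s-1}$, i.e., $\dim(\mathcal{W})\leq\alpha/(n-k)^s$.

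The main obstacle will be the role-swapping manipulation itself, which invokes IA condition (i) twice — first with node $i$ observing the repair of $j$ to represent $\wv\in V_{i,j}$, then with node $j^\ast$ observing the repair of the same $j$ to conclude $\wv M_{m,i,j^\ast}\in V_{j^\ast,j}$. Carefully justifying non-degeneracy (so that the containments in (i) are in fact equalities) and verifying that the induction remains closed within the systematic-node index set $[k]$ under this swap are the technical points requiring the most care; once the inclusion $\mathcal{W}\,M_{m,i,j^\ast}\subseteq\mathcal{W}'_{j^\ast}$ is in hand, the dimension arithmetic collapses cleanly into the stated bound.
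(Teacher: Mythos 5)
Your proof is correct and is essentially the paper's argument: both induct on $|\mathcal{A}|$, use the interference-alignment equalities $V_{i,j}=R_{m,j}A_{m,i}$ (the paper's $V_{i,j}=S_{1,j}A_i=S_{2,j}B_i$) together with the full-rank/direct-sum condition for the node being repaired, and push the intersection through the parity coefficient matrices into the repair subspaces $\bigcap_{j\in\mathcal{A}'}V_{j^\ast,j}$ of another systematic node, to which the inductive hypothesis applies. The only difference is presentational: the paper argues by contradiction for $n-k=2$ (two images of dimension $>\alpha/2^{t}$ inside a space of dimension $\leq\alpha/2^{t-1}$ must intersect nontrivially, contradicting $S_{1,t}A_t\cap S_{2,t}B_t=\{0\}$) and asserts the extension to more parities, whereas your direct dimension count via the decomposition $\bigoplus_m V_{i,j^\ast}M_{m,i,j^\ast}=\F^{\alpha}$ handles general $n-k$ in one shot.
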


\begin{proof}
See Appendix~\ref{sec:lemma_intersection_proof}.
\end{proof}

\vspace{.5cm}
According to the notation specified in Section~\ref{sec:sys_model}, we have $\dv_{i,j} = V_{i,j}\xv^T_i$. Therefore, subspace $\Dc_{i,j}$ corresponds to the row space spanned by $V_{i,j}[\gv^1_i,\ldots, \gv^{\alpha}_i]^T$. Here, $\gv^1_i,\ldots, \gv^{\alpha}_i$ represent the encoding vectors associated with $\xv_i$, i.e., $\xv_i = \fv[\gv^1_i,\ldots, \gv^{\alpha}_i]$. It follows from the well-known dimension formula for vector spaces that
\begin{align}
\dim\left(\Dc_{i,n+1}+\Dc_{i,n+2}\right) &= \dim\left(\Dc_{i,n+1}\right) + \dim\left(\Dc_{i,n+2}\right) \n \\
&~~~- \dim\left(\Dc_{i,n+1} \cap \Dc_{i,n+2}\right) \n \\
&= \beta + \beta - \dim\left(\Dc_{i,n+1} \cap \Dc_{i,n+2}\right) \n \\
& \geq 2\beta - \frac{\alpha}{(n-k)^{2}} \n \\
& = 2\beta - \frac{\beta}{(n-k)}, \label{eq:intersec_dss}
\end{align}
where (\ref{eq:intersec_dss}) follows from Lemma~\ref{lem:intersection} and the fact that $\beta = \frac{\alpha}{n-k}$ at the MSR point with $d=n-1$. Now combining (\ref{eq:intersec_dss}) with Theorem~\ref{thm:sec_cap}, we get the following corollary:

\begin{corollary}
\label{cor:intersec}
Given an $(n, \alpha k, n-k+1,\alpha, \beta, d=n-1)$ MSR code that employs interference alignment to perform node repair, for $\ell_2 \leq 2$ we have
\begin{eqnarray}
\label{eq:sec_cap_gen_cor}
\Mc^{s} \leq (k-\ell_1-\ell_2)\left(\alpha - \theta\left(\alpha,\beta,\ell_2\right)\right)
\end{eqnarray}
where
\begin{align}
\theta\left(\alpha,\beta,\ell_2\right) = \left\{\begin{array}{cc}
                \beta, & \textmd{ if } \ell_2=1 \\
                2\beta-\frac{\beta}{(n-k)}, & \textmd{ if } \ell_2=2
              \end{array}\right.
\end{align}
\end{corollary}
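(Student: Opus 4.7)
The plan is to combine the linear upper bound of Theorem~\ref{thm:sec_cap} with the interference-alignment intersection bound of Lemma~\ref{lem:intersection}, handling $\ell_2=1$ and $\ell_2=2$ separately. I would start from the specialized form of Theorem~\ref{thm:sec_cap} for linear codes,
\begin{equation*}
\Mc^{s} \leq \sum_{i=\ell_1+1}^{k-\ell_2} \left(\alpha - \dim\Big(\sum_{j=1}^{\ell_2}\Dc_{i,n+j}\Big)\right),
\end{equation*}
and then lower-bound $\dim(\sum_{j=1}^{\ell_2}\Dc_{i,n+j})$ uniformly in $i$. Since the worst-case choice of eavesdropped set already went into Theorem~\ref{thm:sec_cap}, in that derivation I can (as in the theorem's proof) take the eavesdropped repairs to be those of systematic nodes $\xv_{k-\ell_2+1},\ldots,\xv_k$, which is exactly the regime where Lemma~\ref{lem:intersection} has something to say.

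For $\ell_2=1$, the bound is immediate: each repair contributes a $\beta$-dimensional subspace from node $i$, so $\dim(\Dc_{i,n+1})=\beta$, which yields $\Mc^{s}\leq(k-\ell_1-1)(\alpha-\beta)$, matching $\theta(\alpha,\beta,1)=\beta$. For $\ell_2=2$, I would apply the dimension formula
\begin{equation*}
\dim(\Dc_{i,n+1}+\Dc_{i,n+2})=2\beta-\dim(\Dc_{i,n+1}\cap \Dc_{i,n+2}),
\end{equation*}
and control the intersection via Lemma~\ref{lem:intersection}. Because $i\in\{\ell_1+1,\ldots,k-\ell_2\}\subset[k]$ is systematic, the encoding block $[\gv_i^1,\ldots,\gv_i^\alpha]$ is an invertible $\alpha\times\alpha$ matrix, so $\Dc_{i,n+j}$ is related to the repair subspace $V_{i,k-\ell_2+j}$ by this invertible change of coordinates (the newcomer $\xv_{n+j}$ reuses the repair matrices of the exact code that originally served $\xv_{k-\ell_2+j}$). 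Consequently
\begin{equation*}
\dim(\Dc_{i,n+1}\cap \Dc_{i,n+2})=\dim(V_{i,k-1}\cap V_{i,k}),
\end{equation*}
and Lemma~\ref{lem:intersection} applied with $\mathcal{A}=\{k-1,k\}\subseteq[k]\setminus\{i\}$ (valid since $i\leq k-2$) gives
\begin{equation*}
\dim(V_{i,k-1}\cap V_{i,k})\leq\frac{\alpha}{(n-k)^2}=\frac{\beta}{n-k},
\end{equation*}
where I use $\beta=\alpha/(n-k)$ at the MSR point with $d=n-1$. Substituting this into the dimension formula and then into the bound from Theorem~\ref{thm:sec_cap} produces the desired expression with $\theta(\alpha,\beta,2)=2\beta-\beta/(n-k)$.

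The only genuinely delicate step is the identification of $\Dc_{i,n+j}$ with $V_{i,k-\ell_2+j}$: one must justify that for systematic $i$ the encoding-vector transformation is invertible (so that it preserves dimensions of sums and intersections), and that exact repair of $\xv_{k-\ell_2+j}$ by $\xv_{n+j}$ guarantees the repair matrices really coincide. Once this identification is in hand, everything else is a routine substitution, and the two cases $\ell_2\in\{1,2\}$ combine into the compact statement of the corollary. No new combinatorial machinery is needed beyond Theorem~\ref{thm:sec_cap}, Lemma~\ref{lem:intersection}, and the MSR relation $\beta=\alpha/(n-k)$.
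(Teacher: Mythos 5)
Your proposal is correct and follows essentially the same route as the paper: it specializes the linear form of Theorem~\ref{thm:sec_cap} to the failure pattern used in that theorem's proof, applies the dimension formula together with Lemma~\ref{lem:intersection} and the MSR relation $\beta=\alpha/(n-k)$ for $d=n-1$, and treats $\ell_2=1$ and $\ell_2=2$ as you do. The only cosmetic slip is calling $[\gv_i^1,\ldots,\gv_i^{\alpha}]$ an invertible $\alpha\times\alpha$ matrix --- it is an $\cM\times\alpha$ matrix of full column rank for a systematic node, which is exactly what is needed for the map to preserve dimensions of sums and intersections, so the identification of $\Dc_{i,n+j}$ with the repair subspaces $V_{i,\cdot}$ goes through as you argue (and as the paper does implicitly).
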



\subsection{Construction of secure MSR codes with $d=n-1$}
\label{subsec:secure_msr}

In this subsection, we present a construction for an $(\ell_1, \ell_2)$-secure coding scheme when $\Ec_2 \subset \Sc$, for a given set $\Sc$ of size $|\Sc|=k$, i.e., the nodes where an eavesdropper observes all the data downloaded during node repair are restricted to a specific set of $k$ nodes. The construction is based on concatenation of  Gabidulin codes~\cite{Gab85} and zigzag codes~\cite{zigzag13}. The inner code, zigzag code, allows bandwidth efficient repair\footnote{Zigzag codes achieve optimal {\em rebuilding ratio}, i.e., the ratio of $d\beta$ (the amount of data that is read and transferred from the contacted nodes to a newcomer node) to $(n-1)\alpha$ (total data stored on the surviving nodes). Since zigzag codes have optimal rebuilding ratio, they are also repair bandwidth efficient.} of systematic nodes with $d = n-1$.

First, we present a brief description of the construction of  an $(n, k)$ zigzag code~\cite{zigzag13}. Let $p = n - k$ denote the number of parity nodes and $\mathcal{M} = k(n-k)^k = kp^k$ be the size of file that needs to be encoded.
The $kp^k$ message symbols are arranged in a $p^k \times k$ array, and $p^k$ symbols in each of $k$ columns of the array are indexed using integers in $[0, p^{k}-1]$. Note that an integer in $[0, p^{k} - 1]$ can also be associated with a vector in $\mathbb{Z}_{p^k}$. Node $j$ ($j \in [k]$) stores  $p^k$ symbols from $j$-th column of the array.  Let $x_{ij}$ for $j \in [k]$ denote the symbol in node $j$ with index $i$. In order to generate $(l+1)$-th parity node, define zigzag sets $Z^{l}_{t} = \{x_{ij}: i + le_j = t\}$, where $t$ and $l$ ranges over integers in $[0, p^{k} - 1]$ and $[0, p-1]$, respectively. Here, $e_j$ denotes the $j$-th standard basis vector in $\mathbb{Z}_{p^k}$ that has all zero entries except $1$ at $j$-th position. Also, $i$ is used to represent both an integer and its vector representation in $\mathbb{Z}_{p^k}$. Now, $t$-th symbol of $(l+1)$-th parity node is obtained by taking a linear combination (over $\F_q$) of symbols in the zigzag set $Z^{l}_{t}$. Note that each symbol for a systematic node is used in exactly one linear combination stored on each parity node. (The construction described here corresponds to Construction $6$ in \cite{zigzag13} with $m = k$ and $\{v_0, v_1,\ldots, v_{k-1}\} = \{e_1, e_2,\ldots, e_k\}$.) 

The repair of $j$-th systematic node (column) is performed by accessing symbols in $d=n-1$ surviving nodes, where the downloaded symbol index (of each live node) belongs to the set $Y_j = \{i : i\cdot e_j =0\}$. 
Here, $a\cdot b  = \sum_{i}a_ib_i$~(${\rm mod}~p$) denotes the dot product between $a$ and $b$ in $\mathbb{Z}_{p^k}$. We first state the following property associated with the repair process of a zigzag code.
\begin{lemma}
\label{lm:union_zigzag}
Assume that an eavesdropper gains access to the data stored on $\ell_1$ nodes and the data downloaded during node repair of $\ell_2$ systematic nodes in an $(n = k+p, k)$ zigzag code. Then the eavesdropper can only observe
$$kp^k-p^k(k-\ell_1-\ell_2)\left(1-\frac{1}{p}\right)^{\ell_2}$$
independent symbols.
\end{lemma}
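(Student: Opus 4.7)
The goal is to compute, exactly, the dimension (as an $\F_q$-subspace of the message space) of the linear combinations of message symbols revealed to the eavesdropper, and show that it equals $kp^{k}-p^{k}(k-\ell_1-\ell_2)(1-1/p)^{\ell_2}$. Without loss of generality I assume the $\ell_1$ directly eavesdropped nodes and the $\ell_2$ repaired systematic nodes are disjoint and that the former are all systematic; label their column indices by $\Ec_1$ and $\Jc=\{j_1,\ldots,j_{\ell_2}\}$ respectively. The $\ell_1$ fully observed columns immediately give $\ell_1 p^{k}$ message symbols to the eavesdropper, so the remaining work is to analyze the repair downloads.

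First I would parse each repair using the zigzag repair rule: for each $j\in\Jc$, the newcomer downloads, from each of the $n-1$ surviving nodes, the symbols indexed by the hyperplane $Y_j=\{i\in\Z_{p^{k}}:i\cdot e_j=0\}$ of size $p^{k-1}$. This directly reveals to the eavesdropper (i) the $Y_j$-indexed systematic symbols of every other systematic column, and (ii) through the zigzag reconstruction, all $p^{k}$ symbols of column $j$ itself. The heart of the argument is to show that the $p^{k-1}$ parity symbols downloaded from each of the $p$ parity nodes during this repair contribute no additional independent constraints. Fix $j\in\Jc$ and a parity symbol at index $t\in Y_j$ of parity node $l+1$; by construction it is a linear combination of $\{x_{i_m,m}:i_m=t-le_m,\ m\in[k]\}$. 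For $m=j$ the term $x_{i_j,j}$ is known from (ii); for $m\in\Ec_1\cup(\Jc\setminus\{j\})$ it is known directly or from another repair; for any remaining $m$ we have $e_m\cdot e_j=0$, so $i_m\cdot e_j=t\cdot e_j-l(e_m\cdot e_j)=0$, i.e.\ $i_m\in Y_j$, so $x_{i_m,m}$ is among the systematic symbols already downloaded in (i). Hence every such parity equation is a known linear combination of already-known systematic symbols and adds no rank.

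It then remains to count. For each systematic column $m\notin\Ec_1\cup\Jc$ (there are $k-\ell_1-\ell_2$ of them), the above shows the eavesdropper learns precisely those $x_{i,m}$ with $i\in\bigcup_{s=1}^{\ell_2}Y_{j_s}$. Because any $t$-fold intersection $Y_{j_{s_1}}\cap\cdots\cap Y_{j_{s_t}}$ is the codimension-$t$ subspace cut out by $t$ linearly independent hyperplane equations and so has size $p^{k-t}$, inclusion--exclusion gives
$$\Bigl|\bigcup_{s=1}^{\ell_2}Y_{j_s}\Bigr|=\sum_{t=1}^{\ell_2}(-1)^{t+1}\binom{\ell_2}{t}p^{k-t}=p^{k}\bigl(1-(1-1/p)^{\ell_2}\bigr).$$
Adding the $\ell_1 p^{k}$ from directly eavesdropped columns, the $\ell_2 p^{k}$ from the reconstructed columns in $\Jc$, and the $(k-\ell_1-\ell_2)\,p^{k}\bigl(1-(1-1/p)^{\ell_2}\bigr)$ from the remaining systematic columns yields exactly the claimed $kp^{k}-p^{k}(k-\ell_1-\ell_2)(1-1/p)^{\ell_2}$. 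The main obstacle is the parity-redundancy step; the remaining steps are elementary bookkeeping and inclusion--exclusion, whereas this step is the only place that must leverage the specific algebraic structure of the zigzag construction, in particular the choice of companion vectors $v_j=e_j$, which propagates the orthogonality $e_m\cdot e_j=0$ for $m\neq j$ into every parity combination.
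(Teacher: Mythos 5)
Your overall route is the same as the paper's: inclusion--exclusion over the repair sets $Y_{j}$, plus the observation that the parity symbols downloaded during a repair are determined by systematic symbols the eavesdropper already holds. In fact your orthogonality computation ($i_m\cdot e_j = t\cdot e_j - l(e_m\cdot e_j)=0$ for $m\neq j$, $t\in Y_j$) makes explicit a step the paper only asserts, which is a genuine improvement in rigor. However, there is a gap at the very first line: the claim that one may assume ``without loss of generality'' that the $\ell_1$ storage-eavesdropped nodes are all systematic. The lemma (and its use in Theorem~\ref{thm:msr_sec_cap}, where only $\Ec_2\subset[k]$ is assumed, not $\Ec_1\subset[k]$) must cover eavesdropped parity nodes, and the zigzag code is not symmetric between systematic and parity columns, so this reduction is not automatic; it is exactly the case to which the paper devotes the final paragraph of its proof, tracking $|\Ec_1\cap[k]|$ separately.

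The good news is that the missing case follows from the machinery you already set up, so the fix is bookkeeping rather than a new idea. For a fully observed parity node $l+1$, the $|\cup_{s}Y_{j_s}|$ coordinates indexed by $\cup_{s}Y_{j_s}$ are, by your own orthogonality argument, already determined by downloaded/reconstructed systematic symbols, so the node contributes at most $p^{k}-\bigl|\cup_{s=1}^{\ell_2}Y_{j_s}\bigr| = p^{k}\left(1-\frac{1}{p}\right)^{\ell_2}$ new independent symbols --- exactly the same increment as a systematic node outside $\Ec_1\cup\Jc$ whose remaining unrevealed coordinates number $p^{k}-\bigl|\cup_{s}Y_{j_s}\bigr|$. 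Hence the total is unchanged whether the $\ell_1$ nodes are systematic, parity, or mixed, which is what the paper's count with $|\Ec_1\cap[k]|$ verifies. Relatedly, you frame the result as an exact dimension count, whereas the lemma only needs (and the paper only proves) an upper bound of the form ``at most this many linearly independent symbols''; for eavesdropped parity columns an exactness claim would require an additional independence argument, so it is cleaner to state and prove the parity case as an upper bound.
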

\begin{proof}
See Appendix~\ref{appen:union_zigzag}
\end{proof}
\vspace{.3cm}

\begin{figure*}[t]
 \centering
 \includegraphics[width= 0.9\textwidth]{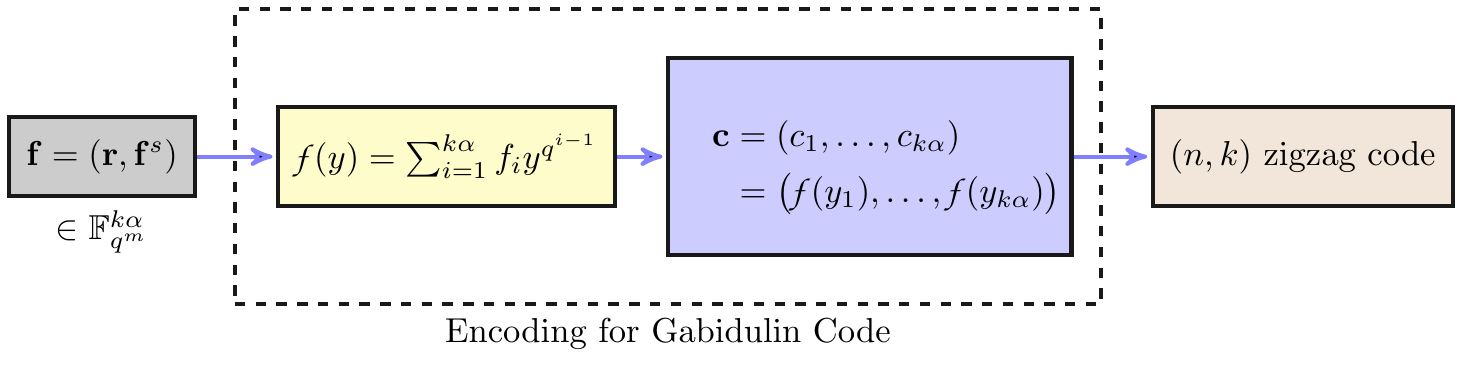}
 \caption{The proposed secure coding scheme at the MSR point.}\label{fig:msrachievability}%
\end{figure*}

We now detail an achievability scheme of this section. (See Fig.~\ref{fig:msrachievability}.) Let $\fv^{s}$ denote secure information of size $p^k(k-\ell_1-\ell_2)(1-\frac{1}{p})^{\ell_2}$ symbols in $\F_q^{m}$. We take $kp^k-p^k(k-\ell_1-\ell_2)(1-\frac{1}{p})^{\ell_2}$ i.i.d. random symbols $\rv = (r_1,\ldots, r_{kp^k-p^k(k-\ell_1-\ell_2)(1-\frac{1}{p})^{\ell_2}})$, distributed uniformly at random over $\F_{q^m}$, and append $\rv$ with $\fv^{s}$ to obtain $\fv = (\rv, \fv^s)\in \F_{q^m}^{k\alpha}$. We then encode $\fv$ in two stages as follows:
\begin{enumerate}
\item First, encode $\fv = (f_1,\ldots, f_{k\alpha})$, $\alpha=p^k$, using the encoding process of a $[k\alpha, k\alpha, 1]_{q^m}$ Gabidulin code as specified in Section~\ref{subsec:gabidulin}. Let
$$
\mathpzc{f}(y)=\sum_{i=1}^{k\alpha} f_i y^{q^{i-1}}
$$
be the underlying linearized polynomial used in the encoding process.
\item Encode the codeword of the Gabidulin code, obtained from the previous step, using a $(k+p, k)$ zigzag code defined over $\F_q$, i.e., the encoding coefficient of the zigzag code are from $\F_q$. Store the output of this step at $n = k+p$ nodes in the storage system.
\end{enumerate}

As described in Remark~\ref{rem:linearized_poly}, $\mathpzc{f}(\cdot)$ is a $\F_q$-linear function; therefore, all the symbols stored on parity nodes of the zigzag code are evaluations of $\mathpzc{f}(\cdot)$ at points in $\F_{q^m}$. (See also Lemma~\ref{lem:linearized_property}).

Next, we present the following result on security of the coding scheme described above, which performs secrecy pre-coding  using a Gabidulin code.
\begin{theorem}
\label{thm:msr_sec_cap}
The code obtained by performing secrecy pre-coding of a zigzag code, as described above, with $\alpha=p^k$ and $d = n - 1$ achieves a secure file size $$\Mc^s=(k-\ell_1-\ell_2)p^k\left(1-\frac{1}{p}\right)^{\ell_2}$$ against an $(\ell_1, \ell_2)$-eavesdropper with $\Ec_2 \subset [k]$. Here, $p=n-k$, and $[k]$ denotes the set of systematic nodes of zigzag code.
In addition, for any $(\ell_1,\ell_2)$ such that $\ell_2 \leq 2$ and $\Ec_2 \subset [k]$, this code attains the upper bound on the secure file size given in Corollary~\ref{cor:intersec}; hence, characterizes the secrecy capacity at the MSR point with $d=n-1$ for a class of $(\ell_1, \ell_2)$-eavesdroppers in which download eavesdropping is restricted to the systematic nodes.
\end{theorem}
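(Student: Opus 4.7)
The plan is to establish the achievable secure file size first and then derive optimality by comparison with Corollary~\ref{cor:intersec}. For achievability, I would invoke the Secrecy Lemma (Lemma~\ref{thm:SecrecyLemma}) with $\fv^{s}$ as the information, $\rv$ as the random pad of length $N := kp^{k} - p^{k}(k-\ell_1-\ell_2)(1-1/p)^{\ell_2}$ over $\F_{q^{m}}$, and $\ev = \{x_i^{\mathrm{out}}, x_j^{\mathrm{in}}, x_j^{\mathrm{out}} : i \in \Ec_1,\, j \in \Ec_2\}$ the eavesdropper's observation. This reduces security to verifying the two hypotheses $H(\ev) \leq H(\rv)$ and $H(\rv \mid \fv^{s}, \ev) = 0$.

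For the first hypothesis, since the zigzag encoding coefficients lie in $\F_q$, every symbol in $\ev$ is an $\F_q$-linear combination of the Gabidulin codeword coordinates $\{\mathpzc{f}(y_1), \ldots, \mathpzc{f}(y_{k\alpha})\}$. Lemma~\ref{lm:union_zigzag} (which is precisely the place the restriction $\Ec_2 \subset [k]$ is used, since it concerns repairs of systematic zigzag nodes) bounds the number of independent observations by exactly $N$, so $H(\ev) \leq N \cdot m \log q = H(\rv)$. For the second hypothesis, I would exploit the $\F_q$-linearity of linearized polynomials (Remark~\ref{rem:linearized_poly}): each scalar observation in $\ev$ equals $\mathpzc{f}$ evaluated at the corresponding $\F_q$-combination of $\{y_1,\ldots,y_{k\alpha}\}$, so $\ev$ yields $\mathpzc{f}$-evaluations at points spanning an $\F_q$-subspace of dimension up to $N$. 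Writing $\fv = (\rv, \fv^{s})$ and subtracting the known contribution $\sum_{i=1}^{\Mc^{s}} f^{s}_i y^{q^{N+i-1}}$ leaves a linearized polynomial of $q$-degree $N-1$ whose coefficient vector is exactly $\rv$; by Remark~\ref{rem:gabidulin_rec}, any $N$ evaluations of this residual polynomial at $\F_q$-independent points determine $\rv$ uniquely, giving $H(\rv \mid \fv^{s}, \ev) = 0$. The Secrecy Lemma then yields $I(\fv^{s}; \ev) = 0$ and establishes the achievable secure file size $\Mc^{s} = (k-\ell_1-\ell_2) p^{k}(1-1/p)^{\ell_2}$.

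The optimality claim for $\ell_2 \leq 2$ with $\Ec_2 \subset [k]$ reduces to an algebraic comparison with Corollary~\ref{cor:intersec}. Substituting $\alpha = p^{k}$ and $\beta = \alpha/p$, one checks that $\alpha - \theta(\alpha,\beta,1) = \alpha(1-1/p)$ and $\alpha - \theta(\alpha,\beta,2) = \alpha - 2\beta + \beta/p = \alpha(1 - 2/p + 1/p^{2}) = \alpha(1-1/p)^{2}$, so the achieved $\Mc^{s}$ matches the corollary's upper bound in both cases, characterizing the secrecy capacity.

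The main obstacle is step (ii): I must argue not merely that $\ev$ has rank at most $N$ over $\F_q$, but that, after projecting out the contribution of $\fv^{s}$, the residual observations contain $N$ evaluations of the $\rv$-dependent linearized polynomial at $\F_q$-independent points so that Gabidulin interpolation succeeds. The cleanest way is to show that, on the worst-case eavesdropping pattern where Lemma~\ref{lm:union_zigzag} is tight, the observed $\F_q$-subspace of evaluation points has dimension exactly $N$; any slack only makes both Secrecy Lemma hypotheses easier to satisfy. This invocation of the linearized/MRD interpolation property (Lemma~\ref{lem:linearized_property}) is what makes the Gabidulin pre-coding essential, since a generic MDS outer code would not allow the subspace-based recovery argument to go through.
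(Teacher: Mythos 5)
Your proposal matches the paper's proof essentially step for step: achievability via the Secrecy Lemma (Lemma~\ref{thm:SecrecyLemma}), with Lemma~\ref{lm:union_zigzag} supplying the count of independent eavesdropped symbols to get $H(\ev)\leq H(\rv)$, the $\F_q$-linearity of the Gabidulin pre-code giving recovery of $\rv$ from $(\fv^{s},\ev)$ by interpolating the residual linearized polynomial $\tilde{\mathpzc{f}}(y)=\sum_i r_i y^{q^{i-1}}$, and optimality by substituting $\alpha=p^{k}$, $\beta=p^{k-1}$ into Corollary~\ref{cor:intersec}. One small correction to your closing remark: slack in the rank of the observations does not make the hypothesis $H(\rv\mid\fv^{s},\ev)=0$ easier (it breaks the interpolation); the clean fix, left implicit in the paper as well, is to enhance the eavesdropper with additional symbols so that she sees exactly $N$ linearly independent evaluations, apply the Secrecy Lemma to this stronger eavesdropper, and conclude for the original one by monotonicity of mutual information under reduced observations.
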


\begin{proof}
The repair and data reconstruction properties of the proposed code follow from the construction of zigzag codes~\cite{zigzag13}. We utilize Lemma~\ref{thm:SecrecyLemma} to prove the security of this code against an $(\ell_1, \ell_2)$-eavesdropper with $\Ec_2 \subset [k]$. Considering that $\ev$ denotes the symbols stored on DSS that an eavesdropper observes, we need to show: (i) $H(\ev) \leq H(\rv)$ and (ii) $H(\rv|\ev, \fv^{s})=0$ to establish the security of the proposed coding scheme. It follows from  Lemma~\ref{lm:union_zigzag} that an eavesdropper observes $kp^k-p^k(k-\ell_1-\ell_2)\left(1-\frac{1}{p}\right)^{\ell_2}$ independent symbols, which correspond to evaluations of $\mathpzc{f}(\cdot)$  at linearly independent (over $\F_q$) points in $\F_{q^n}$. Since $|\ev| = kp^k-p^k(k-\ell_1-\ell_2)\left(1-\frac{1}{p}\right)^{\ell_2}$, we have $H(\ev) = H(\rv)$, which is the first requirement to establish the security claim. It remains to show $H(\ev|\rv, \fv^{s})=0$, for which we describe how an eavesdropper can decode random symbols $\rv$ given $\ev$ and $\fv^{s}$ in the following.

When the eavesdropper is provided with $\fv^{s}$ in addition to already known $\ev$, she can remove the contribution of $\fv^{s}$ from the known $|\ev| = kp^k-p^k(k-\ell_1-\ell_2)\left(1-\frac{1}{p}\right)^{\ell_2}$ evaluations of the linearized polynomial
$$
\mathpzc{f}(y) = \sum_{i = 1}^{kp^k}f_iy^{q^{i-1}}
$$
to obtain $|\ev| = kp^k-p^k(k-\ell_1-\ell_2)\left(1-\frac{1}{p}\right)^{\ell_2}$ evaluations ($\tilde{\ev}$) of another linearized polynomial
$$
\tilde{\mathpzc{f}}(y) = \sum_{i = 1}^{|\ev|}r_iy^{q^{i-1}}
$$
at  $|\ev| = kp^k-p^k(k-\ell_1-\ell_2)\left(1-\frac{1}{p}\right)^{\ell_2}$  linearly independent (over $\F_q$) points in $\F_{q^m}$. From these evaluations of $\tilde{\mathpzc{f}}(\cdot)$, using Remark~\ref{rem:gabidulin_rec}, the eavesdropper can decode the random symbols $\rv$, which implies that $H(\rv|\fv^{s},\ev) = 0$. (We note that a similar proof of security when utilizing polynomials for encoding is provided in the seminal paper of A. Shamir on secret sharing~\cite{shamir79}.)

Substituting $\ell_2 = 1~(\text{or}~2),$ $\alpha = p^{k}$ and $\beta = \frac{p^{k}}{p} = p^{k-1}$ in (\ref{eq:sec_cap_gen_cor}) shows that the proposed code construction achieves the upper bound on secure file size, specified in Corollary~\ref{cor:intersec}, for $\ell_2 \leq 2$.
\end{proof}

\begin{figure*}
        \centering
        \begin{subfigure}[b]{0.4\textwidth}
                \centering
                \includegraphics[width=\columnwidth]{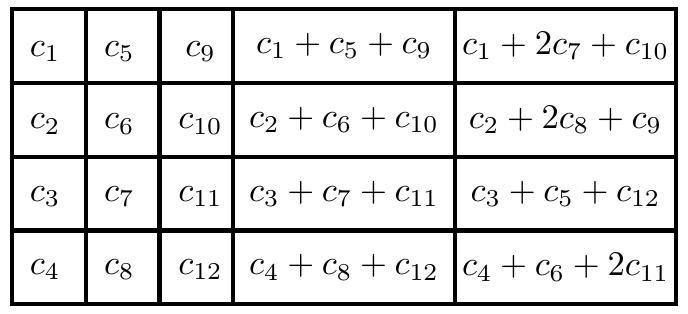}
                \caption{$(5,3)$-zigzag code.}
                \label{fig:zigzag_example}
        \end{subfigure}%
        \qquad
        \begin{subfigure}[b]{0.4\textwidth}
                \centering
                \includegraphics[width=\columnwidth]{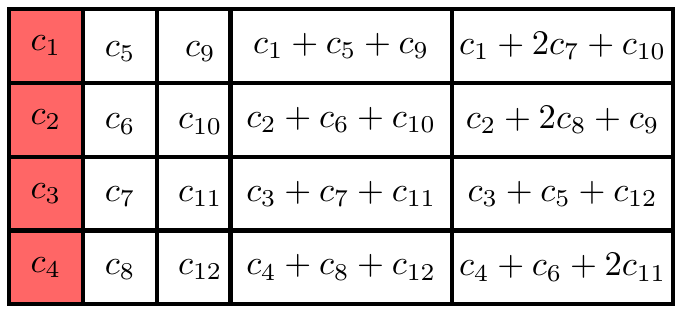}
                \caption{Observations of an eavesdropper with $\Ec_1 = \{1\}$.}
                \label{fig:zigzag_example2}
        \end{subfigure}\\
        \begin{subfigure}[b]{0.4\textwidth}
                \centering
                \includegraphics[width=\columnwidth]{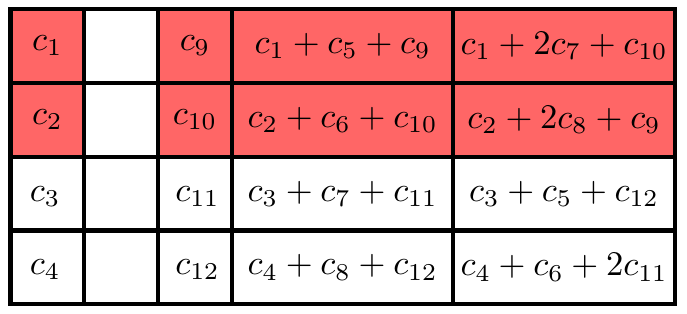}
                \caption{Encoded symbols obtained by an eavesdropper by observing repair of node $2$.}
                \label{fig:zigzag_example3}
        \end{subfigure}%
         \qquad
        \begin{subfigure}[b]{0.4\textwidth}
                \centering
                \includegraphics[width=\columnwidth]{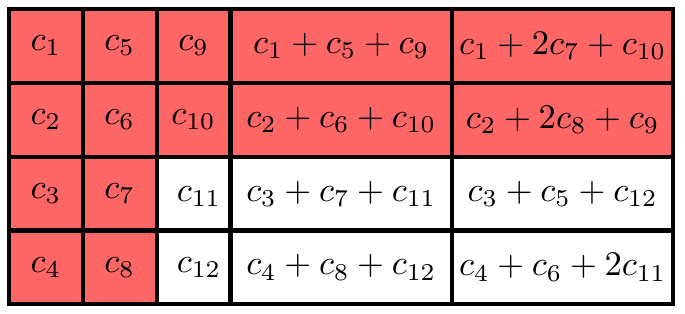}
                \caption{Observations of a $(1,1)$-eavesdropper with $\Ec_1 = \{1\}$ and $\Ec_2 = \{2\}$.}
                \label{fig:zigzag_example4}
        \end{subfigure}
        \caption{Illustrations of a $(5,3)$-DSS employing the proposed secure coding scheme. The output of Gabidulin pre-coding step $(c_1, c_2,\ldots, c_{12})$ is stored on $5$ node DSS using a $(5,3)$-zigzag code~\cite{zigzag13}. In this example, the eavesdropper is associated with $\Ec_1 = \{1\}$ and $\Ec_2 = \{2\}$.}\label{fig:zigzag_example}
\end{figure*}

\begin{remark}
\label{remark:goparaju}
In \cite{GRCP13}, Goparaju et al. obtain the following upper bound on the secrecy capacity at the MSR point, when $\Ec_2 \subset [k]$ and $d = n-1$,
\begin{align}
\label{eq:goparaju}
\cM^{s} \leq (k - \ell_1 - \ell_2)\left(1 - \frac{1}{n - k}\right)^{\ell_2}\alpha.
\end{align}
Therefore, this bound along with the coding scheme presented in this paper characterize the secrecy capacity at the MSR point, when $\Ec_2 \subset [k]$~{\cite[Theorem 3]{GRCP13}}.
\end{remark}

%

\begin{remark}
\label{remark:zigzag_allerton}
Zigzag codes, used in constructing secure coding scheme in this paper, do not allow bandwidth efficient repair of parity nodes. However, a related class of codes with $\alpha = (n - k)^{k + 1}$~\cite{zigzag_allerton11} allow bandwidth efficient repair for parity nodes as well. The repair set $Y_{j}$ for $j \in [k]$ (systematic nodes) for this class of codes have an intersection pattern similar to that of Zigzag codes. Therefore, Lemma~\ref{lm:union_zigzag} and consequently  Theorem~\ref{thm:msr_sec_cap} hold (with new value of $\alpha$) even when zigzag codes are replaced with the codes presented in~\cite{zigzag_allerton11}. The uses of codes from \cite{zigzag_allerton11} also give optimal coding schemes when  $|\Ec_2 \cap \{k+1,\ldots, k+p = n\}| \leq 1$. (The proof of this follows from the techniques developed in this section, and we omit the details for brevity.)
\end{remark}


\section{New Bounds and Constructions for Locally Repairable Codes}
\label{sec:lrc}

In this section, we study the notion of local-repairability for DSS. As opposed to the line of work on scalar locally repairable codes (LRCs) \cite{Gopalan12, pyramid, PKLK12}, where each node stores a scalar over a field from a codeword, we consider vector-LRCs~\cite{DimDim12, oggier_proj}. Our results can be specialized to scalar-LRCs by restricting node size $\alpha$ to $1$. Furthermore, the vector-LRCs that we consider, as defined in Section~\ref{sec:sys_model}, allow for the possibility of $\alpha> \Mc/k$, and non-trivial locality, i.e., $\delta> 2$. Thus, these codes are generalizations of the vector-LRCs studied in \cite{DimDim12} which considered only the $\delta=2$ case. We note that we are particularly interested in vector-LRCs with multiple local parities per local group.

We first derive an upper bound on the minimum distance of $(r, \delta, \alpha)$ LRCs, which also applies to non-linear codes. We follow the proof technique of \cite{Gopalan12, DimDim12}, which is given for the single local parity per local group case, by modifying it for multiple local parity nodes per local group. The bound derived in this section gives the bound presented in \cite{PKLK12} as a special case without the assumption of having a systematic code. As noted in \cite{DimDim12}, the bound on $d_{\min}$ establishes a resilience vs. per node storage trade-off, where per node storage $\alpha$ can be increased over $\Mc/k$ to obtain higher minimum distance. This is of particular interest in the design of codes having both local-repairability and strong resilience to node failures.

We then propose a general code construction for LRCs that achieve the derived bound on minimum distance of such codes. We use Gabidulin codes along with MDS array codes to obtain this construction.


\subsection{Upper bound on $d_{\min}$ for an $(r, \delta, \alpha)$ locally repairable code}
\label{subsec:d_min_upper}

Here, we state a generic upper bound on the minimum distance of an $(r, \delta, \alpha)$ LRC. We use the alternative definition of minimum distance $d_{\min}$ provided in Section~\ref{subsec:vec}.

\begin{theorem}\label{thm:dmin}
Let $\cC$ be an $(r, \delta, \alpha)$ LRC over $\mathbb{F}$ of length $n$ and cardinality $|\F|^{\cM}$. Then, the minimum distance of $\cC$ satisfies
\begin{align}
\label{eq:upp_bound}
&d_{\min}(\mathcal{C}) \leq n - \ceilb{\frac{\mathcal{M}}{\alpha}} + 1 - \left(\ceilb{\frac{\mathcal{M}}{r\alpha}}-1\right)(\delta - 1).
\end{align}
\end{theorem}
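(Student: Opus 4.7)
The plan is to invoke Definition~\ref{def:dmin}: it suffices to exhibit a set $\mathcal{A}\subseteq [n]$ with $H(\cv_{\mathcal{A}})<\cM$ and
\[
|\mathcal{A}|\;\geq\;\ceilb{\tfrac{\cM}{\alpha}} - 1 + \left(\ceilb{\tfrac{\cM}{r\alpha}}-1\right)(\delta-1),
\]
since then $d_{\min}(\cC)\leq n-|\mathcal{A}|$, which is exactly \eqref{eq:upp_bound}.

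Following the technique of~\cite{Gopalan12,DimDim12}, I would grow $\mathcal{A}$ in two phases. In the \emph{group-addition phase}, initialize $\mathcal{A}_0=\emptyset$, and at step $i$ pick a local group $\Gamma_i$ (guaranteed by Definition~\ref{def:vectorLRC}) that contains at least one coordinate outside $\mathcal{A}_{i-1}$, and set $\mathcal{A}_i = \mathcal{A}_{i-1}\cup\Gamma_i$. By Remark~\ref{rem:lrc_def}, $H(\cv_{\Gamma_i})\leq r\alpha$, so
\[
H(\cv_{\mathcal{A}_i})-H(\cv_{\mathcal{A}_{i-1}})\;\leq\; H(\cv_{\Gamma_i\setminus\mathcal{A}_{i-1}})\;\leq\; r\alpha,
\]
while the size grows by $|\Gamma_i\setminus\mathcal{A}_{i-1}|\leq r+\delta-1$, with equality whenever $\Gamma_i$ is disjoint from $\mathcal{A}_{i-1}$. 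Running this for $s=\ceilb{\cM/(r\alpha)}-1$ steps keeps $H(\cv_{\mathcal{A}_s})\leq sr\alpha<\cM$ and (in the disjoint case) produces $|\mathcal{A}_s|\geq s(r+\delta-1)$. In the \emph{single-coordinate phase}, I would then append coordinates one at a time from outside $\mathcal{A}_s$; each contributes at most $\alpha$ to the entropy, so $j$ further symbols can be added while preserving $H(\cv_{\mathcal{A}})<\cM$ as long as $sr\alpha+j\alpha<\cM$, i.e., $j\leq \ceilb{\cM/\alpha}-sr-1$. Combining,
\[
|\mathcal{A}|\;\geq\; s(r+\delta-1) + \ceilb{\tfrac{\cM}{\alpha}} - sr - 1 \;=\; s(\delta-1)+\ceilb{\tfrac{\cM}{\alpha}}-1,
\]
which, on substituting $s=\ceilb{\cM/(r\alpha)}-1$, matches the required lower bound on $|\mathcal{A}|$ and thereby yields~\eqref{eq:upp_bound}.

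\textbf{Main obstacle.} The delicate point is that distinct local groups are not required to be disjoint, so the $\Gamma_i$ chosen at step $i$ may contribute strictly fewer than $r+\delta-1$ fresh coordinates, jeopardizing the clean per-step size gain that the above counting relied on. I would address this by adapting the covering argument in~\cite{Gopalan12,DimDim12}: maintain nested sets $\mathcal{A}_i\subseteq\mathcal{B}_i$ with $H(\cv_{\mathcal{A}_i})=H(\cv_{\mathcal{B}_i})$, where $\mathcal{A}_i$ accumulates ``information-bearing'' fresh coordinates while $\mathcal{B}_i$ additionally swallows the $\delta-1$ locally redundant coordinates of each newly chosen group ``for free,'' i.e., without any further increase in entropy. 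A bookkeeping argument on the pair $(\mathcal{A}_i,\mathcal{B}_i)$ then shows that $|\mathcal{B}_i|-|\mathcal{A}_i|$ grows by at least $\delta-1$ per group-addition step, which is precisely what is needed to reach the target size. Finally, one verifies that enough local groups and leftover coordinates exist to complete both phases, which is automatic whenever the right-hand side of~\eqref{eq:upp_bound} is at least one (the bound being vacuous otherwise).
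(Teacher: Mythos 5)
Your strategy coincides with the paper's: exhibit a set $\Ac$ with $H(\cv_{\Ac})<\cM$ and $|\Ac|\ge \ceilb{\frac{\cM}{\alpha}}-1+\left(\ceilb{\frac{\cM}{r\alpha}}-1\right)(\delta-1)$, then invoke Definition~\ref{def:dmin}, and your counting in the disjoint case hits the right target. However, the part you defer, overlapping local groups, is where the actual work lies, and your sketched repair does not close it. Your group-addition step only requires the chosen group to contain \emph{one} coordinate outside $\Ac_{i-1}$; such a group may contribute a single fresh coordinate, so it cannot supply ``$\delta-1$ locally redundant coordinates for free,'' and the claimed invariant that $|\Bc_i|-|\Ac_i|$ grows by at least $\delta-1$ per step fails as stated. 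What is needed (and what the paper enforces in step 3 of the algorithm in Fig.~\ref{fig:construction_set}) is to pick at each step a coordinate $j_i\notin\Ac_{i-1}$ with $|\Gamma(j_i)\setminus\Ac_{i-1}|\ge\delta-1$, together with an argument that such a coordinate exists whenever $H(\cv_{\Ac_{i-1}})<\cM$: otherwise every outside coordinate would lie in a group determined by its intersection with $\Ac_{i-1}$ (minimum distance $\ge\delta$ of $\cC|_{\Gamma(j)}$), giving $H(\cv_{[n]})=H(\cv_{\Ac_{i-1}})<\cM$, a contradiction. With that selection rule the correct per-step accounting is $a_i\ge \frac{h_i}{\alpha}+\delta-1$, where $a_i,h_i$ are the size and entropy increments, because $\delta-1$ of the newly added coordinates are functions of the remaining new ones together with the group coordinates already present; this inequality, not disjointness, is what drives the bound.

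Second, your rigid schedule of exactly $s=\ceilb{\frac{\cM}{r\alpha}}-1$ whole-group steps followed by single coordinates does not survive overlaps even after the above repair, because the last group added may push the entropy to $\cM$, forcing only a partial subset of it to be included and losing the $(\delta-1)$ surplus on that step. The paper treats this end-game explicitly (its Case 2): when the final addition must be truncated, the number of iterations is at least $\ceilb{\frac{\cM}{r\alpha}}$, so the bonus collected on the earlier steps still yields the required size. Some per-step plus termination analysis of this kind is unavoidable; the ``bookkeeping argument'' you invoke is precisely the missing content. (A smaller omission: when a candidate group has fewer than $\delta-1$ fresh coordinates, those coordinates can be absorbed with zero entropy cost, which is one clean way to organize the repair, but that observation has to be stated and used rather than assumed.)
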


\begin{proof}
In order to get  this upper bound on the minimum distance of an LRC, we utilize the dual definition of minimum distance of a code as given in \eqref{eq:def_dmin}. Similar to the proof in \cite{Gopalan12} and \cite{DimDim12}, we construct a set $\Ac \subseteq [n]$ such that
\begin{equation}
H(\cv_{\Ac}) < \mathcal{M}.
\end{equation}
This along with \eqref{eq:def_dmin} in Definition~\ref{def:dmin} give us the desired upper bound on $d_{\min}(\mathcal{C})$. Here, $\cv \in \Cc$ denote a codeword selected uniformly at random from $\F^\Mc$ codewords in $\cC$.

\begin{figure}
\algrule[3pt]
\begin{algorithmic}[1]
\STATE Set $\Ac_{0} = \emptyset$ and $i = 1$.
\WHILE{$H(\cv_{\Ac_{i-1}}) < \mathcal{M}$}
\STATE Pick a coded block $\cv_{j_i} \notin \Ac_{i-1}$ s.t. $|\Gamma(j_i)\backslash \Ac_{i-1}| \geq \delta-1$.
\IF{$H(\cv_{\Ac_{i-1}}, \cv_{\Gamma(j_i)}) < \mathcal{M}$}
\STATE set $\Ac_{i} = \Ac_{i-1}\cup \Gamma(j_i)$
\ELSIF{$H(\cv_{\Ac_{i-1}}, \cv_{\Gamma(j_i)}) \geq \mathcal{M}$ and $\exists \Bc \subset \Gamma(j_i)$ s.t. $H(\cv_{\Ac_{i-1}}, \cv_{\Bc}) < \mathcal{M}$}
\STATE set $\Ac_{i} = \Ac_{i-1} \cup \Bc$
\ELSE
\STATE $i = i+1$, \textbf{end while}
\ENDIF
\STATE $i = i+1$
\ENDWHILE
\STATE Output: $\Ac = \Ac_{i-1}$
\end{algorithmic}
\algrule[3pt]
\caption{Construction of a set $\Ac$ with $H(\cv_{\Ac}) < \mathcal{M}$ for an $(r, \delta, \alpha)$ LRC.}
\label{fig:construction_set}
\end{figure}

The algorithm of construction of set $\Ac$ is given in Fig.~\ref{fig:construction_set}.
Next, we show a lower bound on the size of the set $\Ac$, output of the algorithm described in Fig.~\ref{fig:construction_set}. Note that at each iteration of the while loop in Fig.~\ref{fig:construction_set}, the algorithm increases the size of the set $\Ac_{i-1}$ by at most $r+\delta - 1$ to get $\Ac_{i}$. For each $i$, define
\begin{align}
\label{eq:size}
a_{i} &= |\Ac_i| - |\Ac_{i-1}|.
\end{align}
and
\begin{align}
\label{eq:entropy}
h_{i} &= H(\cv_{\Ac_i}) - H(\cv_{\Ac_{i-1}}).
\end{align}

Assume that the algorithm terminates at $(\ell+1)^{th}$ iteration, i.e., $\Ac = \Ac_{\ell}$. Then, it follows from (\ref{eq:size}) and (\ref{eq:entropy}) that
\begin{align}
\label{eq:sum_size}
|\Ac| &= |\Ac_{\ell}| = \sum_{i=1}^{\ell}a_{i},
\end{align}
and,
\begin{align}
\label{eq:sum_ent}
H(\cv_{\Ac}) &= H(\cv_{\Ac_{\ell}}) = \sum_{i = 1}^{\ell}h_{i}.
\end{align}
Consider two cases depending on the way the algorithm in Fig.~\ref{fig:construction_set} terminates:

\textbf{Case 1:}
Assume that the algorithm terminates with the final set assigned at step $5$, i.e., after adding $\Gamma(j_\ell)$ to $\Ac_{\ell - 1}$. Now we have from $(r, \delta)$ locality of the code that
\begin{eqnarray}
\label{eq:ent_size}
h_i &=& H(\cv_{\Ac_i}) - H(\sv_{\Ac_{i-1}}) \nonumber \\
&=& H(\cv_{\Ac_{i-1}\cup (\Ac_{i}\backslash \Ac_{i-1})}) - H(\cv_{\Ac_{i-1}}) \nonumber \\
&=& H(\cv_{\Ac_{i-1}}) + H(\cv_{\Ac_{i}\backslash \Ac_{i-1}}| \cv_{\Ac_{i-1}}) - H(\cv_{\Ac_{i-1}}) \nonumber \\
&=& H(\cv_{\Ac_{i}\backslash \Ac_{i-1}}| \cv_{\Ac_{i-1}}) \nonumber \\
&\leq& (a_{i} - \delta + 1)\alpha.
\end{eqnarray}
The last inequality follows from the fact that any block in $\Gamma(j_i)$ can be written as a function of any set of $r$ blocks in $\Gamma(j_i)$ and the fact that we pick $i$ in step $3$ only if $|\Gamma(j_i)\backslash \Ac_{i-1}| \geq \delta-1$. Since at the end of $i^{th}$ iteration, we have all the elements of $\Gamma(j_i)$ added to $\Ac_{i}$, out of which $a_{i}$ blocks are added at the $i^{th}$ iteration. These newly added packets can not contribute more than $(a_i-(\delta-1))\alpha$ to the entropy of set $\Ac_{i}$ as $\delta-1$ of these packets are deterministic function of other newly added blocks of $\Gamma(j_i)$ and blocks of $\Gamma(j_i)$ that were already present in $\Ac_{i-1}$.
From (\ref{eq:ent_size}), we have that
\begin{equation}
\label{eq:ent_size2}
a_{i} \geq \frac{h_i}{\alpha} + \delta -1.
\end{equation}
Now using (\ref{eq:sum_size})
\begin{eqnarray}
\label{eq:der_1}
|\Ac| = |\Ac_{\ell}| &=& \sum_{i=1}^{\ell}a_{i} \nonumber \\
&\geq& \sum_{i=1}^{\ell}\left(\frac{h_i}{\alpha} + \delta -1\right) \nonumber \\
&=& \frac{1}{\alpha}\sum_{i=1}^{\ell}h_i + (\delta - 1)\ell. \nonumber \\
\end{eqnarray}
Similar to the proof of Papailiopoulos et al. \cite{DimDim12}, we have
\begin{equation}
\label{eq:ent_eq}
\sum_{i=1}^{\ell}h_i  = \left(\ceilb{\frac{\mathcal{M}}{\alpha}}\alpha -\alpha \right),
\end{equation}
and
\begin{equation}
\label{eq:l_bound}
\ell = \ceilb{\frac{\mathcal{M}}{r\alpha}} - 1.
\end{equation}
It follows from (\ref{eq:der_1}), (\ref{eq:ent_eq}), and (\ref{eq:l_bound}) that
\begin{equation}
\label{eq:upp_bound1}
|\Ac_{\ell}| \geq \ceilb{\frac{\mathcal{M}}{\alpha}} - 1 + \left(\ceilb{\frac{\mathcal{M}}{r\alpha}} - 1\right)(\delta - 1).
\end{equation}

\textbf{Case 2:}
The proof of this case is exactly similar to that in \cite{Gopalan12} except a few minor modification.
Consider that the algorithm terminates with the final set assigned at step $7$ in $\ell^{th}$ iteration. Since it reaches the step $7$, we have
\begin{equation}
H(\cv_{\Ac_{\ell-1}\cup \Gamma(j_{\ell})}) \geq \mathcal{M}.
\end{equation}
As the increment in the entropy is at most $r\alpha$ at each iteration, we have
\begin{equation}
\label{eq:l_bound2}
\ell \geq \ceilb{\frac{\mathcal{M}}{r\alpha}}
\end{equation}
For $i \leq \ell-1$, from (\ref{eq:ent_size2})
\begin{equation}
a_{i} \geq \frac{h_i}{\alpha} + \delta -1.\label{eq:l_bound3_1}
\end{equation}
For $i = \ell$,
\begin{equation}
\label{eq:l_bound3_2}
a_{\ell} \geq \frac{h_{\ell}}{\alpha}.
\end{equation}
Next, it follows from (\ref{eq:sum_size}), (\ref{eq:l_bound2}), (\ref{eq:l_bound3_1}), and (\ref{eq:l_bound3_2}) that
\begin{eqnarray}
\label{eq:upp_bound2}
|\Ac_{\ell}| &=& \sum_{i = 1}^{\ell}a_{i} \nonumber \\
&\geq& \sum_{i = 1}^{\ell-1}\left(\frac{h_i}{\alpha} + \delta -1\right) + \frac{h_{\ell}}{\alpha} \nonumber \\
&=& \frac{1}{\alpha}\sum_{i=1}^{\ell}h_i + (\ell-1)(\delta - 1) \nonumber \\
&\geq& \frac{1}{\alpha}\left(\ceilb{\frac{\mathcal{M}}{\alpha}}\alpha -\alpha \right) +  \left(\ceilb{\frac{\mathcal{M}}{r\alpha}}-1\right)(\delta-1) \\
&=& \ceilb{\frac{\mathcal{M}}{\alpha}} - 1 + \left(\ceilb{\frac{\mathcal{M}}{r\alpha}} - 1\right)(\delta - 1) \label{eq:upp_bound22}
\end{eqnarray}
where (\ref{eq:upp_bound2}) follows from (\ref{eq:l_bound2}) and (\ref{eq:ent_eq}). Now combining \eqref{eq:def_dmin}, (\ref{eq:upp_bound1}), and (\ref{eq:upp_bound22}), we get
\begin{equation}
d_{\min}(\mathcal{C}) \leq n - \ceilb{\frac{\mathcal{M}}{\alpha}} + 1 - \left(\ceilb{\frac{\mathcal{M}}{r\alpha}} - 1\right)(\delta - 1).
\end{equation}
\end{proof}

For the special case of $\delta=2$, this bound matches with the bound obtained in~\cite{DimDim12}. For the case of $\Mc = k$ and $\alpha=1$, the bound reduces to $d_{\min}(\cC) \leq n-k+1+(\ceilb{k/r}-1) (\delta-1)$, which is coincident with the bound presented in~\cite{PKLK12}.


\subsection{Construction of $d_{\min}$-optimal locally repairable codes}
\label{subsec:optimal_local_repairable}
In this subsection, we present a construction of an $(r,\delta,\alpha)$ LRC, which attains the bound given in Theorem~\ref{thm:dmin}. Given a file (over $\F_{q^m}$) to be stored on DSS, we encode the file in two steps before placing it on DSS. First, the file is encoded using a Gabidulin code. The codeword  (over  $\F_{q^m}$) of the Gabidulin code is partitioned into local groups; then, each of these local groups is encoded using an MDS array code over $\mathbb{F}_q$ (i.e., its generator matrix is defined over $\F_q$). This construction can be viewed as a generalization of the construction for scalar LRCs proposed in~\cite{SRV12}.

\textbf{Construction I.} Consider a file $\fv$ over $\F=\F_{q^m}$ of size $\cM \geq r\alpha$, where $m$ will be defined in the sequel. In addition, let $n$,~$r$,~and~$\delta$~be positive integers such that $r+\delta-1<n$. We use $g=\left\lceil\frac{n}{r+\delta-1}\right\rceil$ to denote the number of local groups in an LRC obtained by the construction proposed here. Depending on whether $(r + \delta - 1)$ divides $n$ or not, the details of the construction are as follows:

\begin{itemize}
  \item \textbf{Case 1 $\left((r+\delta-1)|n\right)$:}~ Let $N=\frac{nr\alpha}{r+\delta-1}$, $m\geq N$, and $\cC^{\rm{Gab}}$ be an $[N, K = \cM, D = N-\cM+1]_{q^m}$ Gabidulin  code. The encoding process of file $\fv$ consists of two steps:
\begin{enumerate}
\item First, encode $\cM$ symbols of $\fv$ to a codeword $\cv \in \cC^{\rm{Gab}}$, and partition $\cv$ into $g=\frac{N}{r\alpha}$ disjoint groups, each of size $r\alpha$. Each group is then stored on a different set of $r$ nodes, $\alpha$ symbols per node.
\item Next, generate $\delta-1$ parity nodes per group by applying an $[(r+\delta-1),r\alpha,\delta,\alpha]_q$ MDS array code
      on each local group of $r$ nodes, treating these $r$ nodes as input data blocks (of length $\alpha$) for the MDS array code.
\end{enumerate}
 At the end of this encoding process, we have $n=g(r+\delta-1)=\frac{N}{\alpha}+\frac{N}{r\alpha}(\delta-1)$ nodes, each storing $\alpha$ symbols over $\mathbb{F}_{q^{m}}$; and $g$ local groups, each of size $r-\delta+1$.

  \item  \textbf{Case 2 $\left(n~({\rm mod}~r+\delta-1)-(\delta-1)>0\right)$:}~Let $\beta_0$, $1\leq\beta_0\leq r-1$, be an integer such that $n=\lfloor\frac{n}{r+\delta-1}\rfloor(r+\delta-1)+\beta_0+\delta-1=(g-1)(r+\delta-1)+\beta_0+\delta-1$. Let $N=(g-1)r\alpha+\beta_0\alpha$, $m\geq N$, and $\cC^{\rm{Gab}}$ be an $[N, \cM, D = N-\cM+1]_{q^m}$ Gabidulin code. Two step encoding process for $\fv$ is described as follows:
\begin{enumerate}
\item  First, encode $\cM$ symbols of $\fv$ to a codeword $\cv \in \cC^{\rm{Gab}}$, and partition $\cv$ into $g-1$ disjoint groups of size $r\alpha$ and one additional group of size $\beta_0\alpha$. First $g-1$ groups are stored on $r(g-1)$ nodes, and the last group is stored on $\beta_0$ nodes, each node storing $\alpha$ symbols.
\item Next, generate $\delta-1$ parity nodes per group by applying an $[(r+\delta-1),r\alpha,\delta,\alpha]_q$ MDS array code   on each of the first $g-1$ local groups of $r$ nodes, and by applying a $[(\beta_0+\delta-1),\beta_0\alpha,\delta,\alpha]_q$ MDS array code  on the last local group.
\end{enumerate}
 At the end of the second step of encoding, we have $n=(g-1)(r+\delta-1)+(\beta_0+\delta-1)=\frac{N}{\alpha}+\left\lceil\frac{N}{r\alpha}\right\rceil(\delta-1)$ nodes, each storing $\alpha$ symbols over $\mathbb{F}_{q^{m}}$; and $g$ local groups, $g-1$ of which have size $r-\delta+1$ and one group of size $\beta_0+\delta-1$.
\end{itemize}
The output of the Construction I is denoted by $\cC^{\rm loc}$.
\begin{remark}
Note that the output of the first encoding step generates the encoded data stored on $rg$  and $r(g -1) + \beta_0$ nodes in case $1$ and case $2$, respectively. Each of these nodes stores $\alpha$ symbols of a ({\em folded}) Gabidulin codeword.
\end{remark}

Before proceeding to show the optimality of Construction I with respect to the minimum distance bound in \eqref{eq:upp_bound}, we present the following result which we use later in this paper.

\begin{lemma}
\label{lem:linearized_propertyI}
Let $s_i$, for $i \in [g]$, denotes the number of nodes observed in $i$-th local group of a DSS employing $\Cc^{\rm loc}$. Then, the data stored on the observed nodes corresponds to the evaluations of the underlying linearized polynomial $\mathpzc{f}(\cdot)$ at $\sum_{i = 1}^{g}\min\{s_i, r\}\alpha$ and $\sum_{i = 1}^{g-1}\min\{s_i, r\}\alpha + \min\{s_g, \beta_0\}$ linearly independent (over $\F_q$) points from $\F_{q^m}$ in case 1 and case 2 of Construction I, respectively.
\end{lemma}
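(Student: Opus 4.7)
The plan is to apply Lemma~\ref{lem:linearized_property} one local group at a time, and then glue the per-group conclusions together using the $\mathbb{F}_q$-linear independence of the full set of Gabidulin evaluation points. The work splits naturally into a \emph{local} step (inside a single group) and a \emph{global} step (across groups).

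For the local step, I would fix a group index $i$ and recall from Construction~I that the $r+\delta-1$ nodes of group $i$ (respectively $\beta_0+\delta-1$ nodes for the last group in Case~2) are produced by applying the generator matrix, \emph{defined over $\mathbb{F}_q$}, of an $[r+\delta-1,\,r\alpha,\,\delta,\,\alpha]_q$ MDS array code (respectively $[\beta_0+\delta-1,\,\beta_0\alpha,\,\delta,\,\alpha]_q$) to the $r\alpha$ (respectively $\beta_0\alpha$) Gabidulin symbols allotted to that group. By the definition of $\mathcal{C}^{\rm Gab}$, those input symbols are evaluations of $\mathpzc{f}$ at a subset $Y_i\subset\mathbb{F}_{q^m}$ of $r\alpha$ (respectively $\beta_0\alpha$) of the $N$ points, which are linearly independent over $\mathbb{F}_q$ by choice of the Gabidulin code. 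Lemma~\ref{lem:linearized_property} then immediately asserts that the contents of any $s_i$ observed nodes of group $i$ correspond to evaluations of $\mathpzc{f}$ at $\min\{s_i,r\}\alpha$ (or $\min\{s_g,\beta_0\}\alpha$ in the last group of Case~2) points of $\mathbb{F}_{q^m}$ that are linearly independent over $\mathbb{F}_q$ and that lie in the $\mathbb{F}_q$-span $\langle Y_i\rangle$.

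For the global step, I would observe that the sets $Y_1,\ldots,Y_g$ partition the $N$ Gabidulin evaluation points, and that these $N$ points are $\mathbb{F}_q$-linearly independent in $\mathbb{F}_{q^m}$. Hence the subspaces $\langle Y_1\rangle,\ldots,\langle Y_g\rangle$ form an internal direct sum inside $\mathbb{F}_{q^m}$, and any family of $\mathbb{F}_q$-linearly independent vectors picked separately from each $\langle Y_i\rangle$ remains $\mathbb{F}_q$-linearly independent when concatenated. Combining this with the per-group count gives the totals
\[
\sum_{i=1}^{g}\min\{s_i,r\}\alpha \qquad (\text{Case~1}),\qquad \sum_{i=1}^{g-1}\min\{s_i,r\}\alpha+\min\{s_g,\beta_0\}\alpha \qquad (\text{Case~2}),
\]
as claimed.

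I do not anticipate a genuine obstacle: the heart of the argument---that $\mathbb{F}_q$-linear post-processing of evaluations of the $\mathbb{F}_q$-linear map $\mathpzc{f}$ yields evaluations of $\mathpzc{f}$ at $\mathbb{F}_q$-linear combinations of the input points---is already isolated in Lemma~\ref{lem:linearized_property}. The only care that needs to be taken is bookkeeping, namely applying that lemma with parameter $t=r$ for the $g$ (or $g-1$) generic groups and with $t=\beta_0$ for the possibly shorter final group in Case~2, and then invoking the direct-sum observation above to pass from local to global $\mathbb{F}_q$-independence.
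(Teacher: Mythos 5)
Your proposal is correct and follows essentially the same route as the paper's own proof: apply Lemma~\ref{lem:linearized_property} within each local group (with $t=r$, and $t=\beta_0$ for the short group in Case~2), and then conclude global $\F_q$-linear independence from the fact that the evaluation points assigned to distinct groups are linearly independent of one another, so the per-group counts simply add. Your explicit direct-sum phrasing of the gluing step is just a slightly more formal statement of what the paper asserts in one line.
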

\begin{proof}
Here, we present the proof only for case $1$, i.e., $(r + \delta - 1) | n$. The claim for case $2$ can be proved using the similar steps.
Note that the second step of Construction I encodes evaluations of a linearized polynomial at different sets of linearly independent (over $\F_q$) points using MDS array codes (defined over $\F_q$); therefore, Lemma~\ref{lem:linearized_property} is applicable.

Let $\{\mathpzc{f}(y_{(i-1)r\alpha + 1}),\ldots, \mathpzc{f}(y_{i r\alpha})\}$ denote the symbols of a Gabidulin codeword that are encoded using an MDS array code in $i$-th local group, for $i \in [g]$. It follows from Lemma~\ref{lem:linearized_property} that {\em any} $s \leq r$ nodes (vector symbols) inside a local group of $\Cc^{\rm loc}$ correspond to evaluations of the underlying linearized polynomial at $s\alpha$ linearly independent (over $\F_q$) points in the subspace spanned by $\yv_i = \{y_{(i-1)r\alpha + 1},\ldots, y_{i r\alpha}\}$. For $s>r$, on the other hand, any $s$ vector symbols of a local group can only have $r\alpha$ linearly independent points (over $\F_q$). Since points in $\yv = \{\yv_1,\ldots, \yv_g\}$ are linearly independent, i.e., the evaluation points of one group are linearly independent on the evaluation points of any other group, we have that the data obtained from $\sum_{i = 1}^{g}s_i$ nodes, $s_i$ nodes from $i$-th local group, corresponds to the evaluations of $\mathpzc{f}(\cdot)$ at $\sum_{i = 1}^{g}\min\{s_i, r\}\alpha$ linearly independent (over $\F_q$) points in $\F_{q^m}$.
\end{proof}

Lemma~\ref{lem:linearized_propertyI} also implies that \emph{any} $\delta- 1 +  i$ \emph{node erasures} in a local group correspond to $i\alpha$ \emph{rank erasures} (in the corresponding Gabidulin codeword). Next, we provide the conditions for parameters of the code $\Cc^{\rm loc}$ obtained from Construction I to be a $d_{\min}$-optimal $(r,\delta,\alpha)$ LRC.
\begin{theorem}
\label{thm:parameters}

Let $\Cc^{\rm loc}$ be an $(r, \delta, \alpha)$ LRC of length $n$ and cardinality $|\F|^\cM$ obtained by Construction I. Then,
\begin{itemize}
  \item If $ (r+\delta-1)|n$, then $\Cc^{\rm loc}$ over $\F=\F_{q^m}$, for $m\geq \frac{n r \alpha}{r+\delta-1}$ and $q\geq (r+\delta-1)$, attains the bound~(\ref{eq:upp_bound}).
  \item If $n(\textmd{mod } r+\delta-1)-(\delta-1)\geq
 \left\lceil\frac{\cM}{\alpha}\right\rceil(\textmd{mod } r)>0$,
  then $\Cc^{\rm loc}$ over $\F=\F_{q^m}$, for $m\geq \alpha\left(n-(\delta-1)\left(\left\lfloor\frac{n}{r+\delta-1}\right\rfloor+1\right)\right)$ and $q\geq (r+\delta-1)$, attains the bound~(\ref{eq:upp_bound}).
\end{itemize}
\end{theorem}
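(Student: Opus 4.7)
The plan is to verify the two parts of the claim separately: (i) $\Cc^{\rm loc}$ is indeed an $(r,\delta,\alpha)$ LRC, and (ii) its minimum distance meets the upper bound of Theorem~\ref{thm:dmin}. Part (i) is immediate from the construction: in both cases, each node is contained in a local group that is encoded via an MDS array code with minimum distance $\delta$ and comprised of at most $r+\delta-1$ nodes, so every node can be recovered from any $r$ other nodes of its group. Existence of the required MDS array codes is guaranteed by the hypothesis $q\geq r+\delta-1$, and the Gabidulin pre-code exists because $m\geq N$ in each case.

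For part (ii), the strategy is to show that $\Cc^{\rm loc}$ tolerates any $d^{\ast}-1$ node erasures, where $d^{\ast}=n-\lceil\Mc/\alpha\rceil+1-(\lceil\Mc/(r\alpha)\rceil-1)(\delta-1)$ is the value of the bound \eqref{eq:upp_bound}. Fix any set of surviving nodes with $s_i$ nodes in local group $i$ and $\sum_i s_i=n-(d^{\ast}-1)$. By Lemma~\ref{lem:linearized_propertyI}, the observed symbols are evaluations of the underlying linearized polynomial $\mathpzc{f}(\cdot)$ at
\begin{equation*}
L \;=\; \sum_{i=1}^{g}\min\{s_i,r\}\,\alpha \qquad\text{(Case 1)}
\end{equation*}
or the analogous expression ending with $\min\{s_g,\beta_0\}\alpha$ in Case 2, with all points linearly independent over $\F_q$. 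By Remark~\ref{rem:gabidulin_rec} and the MRD property, the file $\fv$ can be reconstructed as soon as $L\geq \Mc$. Hence it suffices to establish $L\geq \Mc$ for every erasure pattern of size $d^{\ast}-1$.

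The main technical step is therefore a combinatorial worst-case analysis. Writing $e_i$ for the number of erasures in group $i$, so that $\min\{s_i,r\}=r-(e_i-(\delta-1))^+$ in a full group (similarly for the short group in Case 2), the goal becomes showing
\begin{equation*}
\textstyle\sum_{i}\bigl(e_i-(\delta-1)\bigr)^+ \;\leq\; \bigl\lfloor N/\alpha\bigr\rfloor - \lceil \Mc/\alpha\rceil,
\end{equation*}
subject to $\sum_i e_i=d^{\ast}-1$ and the per-group cap $e_i\leq r+\delta-1$ (or $\beta_0+\delta-1$ for the short group). The adversary maximises the left-hand side by concentrating erasures in as few groups as possible, so the extremal pattern fills $t^{\ast}=\lceil(d^{\ast}-1)/(r+\delta-1)\rceil-1$ full groups and places the residual erasures in one further group. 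Substituting $d^{\ast}-1$ and using the identities $\lceil \Mc/\alpha\rceil=k'$, $\lceil\Mc/(r\alpha)\rceil=\lceil k'/r\rceil=g'$ will reduce the inequality to the arithmetic identity that $t^{\ast}=g'-1$ and the residual contributes exactly $k'-(g'-1)r$ to the rank loss, matching the budget $\lfloor N/\alpha\rfloor-k'$ that follows from the choice of $N$. The auxiliary hypothesis in Case 2, namely $n\pmod{r+\delta-1}-(\delta-1)\geq k'\pmod r>0$, is precisely what is needed to guarantee that the short group can absorb the residual erasures without violating its size bound $\beta_0+\delta-1$.

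The step I expect to be the main obstacle is the Case 2 accounting: the short group has only $\beta_0$ information slots rather than $r$, so one must verify both that the worst-case erasure pattern can still be realised under the hypothesis relating $n\pmod{r+\delta-1}$ and $k'\pmod r$, and that the field-size condition $m\geq \alpha\bigl(n-(\delta-1)(\lfloor n/(r+\delta-1)\rfloor+1)\bigr)$ is exactly $m\geq N$ so that the Gabidulin code is well-defined. With those verifications in place, the minimum distance lower bound $d_{\min}(\Cc^{\rm loc})\geq d^{\ast}$ follows, and combining with Theorem~\ref{thm:dmin} yields equality.
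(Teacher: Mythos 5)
Your overall route is exactly the paper's: establish locality from the per-group MDS array code, and then show that any $d^{\ast}-1$ node erasures, where $d^{\ast}$ is the right-hand side of \eqref{eq:upp_bound}, can be tolerated because, by Lemma~\ref{lem:linearized_propertyI}, they translate into at most $D-1=N-\mathcal{M}$ rank erasures of the underlying Gabidulin codeword (equivalently, the survivors still supply at least $\mathcal{M}$ evaluations at $\F_q$-independent points), with the worst case obtained by concentrating erasures in as few local groups as possible. The paper's appendix proof is precisely this, organized as a subcase analysis via $\mathcal{M}=\alpha(\alpha_1 r+\beta_1)+\gamma_1$, so there is no methodological divergence; the issue is in the one place where you commit to the closing arithmetic.

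The identities you predict will close the argument are wrong. Write $k'=\lceil\mathcal{M}/\alpha\rceil$, $g'=\lceil\mathcal{M}/(r\alpha)\rceil$ and $\sigma=k'-(g'-1)r\in\{1,\dots,r\}$. In Case 1 one finds $d^{\ast}-1=(g-g')(r+\delta-1)+(r+\delta-1-\sigma)$ with $\delta-1\le r+\delta-1-\sigma\le r+\delta-2$, so the extremal pattern erases $g-g'$ full groups, not $g'-1$ of them, and the residual group contributes $r-\sigma=g'r-k'$ (not $k'-(g'-1)r=\sigma$) to the rank loss in units of $\alpha$; the total is $(g-g'+1)r-\sigma=N/\alpha-k'$, meeting the budget with equality. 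Your stated identity already fails on the paper's own Example with $(\mathcal{M},n,r,\delta,\alpha)=(28,15,3,3,4)$: there $d^{\ast}-1=4$, so your $t^{\ast}=\lceil 4/5\rceil-1=0$, while $g'-1=2$, and the residual rank loss is $2\alpha$, not $\sigma\alpha=\alpha$. So if you substitute as planned the reduction "to the arithmetic identity $t^{\ast}=g'-1$" will not materialize; you need the corrected bookkeeping above (which is what the paper's three subcases on $\beta_1,\gamma_1$ amount to). Your remaining points are sound and match the paper: the Case 2 hypothesis $n\ (\mathrm{mod}\ r+\delta-1)-(\delta-1)\ge k'\ (\mathrm{mod}\ r)>0$ is what guarantees $\beta_0\ge\sigma$ so the short group can absorb the residual (or, equivalently, be fully erased in an alternative worst-case pattern, as the paper notes parenthetically), and the field-size conditions are exactly $m\ge N$ for the two values of $N$ in Construction I, with $q\ge r+\delta-1$ ensuring the MDS array codes exist.
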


\begin{proof}
The proof  is based on Lemma~\ref{lem:linearized_propertyI}. We note that any
 $n - \ceilb{\frac{\mathcal{M}}{\alpha}} - \left(\ceilb{\frac{\mathcal{M}}{r\alpha}}-1\right)(\delta - 1)$ node erasures correspond to at most $D-1$ rank erasures (erasures in the corresponding Gabidulin codeword) which can be corrected by the Gabidulin code $\cC^{\rm{Gab}}$.  See the details in Appendix~\ref{ap:appendix}.
  \end{proof}

  Specializing the Construction I to scalar case ($\alpha = 1$), we obtain explicit $(r, \delta)$ scalar LRCs for the settings of parameters, where only results on existence of scalar LRCs are present in literature. For scalar case, Construction~I employs MDS codes instead of MDS array codes after encoding information symbols $\fv$ with Gabidulin codes. The following corollary (of Theorem~\ref{thm:parameters}) summarizes our contributions towards designing scalar-LRCs.

\begin{corollary}
\label{cor.parameters_scalar}

Let $\Cc^{\rm loc}$ be a $(r, \delta)$ scalar LRC  obtained by Construction I. Then,
\begin{itemize}
  \item If $ (r+\delta-1)|n$, then $\Cc^{\rm loc}$ over $\F=\F_{q^m}$, for $m\geq \frac{nr}{r+\delta-1}$ and $q\geq (r+\delta-1)$, attains the bound~(\ref{scalarUpperBound}).
  \item If $n(\textmd{mod } r+\delta-1)-(\delta-1)\geq \cM (\textmd{mod } r) > 0$,
  then $\Cc^{\rm loc}$ over $\F=\F_{q^m}$, for $m\geq \left(n-(\delta-1)\left(\left\lfloor\frac{n}{r+\delta-1}\right\rfloor+1\right)\right)$ and $q\geq (r+\delta-1)$, attains the bound~(\ref{scalarUpperBound}).
\end{itemize}
\end{corollary}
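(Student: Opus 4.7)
The plan is to obtain Corollary~\ref{cor.parameters_scalar} as a direct specialization of Theorem~\ref{thm:parameters} to the scalar regime $\alpha=1$. When $\alpha=1$, Construction~I degenerates in a very clean way: each local group, after the outer Gabidulin pre-coding step, is encoded by a classical MDS code (rather than an MDS array code), because an $[(r+\delta-1), r\alpha, \delta, \alpha]_q$ MDS array code with $\alpha=1$ is just an $[(r+\delta-1), r, \delta]_q$ MDS code. Such an MDS code exists over any field of size $q \geq r+\delta-1$ (take, for instance, a Reed--Solomon code), which is exactly the alphabet-size requirement appearing in the corollary.

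First, I would verify that the minimum-distance upper bound (\ref{eq:upp_bound}) reduces to (\ref{scalarUpperBound}) under $\alpha=1$: the term $\lceil\cM/\alpha\rceil$ becomes $\cM$ and $\lceil\cM/(r\alpha)\rceil$ becomes $\lceil\cM/r\rceil$, giving
\[
d_{\min}(\Cc^{\rm loc}) \leq n - \cM + 1 - \left(\left\lceil\tfrac{\cM}{r}\right\rceil - 1\right)(\delta-1),
\]
which is exactly (\ref{scalarUpperBound}). Next, I would substitute $\alpha=1$ into the two cases of Theorem~\ref{thm:parameters}. The bound $m \geq \tfrac{n r \alpha}{r+\delta-1}$ becomes $m \geq \tfrac{n r}{r+\delta-1}$ for the divisible case, and the bound $m \geq \alpha\bigl(n - (\delta-1)(\lfloor n/(r+\delta-1)\rfloor+1)\bigr)$ becomes $m \geq n - (\delta-1)\bigl(\lfloor n/(r+\delta-1)\rfloor+1\bigr)$ for the non-divisible case. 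Similarly, the side condition $n \pmod{r+\delta-1} - (\delta-1) \geq \lceil \cM/\alpha\rceil \pmod r > 0$ collapses to $n \pmod{r+\delta-1} - (\delta-1) \geq \cM \pmod r > 0$, matching the hypothesis of the corollary exactly.

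Finally, I would note that the achievability argument from Theorem~\ref{thm:parameters} carries over essentially verbatim. The key structural input, Lemma~\ref{lem:linearized_propertyI}, holds for $\alpha=1$ by the same linearized-polynomial argument (with MDS codes in place of MDS array codes), since standard MDS codes over $\F_q$ are $\F_q$-linear and therefore preserve the property that evaluations at $\F_q$-linearly independent points remain $\F_q$-linearly independent inside the ambient field $\F_{q^m}$. Consequently, any $n - \cM - \bigl(\lceil \cM/r\rceil - 1\bigr)(\delta-1)$ node erasures translate to at most $N - \cM$ rank erasures in the underlying Gabidulin codeword, which are correctable by the MRD property. Since nothing new has to be proved beyond this bookkeeping, there is no genuine obstacle; the only care needed is to confirm that the rounded-down field-size expressions $m$ listed in the corollary are indeed at least the code length $N$ used by the Gabidulin pre-code in each case (which amounts to the routine arithmetic $N = \tfrac{nr}{r+\delta-1}$ in Case~1 and $N = (g-1)r + \beta_0$ in Case~2).
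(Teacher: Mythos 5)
Your proposal is correct and matches the paper's own (implicit) argument: the corollary is obtained exactly as you describe, by specializing Theorem~\ref{thm:parameters} and Construction~I to $\alpha=1$, where the local MDS array codes become ordinary $[(r+\delta-1),r,\delta]_q$ MDS codes (existing for $q\geq r+\delta-1$), the bound (\ref{eq:upp_bound}) collapses to (\ref{scalarUpperBound}), and the field-size and divisibility conditions reduce as you computed. Your final check that the listed values of $m$ equal the Gabidulin length $N$ in each case is the same routine bookkeeping the paper relies on, so nothing further is needed.
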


%

\begin{remark}
The required field size  $|\F|=q^m$ for the proposed construction should satisfy $m\geq N$, given a choice of $q \geq (r+\delta-1)$, where $N=\frac{nr}{r+\delta-1}$ for the first case and $N=\left(n-(\delta-1)\left(\left\lfloor\frac{n}{r+\delta-1}\right\rfloor+1\right)\right)$ for the second case. So, we can assume that $|\F|=q^N$. Here, the requirement of $q \geq (r + \delta - 1)$ is enforced as it is a sufficient condition for having an $(r + \delta - 1, r)$ MDS code.
\end{remark}

We illustrate the construction of $\Cc^{\rm loc}$ in the following examples. First, we consider the scalar case.

\begin{example}
Consider the following system parameters:
$$(\cM,n,r,\delta,\alpha)=(9,14,4,2,1).
$$
Here, we have $r+\delta-1 = 5$ and $14~({\rm mod}~5)-1=3$; therefore, the given choice of parameters corresponds to the second case of Construction I. Since $n=\left\lfloor\frac{14}{4+2-1}\right\rfloor\cdot (4+2-1)+(3+2-1)$, let $N=\left\lfloor\frac{14}{4+2-1}\right\rfloor \cdot4+3=11$. First, $\cM=9$ symbols over $\F=\F_{5^{11}}$  are encoded into a codeword $\cv$ of an $[11,9,3]_{5^{11}}$ Gabidulin code $\cC^{\rm{Gab}}$. This codeword is partitioned into three groups, two of size $4$ and one of size $3$, as follows: $\cv=(a_1,a_2,a_3,a_4| b_1,b_2,b_3,b_4|c_1,c_2,c_3)$. Then, we add one parity to each group by applying a $[5,4,2]$ MDS code over $\F_5$ in the first two groups and a $[4,3,2]$ MDS code over $\F_5$ in the last group. The symbols of $\cv$ with three new parities $p_a,p_b,p_c$ are stored on 14 nodes as shown in Fig~\ref{fig:construction1}.
\begin{figure}[t]
 \centering
 \includegraphics[width=1\columnwidth]{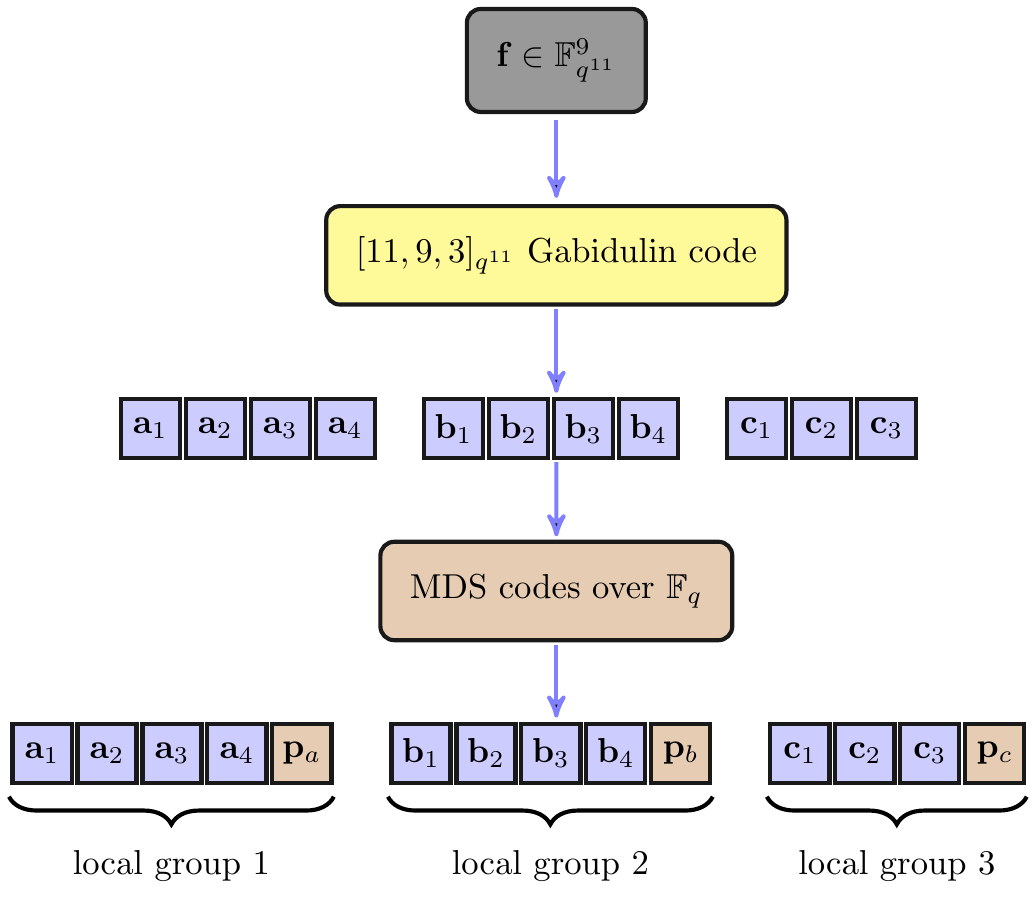}
 \caption{Illustration of the construction of a scalar ${(r = 4, \delta = 2, \alpha = 1)}$ LRC for $n = 14, \cM=9$ and ${d_{\min}=4}$.} \label{fig:construction1}
\end{figure}

It follows from Theorem~\ref{thm:dmin} that the minimum distance $d_{\min}$ of this code is at most $4$. To figure out the minimum distance of the code, we first observe from Lemma~\ref{lem:linearized_propertyI} that any $3$ node erasures translate into at most $2$ rank erasures. Then, observing that these rank erasures can be corrected by $\cC^{\rm{Gab}}$ (as it has $D=3$ here), we obtain that $d_{\min}=4$. In addition, when a single node fails, it can be repaired by using the data stored on all the  other nodes from the same local group.
\end{example}

Next, we describe Construction I for vector-LRCs with the help of the following example.

\begin{example}
\label{ex:array}
We consider a DSS with the following parameters:
$$(\cM,n,r,\delta,\alpha)=(28,15,3,3,4).
$$
First, $r+delta-1=5$, then $r+\delta-1|n$, and we consider the first case in Construction~I.
Let $N=\frac{15\cdot 3\cdot 4}{3+3-1}=36$ and $(a_1,\ldots, a_{12}, b_1,\ldots, b_{12}, c_1,\ldots, c_{12})$ be a codeword  of a $[36,28,9]_{q^{36}}$ code $\cC^{\textmd{Gab}}$, which is obtained by encoding $\cM = 28$ symbols over $\F=\F_{q^{36}}$ of the original file. The Gabidulin codeword is then partitioned into three groups $(a_1,\ldots, a_{12})$, $(b_1,\ldots, b_{12})$, and $(c_1,\ldots, c_{12})$. Encoded symbols from each group are stored on three storage nodes as shown in Fig.~\ref{fig:construction}. In the second stage of encoding, a $[5,12,3,4]_{q}$  MDS array code over $\F_{q}$  is applied on each local group to obtain $\delta-1 = 2$ parity nodes per local group. The coding scheme is illustrated in Fig.~\ref{fig:construction}.

%

It follows from Theorem~\ref{thm:dmin} that the minimum distance of an LRC with the parameters of this example is at most~$5$. Using Lemma~\ref{lem:linearized_propertyI}, we conclude that any $4$ node failures correspond to at most $8$ rank erasures in the corresponding codeword of $\cC^{\textmd{Gab}}$. Since the minimum rank distance of $\cC^{Gab}$ is $9$, these node erasures can be corrected by $\cC^{Gab}$; thus, the minimum distance of $\Cc^{\rm loc}$ is exactly $5$.
\end{example}

\begin{figure}[h]
 \centering
 \includegraphics[width=1\columnwidth]{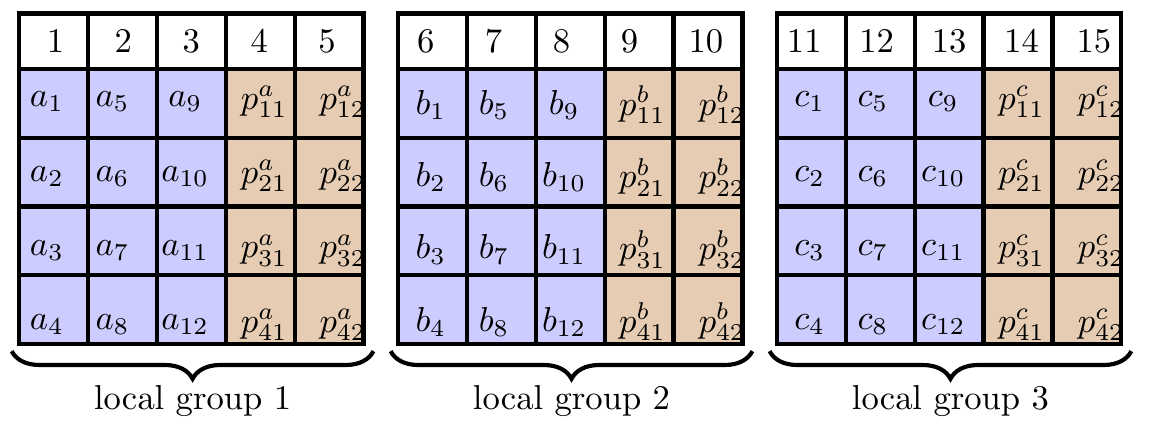}
 \caption{Example of an $(r = 3, \delta = 3, \alpha = 4 )$ LRC with $n = 15$, $\cM=28$ and $d_{\min}=5$.} \label{fig:construction}
\end{figure}

%
%


\section{Repair Bandwidth Efficient Locally Repairable Codes}
\label{sec:local_repair_bw}

In this section, we introduce the notion of repair bandwidth for LRCs, and present hybrid codes that allow for local repairs while minimizing repair bandwidth for given locality parameters. As pointed out in Section~\ref{subsec:lrc}, LRCs allow for na\"{\i}ve repair process, where a newcomer contacts $r$ nodes in its local group and downloads all the data stored on these $r$ nodes. The newcomer then regenerates the data stored on the failed node and stores it for future operations. In this section, following the line of work on repair bandwidth efficient DSS due to \cite{dimakis}, we consider allowing a newcomer to contact more than $r$ nodes in its local group in order to repair the failed node. The motivation behind this is to lower the repair bandwidth of an LRC. (This also improves the secrecy capacity of such codes as detailed in Section~\ref{sec:secrecy_local}.)  The main idea here is to apply a regenerating code in each local group. (We note that, in a parallel and independent work, Kamath et al.~\cite{KPLV12} also proposed utilizing regenerating codes to perform efficient local repairs.).

First, we provide an upper bound on the amount of data that can be stored on a locally repairable DSS while supporting a given repair bandwidth $d\beta$ and the maximum possible failure resilience (i.e., maximum possible minimum distance). We then  present $d_{\min}$-optimal LRCs which attain this bound by applying an MSR code in each local group instead of any MDS array code in the second step of Construction I. We denote such codes by MSR-LRC.

\subsection{File size upper bound for repair bandwidth efficient LRCs}
\label{subsec:local_repair_bw}

In the rest of this section, we restrict ourselves to LRCs that have the maximum possible minimum distance as described in (\ref{eq:upp_bound}). Moreover, we restrict ourselves to $(r + \delta -1)|n$ case. We remark that, for $(r+\delta-1)|n$, the upper bound on minimum distance for LRCs given in (\ref{eq:upp_bound}) is achievable only if the code have disjoint local groups~\cite{KPLV12}. Let $\Gc_1,\ldots, \Gc_{g}$ denote $g=\frac{n}{r+\delta-1}$ disjoint sets of indices of storage nodes with $\Gc_i$ representing $(r + \delta - 1)$ nodes in $i$-th local group. A failed node in a particular local group is repaired by contacting $d$ remaining nodes within the group, where $r\leq d \leq r+\delta-2$. A newcomer downloads $\beta$ symbols from each of these $d$ nodes during the node repair process.

Next, we perform the standard max-flow min-cut based analysis for locally repairable DSS by mapping it to a multicasting problem on a dynamic information flow graph. The information flow graph representation of a locally repairable DSS is obtained by modifying the information flow graph for classical DSS~\cite{dimakis}. This is first introduced in \cite{DimDim12} for na\"{\i}ve repair, where the newcomer contacts $r$ nodes during node repair. We assume a sequence of node failures and node repairs as shown in Fig.~\ref{fig:flow_graph}. We further assume that each local group encounters the same sequence of node failures and node repairs that are performed as result of these failures. Each data collector contacts $n-d_{\min}+1$ storage nodes for data reconstruction. A data collector is associated with the nodes it contacts for data reconstruction, and we represent the data collector by $(\Kc_1, \Kc_2,\ldots, \Kc_g)$. Here, $\Kc_i \subseteq \Gc_i$ is the set of indices of nodes that the data collector contacts in $i$-th local group and $\sum_{i=1}^{g}|\Kc_i| = n-d_{\min}+1$. Next, we derive an upper bound on the amount of data that can be stored on an $(n, n-d_{\min}+1)$-DSS employing an $(r, \delta, \alpha)$ LRC. Note that an $(n, n-d_{\min}+1)$-DSS allows a data collector to recover the original file from the data stored on any set of $n-d_{\min}+1$ nodes. This upper bound is used to derive a repair bandwidth vs. per node storage trade-off for minimum distance optimal $(r,\delta,\alpha)$ LRCs. In what follows, we add two more parameters in the representation of LRCs and denote them by the tuple $(r,\delta,\alpha, \beta, d)$. These parameters along with $n$  and  $\Mc$, the number of nodes and the file size, characterize a bandwidth efficient $d_{\min}$-optimal LRC.

\begin{figure*}[t]
 \centering
 \includegraphics[width=1\textwidth]{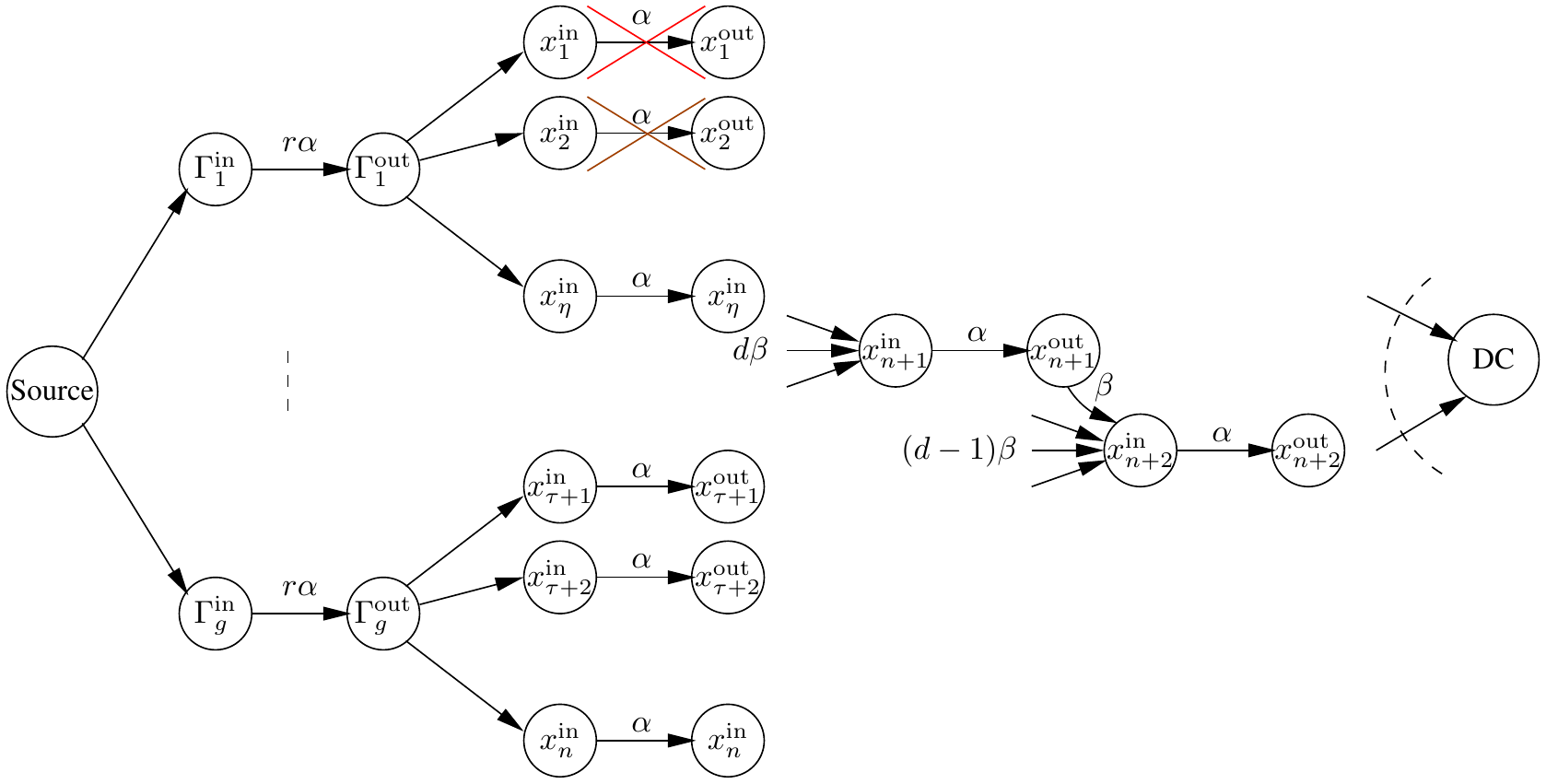}
 \caption{Flow graph for $(r,\delta,\alpha, \beta, d)$ LRC. In this graph, node pairs $\{\Gamma^{\rm in}_i,\Gamma^{\rm out}_i\}_{i=1}^{g}$ with edge of capacity $r\alpha$ enforce the requirement that each local group has at most $r\alpha$ entropy. Here $\eta$ and $\tau$ denote $r +\delta -1$ and $n - (r + \delta - 1)(g-1)$ respectively. The figure illustrates the case where  $x_1$ and $x_2$ sequentially fail and are replaced by introducing $x_{n+1}$ and $x_{n+2}$ respectively. The data collector (DC) is assumed to contact a set of $n-d_{\min}+1$ nodes for reconstruction of original data.}\label{fig:flow_graph}
\end{figure*}

\begin{theorem}
\label{thm:bw_locally}
For an $n$-node  DSS employing an $(r,\delta,\alpha,\beta, d)$ LRC to store a file of size $\Mc$, we have

\begin{align}
\label{eq:bw_upper}
&\mathcal{M} \leq
 \min \Bigg\{r\alpha, \sum_{i = 0}^{h-1}\min\{\max\{(d-i)\beta,0\}, \alpha\}\Bigg\} \\ \n
&\quad\quad\:+\:\sum_{j=1}^{\mu}\min \Bigg\{r\alpha, \sum_{i = 0}^{r+\delta-2}\min\{\max\{(d-i)\beta,0\}, \alpha\}\Bigg\},
\end{align}
where $\mu = \floorb{\frac{n-d_{min}+1}{r+\delta-1}}$ and $h = n-d_{\min}+1 - (r +\delta -1)\floorb{\frac{n-d_{\min}+1}{r+\delta-1}}$.
\end{theorem}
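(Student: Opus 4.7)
The plan is to prove this via the standard max-flow min-cut argument on the dynamic information flow graph $\Gc$ shown in Fig.~\ref{fig:flow_graph}, adapted to the locally repairable setting. Since any code storing $\Mc$ symbols corresponds to a multicast solution delivering rate $\Mc$ from $S$ to every data collector, by the cut-set bound $\Mc$ is upper bounded by the min-cut from $S$ to the worst-case DC. The goal is to exhibit a specific failure/repair pattern together with a specific DC choice whose min-cut equals the right-hand side of (\ref{eq:bw_upper}).

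First I would fix an adversarial failure sequence in which the $n-d_{\min}+1$ contacted nodes are all newcomers: in $\mu=\lfloor (n-d_{\min}+1)/(r+\delta-1)\rfloor$ of the $g$ local groups all $r+\delta-1$ original nodes fail and are replaced one by one (using the $d$ surviving peers within the same group, each contributing $\beta$ symbols), and in one more local group exactly $h = n-d_{\min}+1 - \mu(r+\delta-1)$ newcomers are introduced. The DC is then chosen to contact precisely these $n-d_{\min}+1$ newcomers. Because the local groups are disjoint and in-group repair is the only source of cross-node information flow inside a group, the min-cut decomposes into a sum of per-group contributions, which is where the two-term structure in (\ref{eq:bw_upper}) comes from.

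Next I would bound the per-group contribution for a group that supplies $s$ newcomers to the DC. There are two families of candidate cuts separating $S$ from these $s$ sinks. The first uses the single capacity-$r\alpha$ edge $\Gamma_i^{\rm in}\to\Gamma_i^{\rm out}$, which encodes the $(r,\delta)$-locality bottleneck and yields an $r\alpha$ upper bound. The second is the Dimakis-type chronological cut: order the $s$ newcomers by their repair times; the $i$-th (zero-indexed) newcomer can only receive $\min\{\alpha,(d-i)\beta\}$ new information across the cut, since $i$ of its $d$ helpers are already on the sink side and the node itself has capacity $\alpha$; when $d-i<0$ no helpers remain available outside the sink side, giving the $\max\{\cdot,0\}$ term. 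Summing these contributions gives $\sum_{i=0}^{s-1}\min\{\max\{(d-i)\beta,0\},\alpha\}$. Taking the minimum of the two cuts yields exactly the per-group expression in (\ref{eq:bw_upper}). Substituting $s=r+\delta-1$ for the $\mu$ completely-failed groups and $s=h$ for the last partial group, and summing, produces the claimed bound.

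The main technical obstacle is justifying, for each local group independently, that the chronological cut is indeed a valid $S$-DC separator in the dynamic graph: one must verify that removing the ``in-edges'' to the first $s$ newcomers from peers that will eventually be contacted by DC actually disconnects them from $S$, while the already-repaired peers (which are on the sink side) legitimately contribute zero residual flow. This requires showing that the locality constraint forces all in-group repairs to use only in-group nodes—hence no cross-group edges need to be considered in the cut—and checking that the order of failures can be chosen so that every newcomer's helper set lies entirely within its own group and does not double-count capacity already attributed to the $\Gamma_i^{\rm in}\to\Gamma_i^{\rm out}$ edge. Once this bookkeeping is in place, the sum over groups and the application of the max-flow min-cut theorem give the stated upper bound on $\Mc$.
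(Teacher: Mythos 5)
Your proposal is correct and follows essentially the same route as the paper: the same information flow graph, the same data collector choice (the $\mu$ fully contacted groups plus $h$ nodes in one additional group), and the same per-group cut analysis combining the capacity-$r\alpha$ group edge with the Dimakis-style chronological cut $\sum_{i=0}^{\tilde{k}-1}\min\{\max\{(d-i)\beta,0\},\alpha\}$. The bookkeeping you flag (in-group repairs only, disjoint groups, per-group cuts unioning into a valid global cut) is exactly what the paper's argument relies on as well.
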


\begin{proof}
Consider a data collector with $\Kc_1 = \Gc_1, \Kc_2 = \Gc_2,\ldots, \Kc_{\mu} = \Gc_{\mu}, \Kc_{\mu+2} = \ldots = \Kc_{g} = \emptyset$, and $\Kc_{\mu+1} \subset \Gc_{\mu + 1}$ such that $\left|\Kc_{\mu+1}\right| = h$. Now, the bound in (\ref{eq:bw_upper}) follows by finding various cuts in the information flow graph (Fig.~\ref{fig:flow_graph}). For each group, we consider cuts similar to the ones given in~\cite{dimakis}. Here, the data collector connects to $h$ nodes for the first term in \eqref{eq:bw_upper} and $r+\delta-1$ nodes for each of the terms in the summation of the second term in \eqref{eq:bw_upper}. Now, consider the $i$-th node out of $\tilde{k}$ nodes that data collector connects in a particular group. (Here, $\tilde{k}=h$, or $\tilde{k}=r+\delta-1$ as described above.) A cut between $x_i^{\rm in}$ and $x_i^{\rm out}$ for each node gives a cut-value of~$\alpha$. On the other hand, for $i=0,\cdots,\tilde{k}-1$, if the cut is such that $x_i^{\rm in}$ belongs to the data collector side, we consider that $(d-i)$ live nodes are connected together with $i$ nodes that have been previously repaired. In our setup, for such a cut, the cut-value evaluates to $\max\{(d-i)\beta,0\}$. For $i>d$, the repair node is considered to contact only the previously repaired nodes, and hence does not contribute to the maximum flow.
\end{proof}

Note that the codes that are under consideration have property that each local group has entropy of $r\alpha$ and any set of $r$ nodes has $r\alpha$ independent symbols. (See definition of $(r, \delta, \alpha)$ LRC in Section~\ref{subsec:lrc}.) Therefore, node repairs within each local group have to ensure this property. This implies that each local group and its repair can be related to an \textbf{$(r+\delta-1, r, \delta, \alpha, \beta, d)$} MSR code with a file of size $r\alpha$. Hence, when a data collector connects to any $r$ nodes in a group, it can get all the information that particular group has to offer. Therefore, similar to the analysis given in~\cite{dimakis} for the classical setup, the parameters need to satisfy
\begin{eqnarray}
\label{eq:msr_bound}
r\alpha = \sum_{i = 0}^{r-1}\min\{(d-i)\beta,\alpha\},
\end{eqnarray}
which leads to the requirement of $(d-i)\beta\geq \alpha$ for each $i=0,\cdots,r-1$. Then, minimum $\beta$ is obtained as $\beta^{*} = \frac{\alpha}{d-r+1}$. When node repairs are performed by downloading $\beta^{*}$ symbols from $d$ nodes for each failed node, the bound in (\ref{eq:bw_upper}) reduces to
\begin{align}
\label{eq:bw_upper_locally1}
\mathcal{M} &\leq \mu r\alpha  + \min\{h,r\}\alpha
\end{align}
where $\mu =   \floorb{\frac{n-d_{\min}+1}{r+\delta-1}}$ and $h = n-d_{\min}+1 - (r +\delta -1)\floorb{\frac{n-d_{\min}+1}{r+\delta-1}}$, as defined in Theorem~\ref{thm:bw_locally}.
This establishes the file size bound for bandwidth efficient $d_{\min}$-optimal LRCs.



\subsection{Construction of repair bandwidth efficient $d_{\min}$-optimal LRCs}
\label{subsec:local_bw_cons}

Now  it is clear that a node repair within a local group of an $(r,\delta, \alpha)$  LRC is performed by treating each local group as an $(r+\delta-1,r\alpha, \delta, \alpha, \beta^{*}, d)$ MSR regenerating code. 
Remarkably, the code obtained by Construction I described in Section~\ref{subsec:optimal_local_repairable} achieves the bound (\ref{eq:bw_upper_locally1}) when an MSR code is employed for the second stage of encoding and we have $\alpha|\cM$. We establish this claim in the following theorem.

\begin{theorem}
Let $\Cc^{\rm loc}$ be a code obtained from the Construction I described in Sec.~\ref{subsec:optimal_local_repairable} with $\alpha = \frac{\cM}{k}$ and an MSR code (over $\F_q$) employed in the second stage of encoding to generate local parities. If $\alpha|\cM$, then $\Cc^{\rm loc}$ attains the bound (\ref{eq:bw_upper_locally1}), i.e., the size of a file that can be stored by using $\Cc^{\rm loc}$ satisfies
\[\mathcal{M} = \mu r\alpha  + \min\{h,r\}\alpha
\]
where $\mu = \floorb{\frac{n-d_{\min}+1}{r+\delta-1}}$ and $h = n-d_{\min}+1 - (r +\delta -1)\floorb{\frac{n-d_{\min}+1}{r+\delta-1}}$.
\end{theorem}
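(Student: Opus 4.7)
The plan is to reduce the theorem to two ingredients already in place: (i) the $d_{\min}$-optimality of Construction~I guaranteed by Theorem~\ref{thm:parameters}, and (ii) the fact that any MSR inner code in a local group, by its very definition, achieves the per-group file size/repair trade-off used to derive (\ref{eq:bw_upper_locally1}). Given these, the claim becomes an arithmetic identity between the expression $n-d_{\min}+1$ produced by Construction~I and the quantity $\mu r\alpha+\min\{h,r\}\alpha$ appearing on the right-hand side of (\ref{eq:bw_upper_locally1}).

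First, I would invoke Theorem~\ref{thm:parameters} (the $(r+\delta-1)\mid n$ branch): since $\alpha\mid\cM$, writing $k=\cM/\alpha$, the code $\Cc^{\rm loc}$ produced by Construction~I has
\begin{equation*}
d_{\min}=n-k+1-\left(\left\lceil k/r\right\rceil-1\right)(\delta-1),
\end{equation*}
so
\begin{equation*}
n-d_{\min}+1=k+\left(\left\lceil k/r\right\rceil-1\right)(\delta-1).
\end{equation*}
Writing $k=(\lceil k/r\rceil-1)r+r'$ with $1\leq r'\leq r$, this rearranges to
\begin{equation*}
n-d_{\min}+1=\left(\left\lceil k/r\right\rceil-1\right)(r+\delta-1)+r'.
\end{equation*}
Because $1\leq r'\leq r<r+\delta-1$, the definitions of $\mu$ and $h$ in the theorem statement force $\mu=\lceil k/r\rceil-1$ and $h=r'$, and $\min\{h,r\}=r'$. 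Consequently
\begin{equation*}
\mu r\alpha+\min\{h,r\}\alpha=\bigl((\lceil k/r\rceil-1)r+r'\bigr)\alpha=k\alpha=\cM,
\end{equation*}
which is the asserted equality.

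Second, I would verify that $\Cc^{\rm loc}$ does operate as a genuine $(r,\delta,\alpha,\beta^{*},d)$ LRC, so that the upper bound (\ref{eq:bw_upper_locally1}) actually applies. This is where replacing the generic MDS array inner code by an MSR code matters: each local group is now an $(r+\delta-1,r\alpha,\delta,\alpha,\beta^{*},d)$ MSR code, which by definition supports exact repair of any one of its $r+\delta-1$ nodes by contacting any $d\in\{r,\ldots,r+\delta-2\}$ surviving nodes in the same group and downloading $\beta^{*}=\alpha/(d-r+1)$ symbols from each, while still preserving the group entropy $r\alpha$ and the `any $r$ out of $r+\delta-1$' recovery inside the group that Construction~I relies on.

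The main obstacle---or rather the only subtlety---is the arithmetic bookkeeping in the first step: one must pick the remainder convention $1\leq r'\leq r$ (rather than $0\leq r'\leq r-1$) so that $\lceil k/r\rceil$ comes out correctly for both $r\mid k$ and $r\nmid k$, and then check that $r'\leq r<r+\delta-1$ so that $\mu$ and $h$ take the stated form. No new coding-theoretic content is required beyond Theorems~\ref{thm:dmin}, \ref{thm:parameters}, and~\ref{thm:bw_locally}; the present theorem simply certifies that the two optimality results are simultaneously attainable by a single explicit construction.
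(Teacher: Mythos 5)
Your proposal is correct and follows essentially the same route as the paper: substitute the optimal minimum distance of Construction~I (guaranteed by Theorem~\ref{thm:parameters}, which still applies since an MSR code is an MDS array code) into the definitions of $\mu$ and $h$, and check arithmetically that $\mu r\alpha+\min\{h,r\}\alpha=\cM$, with the MSR inner code supplying the bandwidth-efficient local repair needed for the bound to be the relevant one. The only difference is bookkeeping: you use a single remainder convention $1\leq r'\leq r$, whereas the paper splits into the cases $\beta_1>0$ and $\beta_1=0$ of $\cM=\alpha(\alpha_1 r+\beta_1)$, which is the same computation.
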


\begin{proof} Lets assume that $\alpha|\cM$. Then, we can write $\cM=\alpha(\alpha_1r+\beta_1)$, for some integers $0\leq \alpha_1, \beta_1$, s.t. $\beta_1\leq r-1$. Then, by~(\ref{eq:upp_bound}),
\begin{itemize}
  \item If $\beta_1>0$ then
  $n-d_{\min}+1=(\alpha_1r+\beta_1)+\alpha_1(\delta-1)=(r+\delta-1)\alpha_1+\beta_1$,
  hence $h=\beta_1<r$ and
  $\floorb{\frac{n-d_{min}+1}{r+\delta-1}}r\alpha  + \min\{h,r\}\alpha=\alpha_1 r\alpha+\beta_1\alpha=\cM$.
  \item If $\beta_1=0$ then $n-d_{\min}+1=\alpha_1r+(\alpha_1-1)(\delta-1)=(r+\delta-1)(\alpha_1-1)+r$, hence $h=r$ and
  $\floorb{\frac{n-d_{min}+1}{r+\delta-1}}r\alpha  + \min\{h,r\}\alpha=(\alpha_1-1) r\alpha+r\alpha=\cM$.
\end{itemize}

This establishes that the output of Construction I $\Cc^{\rm loc}$ attains the file size bound given in (\ref{eq:bw_upper_locally1}) when an MSR code is used to generate its local parities.
\end{proof}

In the following example, we illustrate the aforementioned construction for repair bandwidth efficient LRCs for a particular choice for system parameters.

\begin{example}
Consider the parameters given in Example~\ref{ex:array}. Now we apply an $(r+\delta-1=5,r\alpha=12,d_{\min}=3,\alpha=4,\beta=2,d = 4)$ exact-MSR code (e.g., (5,3)-zigzag code~\cite{zigzag13}) in each group instead of an MDS array code. For these parameters, $h=1$, and by (\ref{eq:bw_upper_locally1}), $\cM\leq 2\cdot3\cdot 4+1\cdot 4=28$, thus the code attains the bound (\ref{eq:bw_upper_locally1}). Moreover, each failed node can be repaired bandwidth efficiently as an exact-MSR code is used within each local group.

\end{example}

\section{Secrecy in Locally Repairable DSS}
\label{sec:secrecy_local}

In this section, we analyze locally repairable DSS in the presence of security constraints. We restrict ourselves to $d_{\min}$-optimal locally repairable DSS and consider the eavesdropping model defined in Section~\ref{subsec:eavesdropper}. We first derive a generic upper bound on secrecy capacity, the amount of data that can be stored on a DSS without leaking any information to an eavesdropper, for an $(r,\delta,\alpha, \beta, d)$ LRC. We later specialize the bound to specific cases of system parameters. While addressing specific cases, we also present constructions of secure coding schemes that achieve the respective upper bounds for certain set of  system parameters.

Similar to Section~\ref{subsec:local_repair_bw}, we associate a data collector with $n-d_{\min}+1$ nodes it contacts to reconstruct the original file. Let $\Kc_i$ denote the indices of nodes that are contacted by the data collector in $i$-th local group, and $\Kc = \cup_{i=1}^{g}\Kc_i$ with $|\Kc| = n-d_{\min}+1$. Again, we classify eavesdropped nodes into two classes: $\Ec_1$ contains storage-eavesdropped nodes ($\ell_1$ nodes in total), and $\Ec_2$ contains download-eavesdropped nodes ($\ell_2$ nodes in total). Considering $i$-th local group, we denote the set of indices of storage-eavesdropped nodes in $i$th local group as $\Ec_1^i$ and download-eavesdropped nodes as $\Ec_2^i$. Here, we have $\Ec_1 = \cup_{i=1}^{g}\Ec^i_1$, $\Ec_2 = \cup_{i=1}^{g}\Ec^i_2$. Let $\ell_1^i=|\Ec^i_1|$ and $\ell_2^i=|\Ec^i_2|$ denote the number of storage-eavesdropped and download-eavesdropped nodes in $i$-th local group, respectively. Note that $\sum_{i=1}^{g} l_1^i = \ell_1$ and $\sum_{i=1}^{g}l_2^i = \ell_2$. We denote $\mathcal{X}$ to represent set of tuples $\left(\{\Ec^i_1\}_{i=1}^{g},\{\Ec^i_2\}_{i=1}^{g},\{\Kc_i\}_{i=1}^{g}\right)$ satisfying these requirements. In the following, we provide our generic upper bound on the secrecy capacity of $(r,\delta, \alpha, d, \beta)$ LRCs against an $(\ell_1,\ell_2)$-eavesdropper.

\begin{lemma}
\label{lem:local_secrecy_bound}
For a DSS employing an $(r, \delta, \alpha, \beta, d)$ LRC that is secure against an $(\ell_1,\ell_2)$-eavesdropper, we have
\begin{align}
\label{eq:local_secrecy_bound}
\mathcal{M}^{s} \leq \sum_{i = 1}^{g}H(\sv_{\Kc_i}|\sv_{\Ec^i_1}, \dv_{\Ec^i_2})~\forall~{\left(\{\Ec^i_1, \Ec^i_2, \Kc_i\}_{i=1}^{g}\right) \in \mathcal{X}},
\end{align}
where $\Xc$ denotes the set of tuples $(\{\Ec^i_1, \Ec^i_2, \Kc_i\}_{i=1}^{g})$ that are allowed under our model.
\end{lemma}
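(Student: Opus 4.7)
The plan is to adapt the entropy-chain argument used in the proof of Theorem~\ref{thm:sec_cap} to the local-group structure, avoiding explicit max-flow computations on an information flow graph. The two ingredients I will need are (i) the security assumption $I(\fv^{s}; \xv_{\Ec_{1}}, \dv_{\Ec_{2}}) = 0$, which gives $H(\fv^{s}) = H(\fv^{s} | \xv_{\Ec_{1}}, \dv_{\Ec_{2}})$; and (ii) the data-reconstruction property, namely that for any admissible $\Kc = \cup_{i=1}^{g}\Kc_{i}$ a data collector can recover $\fv^{s}$ from $\xv_{\Kc}$, so $H(\fv^{s} | \xv_{\Kc}) = 0$ and a fortiori $H(\fv^{s} | \xv_{\Ec_{1}}, \dv_{\Ec_{2}}, \xv_{\Kc}) = 0$.

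Combining these two ingredients through the standard add-and-subtract move, I would first write
\begin{align}
\Mc^{s} &= H(\fv^{s} | \xv_{\Ec_{1}}, \dv_{\Ec_{2}}) - H(\fv^{s} | \xv_{\Ec_{1}}, \dv_{\Ec_{2}}, \xv_{\Kc}) \n \\
&= I(\fv^{s}; \xv_{\Kc} | \xv_{\Ec_{1}}, \dv_{\Ec_{2}}) \leq H(\xv_{\Kc} | \xv_{\Ec_{1}}, \dv_{\Ec_{2}}).
\end{align}
The local structure is then exploited by applying the chain rule across the group decomposition $\xv_{\Kc} = (\xv_{\Kc_{1}}, \dots, \xv_{\Kc_{g}})$ and dropping conditioning terms arising from \emph{other} groups, both of which weaken the bound:
\begin{align}
H(\xv_{\Kc} | \xv_{\Ec_{1}}, \dv_{\Ec_{2}}) &\leq \sum_{i=1}^{g} H(\xv_{\Kc_{i}} | \xv_{\Ec_{1}}, \dv_{\Ec_{2}}) \n \\
&\leq \sum_{i=1}^{g} H(\xv_{\Kc_{i}} | \xv_{\Ec^{i}_{1}}, \dv_{\Ec^{i}_{2}}).
\end{align}
Since the tuple $(\{\Ec^{i}_{1}\}, \{\Ec^{i}_{2}\}, \{\Kc_{i}\}) \in \Xc$ was arbitrary, this yields the stated universal bound.

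There is essentially no hard step in this derivation; the proof is a short chain of entropy manipulations. The only care needed is to track the asymmetry between the storage-eavesdropped nodes (conditioned on via $\xv_{\Ec_{1}}$) and the download-eavesdropped ones (conditioned on via $\dv_{\Ec_{2}}$), and to verify that relaxing the conditioning from the full observation to its intra-group restriction is in the correct direction, which follows from the fact that removing conditioning variables can only increase conditional entropy. The real difficulty the lemma is setting up for lies in subsequent sections, where the per-group term $H(\xv_{\Kc_{i}} | \xv_{\Ec^{i}_{1}}, \dv_{\Ec^{i}_{2}})$ has to be evaluated explicitly (using MSR-type cut analyses within a group) and then matched by concrete secure-LRC constructions.
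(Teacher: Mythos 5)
Your proposal is correct and follows essentially the same route as the paper's proof: the security constraint plus the reconstruction property give $\Mc^{s} \leq H(\xv_{\Kc}\mid \xv_{\Ec_1},\dv_{\Ec_2})$, and the per-group bound follows by subadditivity together with dropping the cross-group conditioning, exactly as in the paper (which merely folds your two final inequalities into one step). No gaps.
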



\begin{proof}
Without loss of generality,  we can focus on sets of indices $\{\Ec^i_1\}_{i = 1}^{g}$ and $\{\Ec^i_2\}_{i = 1}^{g}$ such that $|\Ec^i_1 \cup \Ec^i_2| \leq r$ for the purpose of obtaining an upper bound on the secrecy capacity as eavesdropping $r$ nodes in a group gives the eavesdropper all the information that the group has to offer. As introduced in Section~\ref{sec:sys_model}, we represent stored and downloaded content at node $i$ (set $\Ac$) as $\xv_i$ and $\dv_i$ (respectively, $\xv_{\Ac}$ and $\dv_{\Ac}$). We assume that $(\Kc_1,\ldots, \Kc_g)$ such that $\Ec^i_1 \cup \Ec^i_2 \subseteq \Kc_i$ or $\Kc_i = \emptyset$. Note that we need that $|\Ec_1| + |\Ec_2| = \ell_1 + \ell_2 < \ceilb{\frac{\Mc}{\alpha}}$ in order to have a non-zero secure file size.

\begin{align}
H(\fv^{s}) &= H(\fv^{s}|\xv_{\Ec_1}, \dv_{\Ec_2}) \label{eq:sec1}& \\
&= H(\fv^{s}|\xv_{\Ec_1}, \dv_{\Ec_2}) - H(\fv^{s}|\xv_{\Ec_1}, \dv_{\Ec_2}, \xv_{\Kc})  \label{eq:rec1} &\\
& =  I(\fv^{s}; \xv_{\Kc}|\xv_{\Ec_{1}}, \dv_{\Ec_{2}}) \n &\\
&\leq H(\xv_{\Kc}|\xv_{\Ec_1}, \dv_{\Ec_2})\n & \\
& = H(\xv_{\Kc_1},\ldots, \xv_{\Kc_g}|\xv_{\Ec^1_1},\ldots, \xv_{\Ec^g_1}, \dv_{\Ec^1_2},\ldots, \dv_{\Ec^g_2}) &\n\\
& \leq \sum_{i = 1}^{g}H(\xv_{\Kc_i}|\xv_{\Ec^i_1}, \dv_{\Ec^i_2}), &\label{eq:upp_break}
\end{align}
where \eqref{eq:sec1} follows from the secrecy constraint, i.e., $I(\fv^{s};\xv_{\Ec_1}, \dv_{\Ec_2}) = H(\fv^{s}) - H(\fv^{s}|\xv_{\Ec_1}, \dv_{\Ec_2}) = 0$, and \eqref{eq:rec1} follows by the ability of the data collector to obtain the original data from $\xv_{\Kc}$. Since we get one such upper bound for each choice of $(\{\Ec^i_1\}_{i=1}^{g},\{\Ec^i_2\}_{i=1}^{g},\{\Kc_i\}_{i=1}^{g})$, we have
\begin{align}
\Mc^{s} = H(\fv^{s}) &\leq \sum_{i = 1}^{g}H(\xv_{\Kc_i}|\xv_{\Ec^i_1}, \dv_{\Ec^i_2})~\forall~{\left(\{\Ec^i_1, \Ec^i_2, \Kc_i\}_{i=1}^{g}\right) \in \mathcal{X}}.\n
\end{align}
\end{proof}
Now we consider two cases depending on the number of local parities per local group: 1) single parity node per local group ($\delta = 2$) and 2) multiple parity nodes per local group ($\delta > 2$). The analysis of the first case, for single parity node per local group, shows that the secrecy capacity of such coding schemes degrade substantially in the presence of an eavesdropper that can observe the data downloaded during node repairs. The second case, multiple parity nodes per local group, allows the node repair to be performed with smaller repair bandwidth, which results in lower leakage to eavesdroppers observing downloaded data. In both cases, we use the vectors $\lv_{1} = (l^1_1,\ldots, l^g_1)$ and $\lv_{2} = (l^1_2,\ldots, l^g_2)$ to represent a pattern of eavesdropped nodes. In what follows, we use $\mu$ and $h$ as short-hand notation for $\floorb{\frac{n-d_{\min}+1}{r+\delta-1}}$ and $n-d_{\min} + 1 - (r+ \delta -1)\floorb{\frac{n-d_{min}+1}{r + \delta -1}}$, respectively.

\subsection{Case 1: Single local parity per local group $(\delta = 2)$}
\label{subsec:single_local}
For an LRC with single parity node per local group, a newcomer node downloads all the data stored on other nodes in the local group it belongs to during a node repair. Since the data stored on each node is a function of the data stored on any set of $r$ nodes in the local group it belongs to, all the information in the local group is revealed to an eavesdropper that observes a single node repair in the local group. In other words, we have $H(\sv_{\Gc_i}|\dv_{\Ec_2^i}) = 0 \Rightarrow H(\sv_{\Kc_i}|\dv_{\Ec_2^i}) = 0$, whenever $\Ec_2^i \neq \emptyset$. We use this fact to present the following result on the secrecy capacity of an LRC with $\delta = 2$.

\begin{theorem}
\label{thm:secrecy_capacity_single}
The secrecy capacity of an $(r, \delta = 2, \alpha, \beta = \alpha, d = r)$ LRC against an $(\ell_1, \ell_2)$-eavesdropper is
\begin{align}
\label{eq:upp_delta_2}
\Mc^{s} = \left[\mu r + h - (\ell_2r +\ell_1)\right]^{+}\alpha,
\end{align}
where $[a]^{+}$ denotes $\max\{a,0\}$.
\end{theorem}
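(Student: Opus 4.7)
My plan is to combine a converse bound from Lemma~\ref{lem:local_secrecy_bound} with an achievability construction that prepends Gabidulin pre-coding to the single-parity LRC of Construction~I. I treat the nontrivial regime $\ell_1 + \ell_2 r < \mu r + h$ (the bound collapses to $\Mc^s = 0$ otherwise, and this is trivially achievable).

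For the converse, two structural facts drive the analysis. First, with $\delta = 2$ each local group is an $(r+1,r)$ MDS code over $\F_{q^\alpha}$, so any $r$ of its nodes are information-theoretically independent and together carry $r\alpha$ symbols. Second, since $d\beta = r\alpha$ matches the entire information in a group, observing the download for a single node repair inside a group reveals the whole group, i.e.\ $H(\sv_{\Gc_i}\mid \dv_{\Ec_2^i}) = 0$ whenever $\ell_2^i \geq 1$. I will select a worst-case configuration by placing one download-eavesdropped node in each of $\ell_2$ distinct groups and one storage-eavesdropped node in each of $\ell_1$ additional distinct groups, and by choosing a data collector that contacts the $\ell_1 + \ell_2$ eavesdropped groups entirely, plus $\mu - \ell_1 - \ell_2$ further full groups and $h$ nodes of one remaining group (so $\sum_i |\Kc_i| = \mu(r+1) + h = n - d_{\min} + 1$, which I will verify by substituting $\delta = 2$ and $\Mc = (\mu r + h)\alpha$ into Theorem~\ref{thm:dmin}). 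Plugging into Lemma~\ref{lem:local_secrecy_bound}, the $\ell_2$ download-eavesdropped groups each contribute $0$; each of the $\ell_1$ storage-eavesdropped groups contributes $(r-1)\alpha$; each of the $\mu - \ell_1 - \ell_2$ clean full groups contributes $r\alpha$; and the partial group contributes $h\alpha$. These sum to $(\mu r + h - \ell_1 - \ell_2 r)\alpha$; combined with $\Mc^s \geq 0$, this yields the $[\cdot]^+$ form.

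For achievability, I will mirror the template of Theorem~\ref{thm:msr_sec_cap}. Take the secret $\fv^s$ of size $\Mc^s = (\mu r + h - \ell_1 - \ell_2 r)\alpha$, append $\Mc - \Mc^s = (\ell_1 + \ell_2 r)\alpha$ i.i.d.\ uniform random symbols $\rv$, and encode $(\rv, \fv^s)$ via Construction~I: first by a length-$N$ Gabidulin code with underlying linearized polynomial $\mathpzc{f}$, then by an $(r+1, r)$ MDS code within each of the $g$ local groups. Exact local repair with $d = r$, $\beta = \alpha$ follows from the MDS property. Security uses Lemma~\ref{thm:SecrecyLemma}. By Lemma~\ref{lem:linearized_propertyI}, the $\ell_1$ storage observations contribute $\ell_1 \alpha$ linearly independent (over $\F_q$) evaluation points of $\mathpzc{f}$, while each download observation exposes the entire $r\alpha$-dimensional evaluation subspace of its group, giving $\ell_2 r \alpha$ additional independent points; hence $H(\ev) \leq (\ell_1 + \ell_2 r)\alpha = H(\rv)$. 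Given $\fv^s$, the eavesdropper subtracts its known contribution and is left with exactly $(\ell_1 + \ell_2 r)\alpha$ evaluations of a reduced linearized polynomial whose coefficients are $\rv$; by $\F_q$-linearity (Remark~\ref{rem:linearized_poly}) and the MRD decoding property (Remark~\ref{rem:gabidulin_rec}), these determine $\rv$ uniquely, so $H(\rv \mid \fv^s, \ev) = 0$. Lemma~\ref{thm:SecrecyLemma} then delivers zero information leakage.

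The delicate point is the converse: I must check that the feasibility constraint $\Ec_1^i \cup \Ec_2^i \subseteq \Kc_i$ or $\Kc_i = \emptyset$ is met by the chosen configuration in all parameter regimes, and verify that the boundary cases $\mu < \ell_1 + \ell_2$ (or $\ell_2$ exceeding the number of contacted groups) degrade gracefully by instead placing download-eavesdropped nodes inside the $\mu + 1$ contacted groups, forcing the Lemma~\ref{lem:local_secrecy_bound} bound to collapse to $0$. The achievability follows so closely the template of Theorem~\ref{thm:msr_sec_cap} that I anticipate no additional subtlety there.
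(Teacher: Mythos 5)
Your proposal follows essentially the same route as the paper: the converse is Lemma~\ref{lem:local_secrecy_bound} evaluated at a data collector contacting $\mu$ full groups plus $h$ nodes of one more group, using the key fact that with a single local parity one observed repair reveals the whole group ($H(\xv_{\Gc_i}\mid \dv_{\Ec_2^i})=0$), and the achievability is Gabidulin (linearized-polynomial) pre-coding of random symbols plus the secret, followed by per-group MDS encoding, with security via Lemma~\ref{thm:SecrecyLemma}, Lemma~\ref{lem:linearized_propertyI}, and interpolation of the reduced linearized polynomial. The one step that would fail as literally written is your converse configuration: placing the $\ell_1$ storage-eavesdropped nodes \emph{one per distinct group} (each contributing $(r-1)\alpha$) is infeasible, and the resulting bound is not tight, whenever $\ell_1+\ell_2$ exceeds the number of contacted groups even though the capacity is still positive --- e.g.\ $r=3$, $\mu=2$, $h=2$, $\ell_2=0$, $\ell_1=7<\mu r+h$ leaves only three contacted groups, and your placement yields only $5\alpha$ instead of $\alpha$; your fallback remark covers only the boundary where the capacity collapses to zero, not this regime. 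The paper avoids this by allowing an arbitrary allocation $l_1^{\ell_2+1},\ldots,l_1^{g}$ with $\sum_i l_1^i=\ell_1$ and $l_1^i\leq\min\{|\Kc_i|,r\}$ inside the contacted, non-download-eavesdropped groups, which is always feasible in the nontrivial regime since $\ell_1<(\mu-\ell_2)r+h$ and $h\leq r$, and it gives the same total $(\mu r+h-\ell_2 r-\ell_1)\alpha$; with that one adjustment your argument matches the paper's proof.
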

\begin{proof}
First, we use Lemma~\ref{lem:local_secrecy_bound} to show that RHS of \eqref{eq:upp_delta_2} serves as an upper bound on the secrecy capacity of an LRC with $\delta = 2$. Consider a data collector with $\Kc_1 = \Gc_1, \Kc_2 = \Gc_2,\ldots, \Kc_{\mu} = \Gc_{\mu}, \Kc_{\mu+2} = \ldots = \Kc_{g} = \emptyset, \Kc_{\mu+1} \subset \Gc_{\mu+1}$ such that $\left|\Kc_{\mu+1}\right| = h$; and an eavesdropper with eavesdropping pattern $\lv_{2} = (1, 1, \ldots, 1, 0, \ldots, 0)$ with ones at first $\ell_2$ positions and $\lv_{1} = (0,\ldots, 0, l_1^{\ell_2+1}, \ldots, l_1^{g})$ with zeros in first $\ell_2$ positions. Note that the choice of data collector is the same as that used in the proof of Theorem~\ref{thm:bw_locally}. (See Fig.~\ref{fig:lrc_single_bound}.)

\begin{figure*}[t]
 \centering
 \includegraphics[width=0.9\textwidth]{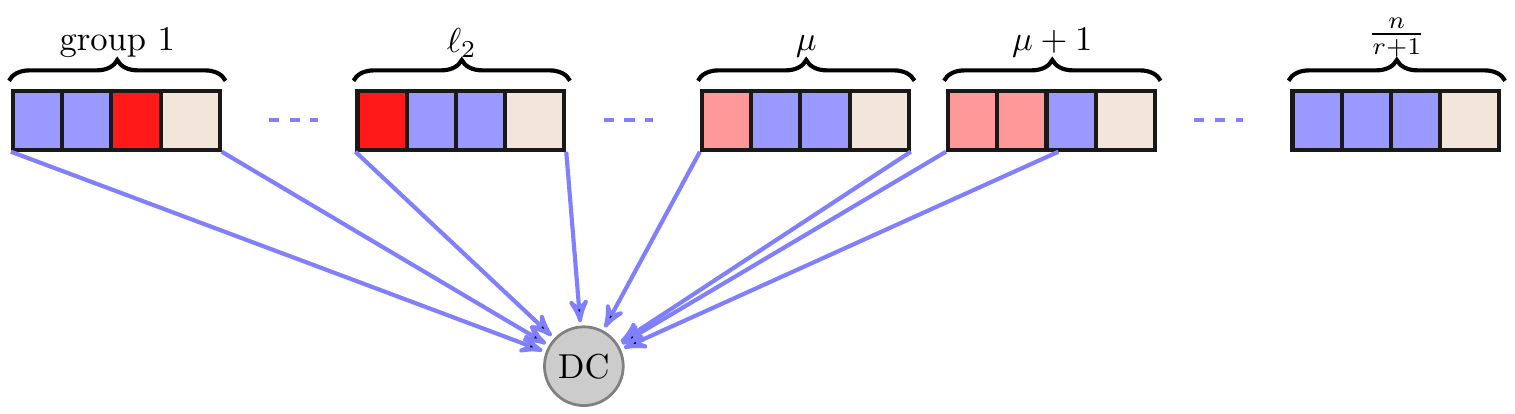}
 \caption{Illustration of the data collector and eavesdropped nodes considered to obtain the upper bound in \eqref{eq:upp_delta_2}. A configuration of eavesdropped node is highlighted using red color.  Dark red nodes and light red nodes correspond to the nodes in $\Ec_2$ and $\Ec_1$, respectively. For the purpose of obtaining the desired upper bound, we consider the scenario with one eavesdropped node repair (dark red colored node) in each of the first $\ell_2$ local groups.} \label{fig:lrc_single_bound}
\end{figure*}

In order to complete the proof, we present a secure coding scheme that shows tightness of the upper bound stated above by allowing to store a $\left[\mu r + h - (\ell_2r +\ell_1)\right]^{+}\alpha$ symbols (over $\F_{q^m}$) long file $\fv^{s}$ on a locally repairable DSS without leaking any information to an $(\ell_1, \ell_2)$-eavesdropper. Without loss of generality we assume that $\mu r + h - (\ell_2r +\ell_1) > 0$.
\begin{enumerate}
\item Append $(\ell_2r +\ell_1)\alpha$ random symbols (independent of file $\fv^{s}$) over $\mathbb{F}_{q^m}$, $\mathbf{r} = (r_1, \ldots, r_{(\ell_2r +\ell_1)\alpha})$, with $\left(\mu r + h - (\ell_2r +\ell_1)\right)\alpha$ symbols of file $\fv^s = (f_1,\ldots, f_{\left(\mu r + h - (\ell_2r +\ell_1)\right)\alpha})$.
\item Encode these $\Mc = \left(\mu r + h \right)\alpha$ symbols (including both $\rv$ and $\fv^{s}$) using an $[\Mc, \Mc, 1]_{q^m}$ Gabidulin code following encoding process specified in Section~\ref{subsec:gabidulin}.
\item Encode $\Mc$ symbols of the Gabidulin codeword using a $d_{\min}$-optimal $(r, \delta =2, \alpha)$ LRC, e.g., coding scheme proposed in \cite{DimDim12} or the construction I presented in Section~\ref{subsec:optimal_local_repairable} with $\delta = 2$.
\end{enumerate}

To prove secrecy of the proposed scheme against an $(\ell_1, \ell_2)$-eavesdropper, we use Lemma~\ref{thm:SecrecyLemma}.  It follows from Lemma~\ref{thm:SecrecyLemma} that the file is secured against an $(\ell_1,\ell_2)$-eavesdropper if (i) $H(\mathbf{e}) \leq H(\mathbf{r})$ (which is trivially true as the eavesdropper observes at most $(\ell_2r +\ell_1)\alpha$ linearly independent symbols) and (ii) $H(\rv|\fv^{s},\ev) = 0$. It remains to show the latter requirement also holds.

Given $\ev$, the eavesdropper knows the evaluations of a linearized polynomial at $|\ev| = (\ell_2r + \ell_1)\alpha$ linearly independent (over $\F_q$) points in $\F_{q^m}$. This observation follows from the fact that local code within each group is an MDS array code, which allows us to use Lemma~\ref{lem:linearized_propertyI}. When eavesdropper is also given access to $\fv^{s}$, she can remove the contribution of $\fv^{s}$ from the known $(\ell_2r + \ell_1)\alpha$ evaluations of the linearized polynomial
$$
\mathpzc{f}(y) = \sum_{i = 1}^{(\ell_2r + \ell_1)\alpha}r_iy^{q^{i-1}} + \sum_{i = (\ell_2r + \ell_1)\alpha+ 1}^{\left(\mu r + h\right)\alpha}f_{i - (\ell_2r + \ell_1)\alpha}y^{q^{i-1}}
$$
to obtain $(\ell_2r + \ell_1)\alpha$ evaluations $\tilde{\ev}$ of another linearized polynomial
$$
\tilde{\mathpzc{f}}(y) = \sum_{i = 1}^{(\ell_2r + \ell_1)\alpha}r_iy^{q^{i-1}}
$$
at  $(\ell_2r + \ell_1)\alpha$  linearly independent (over $\F_q$) points in $\F_{q^m}$. From these evaluations of $\tilde{\mathpzc{f}}(\cdot)$, using Remark~\ref{rem:gabidulin_rec}, the eavesdropper can decode the random symbols $\rv$, which implies that $H(\rv|\fv^{s},\ev) = 0$. This completes the proof of secrecy of the proposed coding scheme.
\end{proof}
\begin{remark}
It follows from Theorem~\ref{thm:secrecy_capacity_single} that every additional node repair that an eavesdropper is allowed to observe reduces the secrecy capacity of locally repairable DSS with single local parity by $r\alpha$. Moreover, DSS can not secure any data without leaking useful information to the eavesdropper for $\ell_2 > \mu$.
\end{remark}

\subsection{Case 2: Multiple local parity nodes per local group ($\delta >  2$)}
\label{subsec:secure_LRC_2}

For LRCs with $\delta >2$ that allow only na\"{\i}ve node repair, i.e., a newcomer downloads all the information from $r$ out of $r + \delta - 2$ surviving nodes in its local group, we again have $H(\sv_{\Gc_i}|\dv_{\Ec_2^i}) = 0 \Rightarrow H(\sv_{\Kc_i}|\sv_{\Ec^i_1}, \dv_{\Ec_2^i}) \leq H(\sv_{\Kc_i}|\dv_{\Ec_2^i}) = 0$,  whenever $|\Ec_2^i| > 0$. Therefore, the characterization of  the secrecy capacity for such LRCs is similar to the previous case, which we omit here for brevity. The main aim in the present section is to show that using regenerating codes within local groups (when $\delta>2$) can improve the secrecy capacity of DSS against an $(\ell_1, \ell_2)$-eavesdropper. Here, we focus on MSR-LRCs, the LRCs that give an MSR code when they are restricted to any local group. 


In the following, we assume that node repairs are performed with a newcomer node downloading $\beta$ symbols from each of $d$ surviving nodes it contacts in its own local group. Moreover, we assume that local node repairs are performed in a bandwidth efficient manner, i.e., $\beta = \frac{\alpha}{d - r + 1}$. Here, we remind again that  $\mu$ and $h$ denote $\floorb{\frac{n-d_{\min}+1}{r+\delta-1}}$ and $n-d_{\min} + 1 - (r+ \delta -1)\floorb{\frac{n-d_{min}+1}{r + \delta -1}}$, respectively.

\begin{proposition}
\label{prop:sec_lrc_case_2_bound}
For an $(r,\delta > 2, \alpha, \beta, d)$ MSR-LRC, the secrecy capacity against $(\ell_1, \ell_2)$-eavesdropping model satisfies
\begin{align}
\label{eq:upp_msr_case2_final}
\Mc^{s} &\leq  \left(\rho(r-\xi -1) - \hat{\xi}\right)\left(\alpha - \theta(\alpha, \beta, \xi+1)\right) \nonumber \\
&~~+ \left((\mu - \rho)(r- \xi) - \tilde{\xi}\right)\left(\alpha - \theta(\alpha, \beta, \xi)\right)\nonumber \\
&~~+\left(\min\{r,h\} - \nu - \tilde{\nu}\right)\left(\alpha - \theta(\alpha, \beta, \nu)\right),
\end{align}
where  $\xi$,~$\rho$,~and~$\nu$ are positive integers such that $\nu \leq \min\{h, \xi\}$, and $\ell_2 = \xi\mu + \rho + \nu$; and $\tilde{\nu}$,~$\hat{\xi}$,~and~$\tilde{\xi}$ are positive integers such that
\begin{align}
\tilde{\nu} &= \min\{(\min\{h, r\} - \nu), \ell_1\}, \n \\
\tilde{\xi} &= \min\{(\mu - \rho)(r - \xi), [\ell_1 - \tilde{\nu}]^{+}\}, \\
\hat{\xi} &=  \min\{\rho(r - \xi - 1), [\ell_1 - \tilde{\nu} - \tilde{\xi}]^{+}\}.
\end{align}
 Here, $\theta(\alpha, \beta, t)$ denotes the amount of information that an eavesdropper receives from one intact node (a node not eavesdropped) during the repair of  $t$ eavesdropped nodes in its local group.
\end{proposition}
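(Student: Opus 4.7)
The plan is to apply Lemma~\ref{lem:local_secrecy_bound} with a carefully chosen data collector and eavesdropping pattern, then bound each of the $g$ summands by invoking the MSR-within-a-local-group secrecy argument that underlies Theorem~\ref{thm:sec_cap} and Corollary~\ref{cor:intersec}. Because the code is an MSR-LRC, each local group is itself an $(r+\delta-1, r\alpha, \delta, \alpha, \beta, d)$ exact-repair MSR code, and all repair downloads for nodes in group $i$ come from nodes in $\Gc_i$; therefore $\dv_{\Ec_2^i}$ is a function of $\sv_{\Gc_i}$, and the per-group secrecy analysis of Section~\ref{sec:secrecy_BW} is directly applicable.

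First, I would fix the data collector exactly as in the proof of Theorem~\ref{thm:bw_locally}: $\Kc_j = \Gc_j$ for $j \in [\mu]$, $\Kc_{\mu+1} \subset \Gc_{\mu+1}$ with $|\Kc_{\mu+1}| = h$, and $\Kc_j = \emptyset$ for $j > \mu+1$. Next, I would distribute the $\ell_2$ download-eavesdropped nodes across the groups touched by the data collector as evenly as possible: place $\xi+1$ such nodes in each of $\rho$ of the full groups, $\xi$ in each of the remaining $\mu - \rho$ full groups, and $\nu$ in the partial group (the constraint $\nu \leq \min\{h,\xi\}$ and the decomposition $\ell_2 = \mu\xi + \rho + \nu$ enumerate all such balanced distributions). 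Since $\theta(\alpha,\beta,t)$ is non-decreasing in $t$, the per-node ``erosion factor'' $\alpha - \theta(\alpha,\beta,\ell_2^i)$ is largest where $\ell_2^i$ is smallest, so the $\ell_1$ storage-eavesdropped nodes should be packed greedily into the groups with the smallest $\ell_2^i$ in order to tighten the resulting upper bound on $\Mc^s$: fill $\tilde{\nu}$ slots in the partial group (where $\ell_2^{\mu+1} = \nu$), then $\tilde{\xi}$ slots across the $\mu-\rho$ groups with $\ell_2^i = \xi$, then $\hat{\xi}$ slots across the $\rho$ groups with $\ell_2^i = \xi+1$; the truncations built into the definitions of $\tilde{\nu}, \tilde{\xi}, \hat{\xi}$ handle both small and large $\ell_1$ regimes.

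With this pattern in place I would, for each local group, adapt the Theorem~\ref{thm:sec_cap} style chain of inequalities (subtract the eavesdropped entropy from the total group entropy, then bound the remaining group conditional entropy by expanding $\sv_{\Kc_i \setminus (\Ec_1^i \cup \Ec_2^i)}$ one node at a time and using the ``$\alpha - \dim(\sum_j \Dc_{i,j})$'' argument) to obtain
\[
H(\sv_{\Kc_i}\mid \sv_{\Ec_1^i},\dv_{\Ec_2^i}) \;\leq\; \bigl(\kappa_i - \ell_1^i - \ell_2^i\bigr)\bigl(\alpha - \theta(\alpha,\beta,\ell_2^i)\bigr),
\]
where $\kappa_i = r$ for the $\mu$ full groups (the MSR parameter) and $\kappa_{\mu+1} = \min\{r,h\}$ for the partial group (since any $r$ nodes inside a local group already carry the entire $r\alpha$ of group entropy, so the data collector gleans at most $\min\{h,r\}\alpha$ from it). Substituting the chosen values of $(\ell_1^i,\ell_2^i)$ into this bound, summing over $i \in [g]$, and collecting terms by the value of $\ell_2^i \in \{\nu, \xi, \xi+1\}$ yields the three-summand expression in~\eqref{eq:upp_msr_case2_final}.

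The main technical obstacle is establishing the within-group bound with the correct $\theta(\alpha,\beta,\ell_2^i)$ factor for general $\ell_2^i$: the closed-form evaluation of $\theta$ was given in Corollary~\ref{cor:intersec} only for $\ell_2^i \in \{1,2\}$, so here one must keep $\theta(\alpha,\beta,t)$ as the abstract quantity $\dim\!\bigl(\sum_{j=1}^{t}\Dc_{i,n+j}\bigr)$ provided by interference alignment in the local MSR code and then invoke Theorem~\ref{thm:sec_cap} in the linear form~\eqref{eq:sec_cap_gen}. A secondary bookkeeping obstacle is ensuring that the greedy placement of storage-eavesdropping never overflows a group (handled by the $\min$ and $[\,\cdot\,]^{+}$ operations in the definitions of $\tilde{\nu},\tilde{\xi},\hat{\xi}$) and that the partial-group analysis uses $\min\{r,h\}$ rather than $r$; once these are cleared, the final inequality is merely a rearrangement of the per-group bounds.
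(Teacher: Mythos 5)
Your proposal follows essentially the same route as the paper's own proof: the same data collector as in Theorem~\ref{thm:bw_locally}, the same distribution of the $\ell_2$ download-eavesdropped nodes ($\xi+1$ in $\rho$ full groups, $\xi$ in the rest, $\nu$ in the partial group), the per-group bound $H(\sv_{\Kc_i}\mid\sv_{\Ec_1^i},\dv_{\Ec_2^i})\leq(\min\{|\Kc_i|,r\}-l_1^i-l_2^i)(\alpha-\theta(\alpha,\beta,l_2^i))$ obtained by applying Theorem~\ref{thm:sec_cap} within each group, and the same choice $(\tilde{\nu},\tilde{\xi},\hat{\xi})$ for placing the $\ell_1$ storage-eavesdropped nodes before substituting into Lemma~\ref{lem:local_secrecy_bound}. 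Your added remarks on the monotonicity of $\theta$ and on keeping $\theta(\alpha,\beta,t)$ abstract for $t>2$ are consistent with how the paper handles these points, so the argument is correct as proposed.
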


\begin{proof}
First, we prove the upper bound on the secrecy capacity for an MSR-LRC as specified in \eqref{eq:upp_msr_case2_final}. For an MSR-LRC, we can apply Theorem~\ref{thm:sec_cap} within $i$-th local group to obtain
\begin{align}
\label{eq:upp_msr_case2}
H(\sv_{\Kc_i}|\sv_{\Ec^i_1}, \dv_{\Ec^i_2}) \leq \sum_{j=1}^{\min\{{|\Kc_i|,r}\}-l^i_1 - l^i_2}\left(\alpha - \theta\left(\alpha, \beta,l_2^i\right)\right)& \n \\
= \left(\min\{{|\Kc_i|,r}\}-l^i_1 - l^i_2\right)\left(\alpha - \theta\left(\alpha, \beta,l_2^i\right)\right).&
\end{align}

Now, we consider the data collector associated with the pattern $(\Kc_1,\ldots, \Kc_g)$ used in Section~\ref{subsec:single_local}, i.e., $\Kc_1 = \Gc_1, \Kc_2 = \Gc_2,\ldots, \Kc_{\mu} = \Gc_{\mu}$;~ $\Kc_{\mu+2} = \cdots = \Kc_{g} = \emptyset$; and $\Kc_{\mu+1} \subset \Gc_{\mu+1}$ such that $\left|\Kc_{\mu+1}\right| = h$; and an eavesdropper with eavesdropping pattern $\lv_{2}$ such that $l_2^1 = \cdots = l_2^{\rho} = \xi + 1$;~$l_2^{\rho+1} = \cdots =  l_2^{\mu} =  \xi$;~$l_2^{\mu + 2} = \cdots = l_2^g = 0$; and~$l_2^{\mu + 1} = \nu$. Given this particular choice of a DC and a pattern for download-eavesdropped nodes $\lv_{2}$, using Lemma~\ref{lem:local_secrecy_bound} and (\ref{eq:upp_msr_case2}), we get an upper bound for each pattern of storage-eavesdropped nodes $\lv_{1} = (l_1^1,\ldots, l_1^g)$ as follows:
\begin{align}
\label{eq:upp_msr_case3}
\Mc^{s} &\leq \sum_{i=1}^{\rho}(r-(l_1^i+ \xi+1))\left(\alpha - \theta(\alpha, \beta, \xi+1)\right) \nonumber \\
&~~~~~+ \sum_{i=\rho+1}^{\mu}(r-(l_1^i+ \xi))\left(\alpha - \theta(\alpha, \beta, \xi)\right)\nonumber \\
&~~~~~+(\min\{r,h\} - (l_1^{\mu+1}+\nu))\left(\alpha - \theta(\alpha, \beta, \nu)\right).
\end{align}
Now, we choose a specific pattern of $\ell_1$ storage-eavesdropped nodes such that
\begin{align}
l^{\mu + 2}_1 = \cdots = l^g_1 &= 0, \n \\
l^{\mu+1}_1 = \tilde{\nu} = &\min\{(\min\{h, r\} - \nu), \ell_1\}, \n \\
\sum_{i = \rho+1}^{\mu}l^{i}_1 = \tilde{\xi} = &\min\{(\mu - \rho)(r - \xi), [\ell_1 - \tilde{\nu}]^{+}\}, \n \\
\sum_{i = 1}^{\rho}l^{i}_1 = \hat{\xi} =  &\min\{\rho(r - \xi - 1), [\ell_1 - \tilde{\nu} - \tilde{\xi}]^{+}\}.
\end{align}
Substituting this choice of $\{l^i_1\}_{i = 1}^{g}$ in \eqref{eq:upp_msr_case3}, we obtain the bound in \eqref{eq:upp_msr_case2_final}. Note that we choose the pattern for storage-eavesdropped nodes such that $\Ec_1 \subset \Kc$.
\end{proof}

\begin{remark}
\label{remark:goparaju_lrc}
Restricting ourselves to MSR-LRCs with $d = r + \delta - 2$, it follows from Lemma~\ref{lem:intersection} that for $t \leq 2$ we have the following.
\begin{equation}
\label{eq:upp_step4}
\theta\left(\alpha, \beta, t\right) \geq \left\{\begin{array}{cc}
                \beta, & \textmd{ if } t =1 \\
                2 \beta-\frac{\alpha}{(\delta-1)^2}, & \textmd{ if } t = 2
              \end{array}\right.
\end{equation}

Here we also note that, using the upper bound on the secrecy capacity at the MSR point from \cite{GRCP13}, when $\Ec_2  \subset [k]$ and $d = r + \delta - 2$, we can use the following in the upper bound in Proposition~\ref{prop:sec_lrc_case_2_bound}.
$$
\theta\left(\alpha, \beta, t\right)  \geq \alpha - \left(1 - \frac{1}{\delta -1}\right)^{t}\alpha.
$$
\end{remark}

%

Now, we present a construction for a secure MSR-LRC with $d = r + \delta - 2$ against $(\ell_1, \ell_2)$ eavesdropping model. We then establish the claim of security for the construction. Since we are interested in LRCs with efficient repair bandwidth, similar to Section~\ref{sec:local_repair_bw}, we restrict ourselves to $(r +\delta - 1) | n$ case. The encoding process of the proposed coding scheme is as follows:
\begin{itemize}
\item Take $\kappa(\ell_1, \ell_2) = (\ell_2 + \ell_1)\alpha + \ell_2(r - 1)\beta$ i.i.d. random symbols $\rv$ that are distributed uniformly at random in $\F_{q^m}$,  and append file $\fv^s = (f^{s}_1,\ldots, f^{s}_{\Mc - \kappa(\ell_1, \ell_2)})$ of size $\Mc - \kappa(\ell_1, \ell_2) = (\mu r + \min\{h, r\})\alpha - \kappa(\ell_1, \ell_2)$ (over $\mathbb{F}_{q^m}$) to obtain $\fv = (\rv, \fv^{s}) = (f_1, f_2,\ldots, f_{(\mu r + \min\{h, r\})\alpha})$.
\item Encode $\cM = (\mu r + \min\{h, r\})\alpha$ symbols long $\fv$ using two stage encoding scheme of the construction for MSR-LRCs presented in Section~\ref{subsec:local_bw_cons}:
\begin{enumerate}
\item Encode $\fv$ to a codeword of an $[N = gr\alpha, K = \Mc, D = gr\alpha - \Mc+1]$ Gabidulin code over $\mathbb{F}_{q^m}$. (See Section~\ref{subsec:gabidulin} for encoding process of a Gabidulin code.)
\item Partition $gr\alpha$ symbols of the Gabidulin codeword into $g$ disjoint groups of size $r\alpha$ each. Then, apply an $(r+\delta-1, r)$ zigzag code (over $\F_q$) with $\alpha = (\delta-1)^{r}$ inside each local group at the second stage of encoding process.
\end{enumerate}
\end{itemize}


In the following result, we show that an LRC obtained by the construction proposed here, by combining secrecy pre-coding with  MSR-LRC from Section~\ref{subsec:local_bw_cons}, is secure against an $(\ell_1, \ell_2)$-eavesdropper.
\begin{proposition}
\label{prop:sec_lrc_case_2_construction}
Let $\cC$ be an MSR-LRC obtained from the construction described above. Then $\cC$ is secure against $(\ell_1, \ell_2)$ eavesdropping model. In particular, if $\ev$ denote observations of any eavesdropper belonging to $(\ell_1, \ell_2)$ eavesdropping model under consideration, then we have $I(\fv^{s};\ev) = 0$.
\end{proposition}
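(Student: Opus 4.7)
The plan is to apply the Secrecy Lemma (Lemma~\ref{thm:SecrecyLemma}), following the same template as in Theorem~\ref{thm:msr_sec_cap}. I must verify the two hypotheses
\begin{equation*}
H(\ev) \leq H(\rv) \quad \text{and} \quad H(\rv \mid \fv^{s},\ev)=0,
\end{equation*}
from which $I(\fv^{s};\ev)=0$ follows immediately. The first hypothesis is straightforward counting: each of the $\ell_1$ storage-eavesdropped nodes contributes at most $\alpha$ symbols in $\F_{q^m}$, and each of the $\ell_2$ download-eavesdropped nodes contributes at most $d\beta=(r+\delta-2)\beta=\alpha+(r-1)\beta$ symbols, since a newcomer downloads $\beta$ symbols from each of the $d=r+\delta-2$ surviving nodes in its local group. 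Summing gives $|\ev|\leq\ell_1\alpha+\ell_2(\alpha+(r-1)\beta)=\kappa(\ell_1,\ell_2)$, hence $H(\ev)\leq\kappa(\ell_1,\ell_2)\,m=H(\rv)$.

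For the second hypothesis, I exploit the $\F_q$-linear structure of the inner code and its repair procedure. Since the zigzag code in each local group is defined over $\F_q$ and zigzag repair is performed via $\F_q$-linear combinations of symbols at contacted nodes, every observation accessible to the eavesdropper is an $\F_q$-linear combination of the coordinates of the outer Gabidulin codeword. By Remark~\ref{rem:linearized_poly}, such combinations are themselves evaluations of the underlying linearized polynomial $\mathpzc{f}$ at points of $\F_{q^m}$. After the eavesdropper is handed $\fv^{s}$, she subtracts the deterministic contribution of $\fv^{s}$ from each observation in $\ev$, leaving evaluations of a residual linearized polynomial $\tilde{\mathpzc{f}}(y)=\sum_{i=1}^{\kappa(\ell_1,\ell_2)}r_iy^{q^{i-1}}$. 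By Remark~\ref{rem:gabidulin_rec}, recovery of $\rv$ from these residual evaluations, and hence the conclusion $H(\rv\mid\fv^{s},\ev)=0$, reduces to producing $\kappa(\ell_1,\ell_2)$ evaluations of $\tilde{\mathpzc{f}}$ at $\F_q$-linearly independent points of $\F_{q^m}$.

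The main obstacle is therefore establishing the $\F_q$-linear independence of the $\kappa(\ell_1,\ell_2)$ evaluation points that arise from the eavesdropper's observations. Here I plan to proceed group-by-group, using the fact (Lemma~\ref{lem:linearized_propertyI} and the Gabidulin partitioning in Construction~I) that the evaluation points attached to different local groups lie in $\F_q$-linearly independent subspaces of $\F_{q^m}$, so contributions from distinct groups combine without interference. Within a single local group, stored nodes already contribute $\alpha$ independent evaluations per node by Lemma~\ref{lem:linearized_propertyI}; for download-eavesdropped nodes I plan to prove an analogue of Lemma~\ref{lm:union_zigzag} adapted to the \emph{union of stored and downloaded content}, showing that the $d\beta$ symbols observed per repair event are $\F_q$-independent and, moreover, remain independent from the $\ell_1^i\alpha$ storage-eavesdropped evaluations in the same group. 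This requires a careful analysis of the zigzag repair sets $Y_j$ used in Section~\ref{subsec:secure_msr} and an intersection argument analogous to the one behind Lemma~\ref{lem:intersection}, which is where the delicate combinatorics lies.

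Once this independence claim is in hand, Remark~\ref{rem:gabidulin_rec} delivers $\rv$ from the $\kappa(\ell_1,\ell_2)$ residual evaluations, completing the verification of the Secrecy Lemma and yielding $I(\fv^{s};\ev)=0$, as claimed. The minimum-distance optimality and the local repair properties of $\cC$ are inherited directly from the underlying MSR-LRC construction of Section~\ref{subsec:local_bw_cons} and require no separate argument.
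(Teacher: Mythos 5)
Your overall route is the paper's own: invoke Lemma~\ref{thm:SecrecyLemma}, bound $H(\ev)\le \kappa(\ell_1,\ell_2)=H(\rv)$ by the same count, and argue that after subtracting the contribution of $\fv^{s}$ the eavesdropper is left with evaluations of a residual linearized polynomial at $\F_q$-independent points from which $\rv$ is interpolated; the paper's proof is exactly this sketch (it likewise defers the independence bookkeeping, citing Theorem~\ref{thm:msr_sec_cap} and Theorem~\ref{thm:secrecy_capacity_single}).

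The one step in your plan that would fail as stated is the within-group independence claim, namely that the $d\beta$ symbols seen per eavesdropped repair ``remain independent from the $\ell_1^i\alpha$ storage-eavesdropped evaluations in the same group.'' This is false for general patterns: if a storage-eavesdropped node serves as a helper in an eavesdropped repair within its own group, the $\beta$ symbols it transmits are $\F_q$-linear functions of its stored $\alpha$ symbols; two eavesdropped repairs inside one group also overlap because the zigzag repair sets intersect ($|Y_{j_1}\cap Y_{j_2}|>0$); and in any case one group can contribute at most $r\alpha$ independent evaluations. Hence $H(\rv\,|\,\fv^{s},\ev)=0$ cannot hold for every admissible pattern, which is what a literal application of Lemma~\ref{thm:SecrecyLemma} demands. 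The fix is what the paper's ``in the worst case'' phrasing implicitly does: prove full $\kappa(\ell_1,\ell_2)$-independence (and hence exact recovery of $\rv$ given $\fv^{s}$) for the maximal-leakage patterns, e.g.\ eavesdropped nodes lying in distinct local groups, and for any other pattern note that the observations span only $t<\kappa(\ell_1,\ell_2)$ independent evaluation points, so $H(\ev)\le t$ while $H(\rv\,|\,\fv^{s},\ev)\le \kappa(\ell_1,\ell_2)-t$ (given $\fv^{s}$, the $t$ evaluations pin down $\rv$ only up to a coset of dimension $\kappa(\ell_1,\ell_2)-t$, by fullness of rank of the corresponding Moore matrix); plugging these into the chain of inequalities in the proof of Lemma~\ref{thm:SecrecyLemma} still gives $I(\fv^{s};\ev)\le t-\kappa(\ell_1,\ell_2)+\left(\kappa(\ell_1,\ell_2)-t\right)=0$. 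With that adjustment, your group-by-group use of Lemma~\ref{lem:linearized_propertyI} plus a union-of-repair-sets count in the spirit of Lemma~\ref{lm:union_zigzag} closes the argument; the intersection machinery of Lemma~\ref{lem:intersection} is not needed for achievability (it enters only in the converse bounds).
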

\begin{proof}
In order to establish the secrecy claim of the proposed scheme, we apply Lemma~\ref{thm:SecrecyLemma}. Each node repair that an eavesdropper observes provides her with $\alpha + (r - 1)\beta$ independent symbols. Therefore, an eavesdropper observing at most $\ell_2$ node repairs can observe at most $\ell_2\alpha + \ell_2(r - 1)\beta$ independent symbols; combining these with $\ell_1\alpha$ additional symbols obtained from content stored on $\ell_1$ storage-eavesdropped nodes, the eavesdropper has at most $|\ev| = (\ell_1 + \ell_2)\alpha + \ell_2(r - 1)\beta$ independent symbols. This gives us that $H(\ev) \leq H(\rv) = (\ell_1 + \ell_2)\alpha + \ell_2(r - 1)\beta$, the first requirement to apply Lemma~\ref{thm:SecrecyLemma}.

Now, we also need to show that $H(\ev|\rv,\fv^{s}) = 0$, i.e., given $\fv^{s}$ in addition to $\ev$, the eavesdropper can decode the random symbols $\rv$. The proof of this part is similar to that in Theorem~\ref{thm:secrecy_capacity_single}, and follows from $\F_q$-linearity of linearized polynomials (used in the first stage of encoding of the proposed coding scheme which utilizes Gabidulin codes) and the observation that in the worst case an eavesdropper observes $\kappa(\ell_1, \ell_2) =  (\ell_1 + \ell_2)\alpha + \ell_2(r - 1)\beta$ linearly independent symbols. We skip the remaining proof for brevity as it involves the repetition of ideas previously used in this paper. (Essentially, it follows that given access to $\fv^{s}$ and $\ev$, an eavesdropper can remove the contribution of $\fv^{s}$ from $\ev$ to obtain $\tilde{\ev}$ and decode $\rv$ using polynomial interpolation for a suitable linearized polynomial.) This completes the proof of security of the proposed coding scheme.
\end{proof}

Combining Proposition~\ref{prop:sec_lrc_case_2_bound} and Proposition~\ref{prop:sec_lrc_case_2_construction}, we obtain the following result on the secrecy capacity of LRCs with multiple local parities per local group.
\begin{theorem}
\label{thm:sec_lrc_case_2_final}
For an $(r, \delta, \alpha, \beta, d)$ MSR-LRC with multiple local parities per local group, the secrecy capacity against an $(\ell_1, \ell_2)$-eavesdropper with $\ell_2 r + \ell_1 \leq \mu r + \min\{h, r\}$ is the following.
\begin{align}
\Mc^{s} = (\mu r + \min\{h, r\} - \ell_2 - \ell_1)\alpha - \ell_2(r - 1)\beta.
\end{align}
\end{theorem}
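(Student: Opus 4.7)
The plan is to prove the theorem by matching the converse from Proposition~\ref{prop:sec_lrc_case_2_bound} with the achievability from Proposition~\ref{prop:sec_lrc_case_2_construction}. The achievability direction is essentially free: the construction appends $\kappa(\ell_1,\ell_2) = (\ell_1+\ell_2)\alpha + \ell_2(r-1)\beta$ i.i.d.\ random pre-coding symbols to the information part before encoding, and the underlying MSR-LRC has total file size $\Mc = (\mu r + \min\{h,r\})\alpha$ by the file-size characterization of Section~\ref{sec:local_repair_bw}. Hence the scheme securely stores $\Mc - \kappa(\ell_1,\ell_2) = (\mu r + \min\{h,r\} - \ell_1 - \ell_2)\alpha - \ell_2(r-1)\beta$ symbols, which is exactly the claimed value under the non-triviality hypothesis $\ell_2 r + \ell_1 \le \mu r + \min\{h,r\}$.

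For the converse I would instantiate Proposition~\ref{prop:sec_lrc_case_2_bound} at the eavesdropping configuration that is \emph{maximally spread} over local groups, since that pattern forces the sum in \eqref{eq:local_secrecy_bound} to be as small as possible. When $\ell_2 \le \mu$ I would take $\xi = 0$, $\rho = \ell_2$, $\nu = 0$, so each of the first $\ell_2$ full local groups contains exactly one repair-eavesdropped node. The $\ell_1$ storage-eavesdropped nodes are then allocated first into the partial group (up to $\min\{h,r\}$ of them) and then into the intact full groups that contain no repair-eavesdropped node, which is exactly what makes $\hat{\xi}$ equal to zero in the proposition. Using $\theta(\alpha,\beta,0)=0$ and $\theta(\alpha,\beta,1)=\beta$, the bound collapses to
\begin{align*}
\Mc^s \le (\ell_2(r-1)-\hat{\xi})(\alpha-\beta)+((\mu-\ell_2)r-\tilde{\xi})\alpha+(\min\{r,h\}-\tilde{\nu})\alpha.
\end{align*}
A short case split on whether $\ell_1 \le \min\{h,r\}$ or $\ell_1 > \min\{h,r\}$, combined with the feasibility hypothesis, shows $\hat{\xi}=0$ and $\tilde{\nu}+\tilde{\xi}=\ell_1$ in both regimes; substituting and simplifying then yields exactly $(\mu r + \min\{h,r\} - \ell_1 - \ell_2)\alpha - \ell_2(r-1)\beta$.

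The one wrinkle is the boundary regime $\ell_2 = \mu + 1$, which the feasibility hypothesis only allows when $h \ge r$ and $\ell_1 = 0$. There a single repair-eavesdropped node must spill into the partial group, so the right parameter choice in Proposition~\ref{prop:sec_lrc_case_2_bound} becomes $\xi=1$, $\rho=0$, $\nu=1$; the three terms then all become multiples of $(\alpha-\beta)$ and sum to $(\mu+1)(r-1)(\alpha-\beta)$, which agrees with the claim. I expect the main obstacle to be exactly this book-keeping of $\tilde{\nu}$, $\tilde{\xi}$, $\hat{\xi}$ across the various regimes of $(\ell_1,\ell_2,h)$; once those values are pinned down, the proof is a direct substitution into Proposition~\ref{prop:sec_lrc_case_2_bound} and a reference to Proposition~\ref{prop:sec_lrc_case_2_construction}.
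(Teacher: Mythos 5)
Your proposal is correct and takes essentially the same route as the paper: the converse is obtained by instantiating Proposition~\ref{prop:sec_lrc_case_2_bound} with $\xi=\nu=0$, $\rho=\ell_2$ (so that $\hat{\xi}=0$ and $\tilde{\nu}+\tilde{\xi}=\ell_1$) and $\theta(\alpha,\beta,1)\geq\beta$, and achievability comes directly from Proposition~\ref{prop:sec_lrc_case_2_construction} with $\Mc^s=\Mc-\kappa(\ell_1,\ell_2)$. Your explicit case split on $\ell_1$ and your treatment of the boundary regime $\ell_2=\mu+1$ (with $h\geq r$, $\ell_1=0$) are in fact slightly more careful than the paper, which simply asserts these parameter choices as valid.
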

\begin{proof}
Given the restriction that $\ell_2r + \ell_1 \leq  \mu r + \min\{h, r\}$, we have $\xi = \nu = \hat{\xi} =  0$, $\rho = \ell_2$, and $\tilde{\xi} + \tilde{\nu} = \ell_1$ as a valid choice of various parameters in Proposition~\ref{prop:sec_lrc_case_2_bound}.

With these parameters the bound in \eqref{eq:upp_msr_case2_final} reduces to
$$
\Mc^{s} \leq (\mu r + \min\{h, r\} - \ell_2 - \ell_1)\alpha - \ell_2(r - 1)\beta.
$$

Proposition~\ref{prop:sec_lrc_case_2_construction} implies that this upper bound is achievable using the secure MSR-LRC proposed in this paper.
\end{proof}

\section{Conclusion}
\label{sec:con}

Distributed storage systems store data on a network of nodes. Due to their distributed nature, these systems may have to satisfy security and local-repairability constraints apart from ensuring resilience against node failures. Regenerating codes proposed for DSS address the node failure resilience while efficiently trading off storage for repair bandwidth. In this paper, we considered security and local-repairability aspects of coding schemes for DSS. The eavesdropping model analyzed in this paper belongs to the class of passive attack models, where the eavesdroppers observe the content of the nodes in the system. Accordingly, we considered an $(\ell_1,\ell_2)$-eavesdropper, where the content of any $\ell_1$ nodes and the downloaded information for any $\ell_2$ nodes are leaked to the eavesdropper. With such an eavesdropping model, we first focused on the classical setup which is resilient against single node failure at a time (without small locality constraints). Noting that the secrecy capacity of this setting is open at the minimum storage regenerating point, we provided upper bounds on the secure file size and established the secrecy capacity for any $(\ell_1,\ell_2)$ with $\ell_2\leq 2$ (whenever the eavesdropper can observe repairs of nodes in a particular set of $k$ nodes). Our coding scheme achieving this bound also provides a better rate compared to the existing schemes.
Then, we shifted focus on local-repairability constraint, and studied the general scenario of having multiple parity nodes per local group. For this setting, we derived a new upper bound on minimum distance of locally repairable codes. We also presented a coding scheme that is optimal with respect to the derived bound. Similar to the trade-off analysis for the classical setup, we then studied the bandwidth efficient LRCs, where we proposed a new bound and a coding scheme which is both $d_{\min}$-optimal and repair bandwidth efficient. This bandwidth efficient locally repairable setting is also analyzed under security constraints. We obtained a secure file size upper bound on such codes. We then presented  secure codes achieving the file size bound, and hence established the secrecy capacity, under special cases.

Here, we point out that another family of bandwidth efficient LRCs (MBR-LRCs) where restriction of an LRC to each local group is an MBR code has also been considered in \cite{KPLV12}.  In a more recent work~\cite{LocalMBR}, Kamath et al. present a different construction for MBR-LRC which is similar to the construction for MSR-LRC presented in Section~\ref{subsec:local_bw_cons}. On the similar note, following the mechanism given in~\cite{KPLV12}, scalar LRCs presented in this paper can also be used to design novel MBR-LRCs.

We list some avenues for further research here. 1) We first note that the novel bound on the secrecy capacity that we establish for the minimum storage point allows for counting part of the data downloaded as additional leakage, and hence provide a tighter bound than the existing ones. Recently, a new upper bound on secrecy capacity against an $(\ell_1, \ell_2)$-eavesdropper at the MSR point is presented in \cite{GRCP13}. The bound follows by substituting a tighter lower bound on the information leaked from an intact node to an eavesdropper during repair of $\ell_2$ nodes
~in Theorem~\ref{thm:sec_cap} in Section~\ref{sec:BW_improved}. Moreover, the coding scheme described in Section~\ref{subsec:secure_msr} also establish the tightness of the upper bound presented in \cite{GRCP13}; as a result, settling the problem of securing data against an $(\ell_1, \ell_2)$-eavesdropper at the MSR point when eavesdropped node repairs are restricted to a subset of $k$ systematic nodes. 
Yet, the secrecy capacity for eavesdroppers that can observe the parity nodes in addition to the systematic nodes remain as an open problem. Thus, new codes or improved bounds are of definite interest for secure MSR codes. 2) For LRCs, we utilized MRD coding as the secrecy pre-coding, which requires extended field sizes. Designing codes that achieve the stated bounds with lower field sizes is an interesting problem. Also a construction of $d_{min}$-optimal LRC with the parameters which are not covered by Theorem~\ref{thm:parameters} still remains an open problem.
3) One can also consider cooperative (or, multiple simultaneous node failure) repair~\cite{Hu:Cooperative10,Kermarrec:Repairing11,Shum:Existence11} in a DSS. Secure code design in such a scenario is recently considered in~\cite{KRV12}. Codes having both cooperative and locally repairable features can be studied. As distributed systems, storage problem may exhibit simultaneous node failures that need to be recovered with local connections. According to our best knowledge, this setting has not been studied (even without security constraints). Our ongoing efforts are on the design of coding schemes for DSS satisfying these properties.

%
%


\appendices

\section{Proof of Lemma~\ref{thm:SecrecyLemma}}
\label{app:SecrecyLemma}
\begin{proof}
The proof follows from the classical techniques given
by~\cite{Wyner:The75}, where instead of
$0$-leakage, $\epsilon$-leakage rate is considered.
(The application of this technique in DSS is first
considered in~\cite{SRK_globecom11}.)
We have
\begin{eqnarray}
I(\fv^{s};\ev) &=& H(\ev)-H(\ev|\fv^{s}) \\
&\stackrel{(a)}{\leq}&H(\ev)-H(\ev|\fv^{s}) + H(\ev|\fv^{s},\rv) \\
&\stackrel{(b)}{\leq}&H(\rv)-I(\ev;\rv|\fv^{s}) \\
&\stackrel{(c)}{=}&H(\rv|\fv^{s},\ev) \\
&\stackrel{(d)}{=}&0
\end{eqnarray}
where (a) follows by non-negativity of $H(\ev|\fv^{s},\rv)$,
(b) is the condition $H(\ev)\leq H(\rv)$,
(c) is due to $H(\rv|\fv^{s})=H(\rv)$ as $\rv$ and $\fv^{s}$ are
independent, (d) is the condition $H(\rv|\fv^{s},\ev)=0$.
\end{proof}

\section{Proof of Lemma~\ref{lem:linearized_property}}
\label{app:linearized_property}

\begin{proof}
Note that
$$
\hat{c}_j = \sum_{i = 1}^{t\alpha}\hat{g}_{ij}\mathpzc{f}(y_i),
$$
where $\hat{g}_{ij}$ for $i \in [t\alpha]$ and $j \in [\hat{n}\alpha]$ denote $(i, j)$-th entry of $\hat{G}$. Since the MDS array code is defined over $\F_q$, we have the following from the $\F_q$-linearity of $\mathpzc{f}(\cdot)$. (See Remark~\ref{rem:linearized_poly}.)
\begin{align}
\label{eq:lem_linearized1}
\hat{c}_j = \sum_{i = 1}^{t\alpha}\hat{g}_{ij}\mathpzc{f}(y_i) = \mathpzc{f}\left(\sum_{i = 1}^{t\alpha}\hat{g}_{ij}y_i\right).
\end{align}
Let $\{j_1, j_2,\ldots, j_s\}$ be the index of $s$ vector symbols of codeword $\hat{\cv}$. Given this we define set $\Sc \subset [\hat{n}\alpha]$ as follows,
$$
\Sc = \{(j_1-1)\alpha+1,\ldots, j_1\alpha,\ldots, (j_s-1)\alpha+1,\ldots, j_s\alpha\}
$$
 Now, it follows from \eqref{eq:lem_linearized1} that the observed $s$ vector symbols of codeword $\hat{\cv}$ corresponds to evaluations of $\mathpzc{f}(\cdot)$ at $s\alpha$ points $\{\tilde{y}_1,\ldots, \tilde{y}_{s\alpha}\}$ such that
\begin{align}
\label{eq:lem_linearized2}
(\tilde{y}_{1},\ldots, \tilde{y}_{s\alpha}) = \left\{\sum_{i = 1}^{t\alpha}\hat{g}_{ij}y_{i}\right\}_{j \in \Sc} = (y_1,\ldots, y_{t\alpha})\hat{G}_{\Sc},
\end{align}
where $\hat{G}_{\Sc}$ denotes the sub matrix of $\hat{G}$ obtained from columns of $\hat{G}$ with their indices in set $\Sc \subset [\hat{n}\alpha]$. For an MDS array code, we have that rank of matrix $\hat{G}_{\Sc}$ (over $\F_q$) is equal to $\min\{|\Sc| = s\alpha, t\alpha\}$. Using this fact along with \eqref{eq:lem_linearized2} gives us that $\{\tilde{y}_1,\ldots, \tilde{y}_{s\alpha}\}$ are $\min\{s\alpha, t\alpha\}$ linearly independent (over $\F_q$) points in subspace spanned by $\{y_1,\ldots, y_{t\alpha}\} \subset \F_{q^m}$.
\end{proof}


\section{Proof of Lemma~\ref{lem:intersection}}
\label{sec:lemma_intersection_proof}

\begin{proof}
We prove the Lemma for $n-k = 2$, i.e., $(k+2,k)-$DSS. The proof extends to higher number of parities in straightforward manner. Consider the following encoding matrix of the $(k+2,k)$ linear code employed by the DSS
\begin{equation}
\mathbf{G}=\left[\begin{array}{cccccc}
            I & 0 & \ldots & 0 & A^T_1 & B^T_1  \\
            0 & I & \ldots & 0  & A^T_2 & B^T_2\\
            \vdots & \vdots & \ddots & \vdots & \vdots & \vdots\\
            0 & 0 & \ldots & I & A^T_k & B^T_k\\
          \end{array}
\right].
\end{equation}
Assume that a newcomer node downloads $S_{1,j}\mathbf{x}^T_{k+1}$ and $S_{2,j}\mathbf{x}^T_{k+2}$ from the first and the second parity nodes during the repair process of $j$-th systematic node. Here $S_{1,j} = V_{k+1,j}$ and $S_{2,j} = V_{k+2,j}$ are $\frac{\alpha}{2}\times\alpha$ matrices. In order to be able to perform bandwidth efficient repair using interference alignment, $\{S_{1,j}\}_{j=1}^k$ and $\{S_{2,j}\}_{j=1}^k$ satisfy
\begin{equation}
\label{eq:IAreq1}
\text{rank}\left(\begin{array}{c}
            S_{1,j}A_i\\
            S_{2,j}B_i
          \end{array}
\right) = \frac{\alpha}{2}~~~\forall i \in [k]\backslash\{j\}
\end{equation}
and
\begin{equation}
\label{eq:IAreq2}
\text{rank}\left(\begin{array}{c}
            S_{1,j}A_j\\
            S_{2,j}B_j
          \end{array}
\right) = \alpha.
\end{equation}

Note that data downloaded from $i$-th systematic node $(i \neq j)$ for node repair is $V_{i,j}[f_{(i-1)\alpha+1},\ldots, f_{i\alpha}]^T$. Since the repair matrix of node $i$ associated to node repair of $j$-th node is $V_{i,j}$, we have
\begin{equation}
V_{i,j} = S_{1,j}A_i = S_{2,j}B_i.
\end{equation}
Note that the above relationship is among subspaces. (We use uppercase letters to represent both matrices and row spaces associated with those matrices.) Using the method of induction, we now show the main claim of Lemma~\ref{lem:intersection}. Note that this proof is modification of the proof of Lemma 10 in \cite{TamoWang_isit2012}.

\textbf{Base case $(|\mathcal{A}| = 1)$:} The statement of Lemma~\ref{lem:intersection} is true for this case as we perform a bandwidth efficient node repair, where each remaining node contributes $\frac{\alpha}{2}$ independent symbols for a single node repair.


\textbf{Inductive step:} Now we assume that the statement of Lemma~\ref{lem:intersection} is true for all sets $\mathcal{A} \subseteq [k]\backslash\{i\}$ with $|\mathcal{A}| \leq t-1$ and prove it for all sets of indices of size $t$. With out loss of generality, we prove this for $\mathcal{A} = [t]$. We know from inductive hypothesis that
\begin{equation}
\dim\left(\bigcap_{j \in [t-1]}V_{i,j}\right) = \text{rank}\left(\bigcap_{j \in [t-1]}V_{i,j}\right) \leq \frac{\alpha}{2^{t-1}},
\end{equation}

Now assume that the result is false for $\mathcal{A} = [t]$, i.e.,
\begin{eqnarray}
\label{eq:assump}
\dim\left(\bigcap_{j \in [t]}V_{i,j}\right) & = &\text{rank}\left(\bigcap_{j \in [t]}V_{i,j}\right) \nonumber \\
& = & \text{rank}\left(\bigcap_{j \in [t]}S_{1,j}A_i\right) \nonumber \\
&=& \text{rank}\left(\bigcap_{j \in [t]}S_{2,j}B_i\right) \nonumber \\
&>& \frac{\alpha}{2^{t}},
\end{eqnarray}

Since $A_i$ and $B_i$ are invertible, we have $\text{rank}\left(\bigcap_{j \in [t]}S_{1,j}A_i\right) = \text{rank}\left(\bigcap_{j \in [t]}S_{1,j}\right)$ and $ \text{rank}\left(\bigcap_{j \in [t]}S_{2,j}B_i\right) = \text{rank}\left(\bigcap_{j \in [t]}S_{2,j}\right)$. Next, consider

\begin{eqnarray}
\label{eq:subspace1}
\left(\bigcap_{j \in [t]}S_{1,j}\right)A_t &=& \left(\bigcap_{j \in [t]}S_{1,j}A_t\right) \n \\
&\subseteq& \left(\bigcap_{j \in [t-1]}S_{1,j}A_t\right) \n \\
&=& \bigcap_{j \in [t-1]}V_{i,j}.
\end{eqnarray}
Here, the above equation describe the relationship among row spaces of participating matrices. Similarly, we have the following.
\begin{equation}
\label{eq:subspace2}
\left(\bigcap_{j \in [t]}S_{2,j}\right)B_t \subseteq \bigcap_{j \in [t-1]}V_{i,j}.
\end{equation}

Moreover, it follows from (\ref{eq:assump}) and the full-rankness of $A_t$ and $B_t$ that
\begin{eqnarray}
\label{eq:subspace_size}
\dim\left(\left(\bigcap_{j \in [t]}S_{1,j}\right)A_t\right) &=& \dim\left(\left(\bigcap_{j \in [t]}S_{2,j}\right)B_t\right) \n \\
&>& \frac{\alpha}{2^{t}}
\end{eqnarray}

Thus, we have two subspaces $\left(\bigcap_{j \in [t]}S_{1,j}\right)A_t$ and $\left(\bigcap_{j \in [t]}S_{2,j}\right)B_t$ of dimension strictly greater than $\frac{\alpha}{2^t}$ (see (\ref{eq:subspace_size})), which are contained in the subspace $\bigcap_{j \in [t-1]}V_{i,j}$ of dimension at most $\frac{\alpha}{2^{t-1}}$ (see (\ref{eq:subspace1}) and (\ref{eq:subspace2})). Therefore,

\begin{eqnarray}
\label{eq:contradiction}
\left(\left(\bigcap_{j \in [t]}S_{1,j}\right)A_t\right) \bigcap \left(\left(\bigcap_{j \in [t]}S_{2,j}\right)B_t\right) \neq \{0\} \nonumber \\
\Rightarrow S_{1,t}A_t \bigcap S_{2,t}B_t \neq \{0\} \nonumber
\end{eqnarray}
which is in contradiction with (\ref{eq:IAreq2}). This implies that
\begin{equation}
\dim\left(\bigcap_{j \in [t]}V_{i,j}\right) \leq \frac{\alpha}{2^t}.
\end{equation}

\end{proof}

\section{Proof of Lemma~\ref{lm:union_zigzag}}
\label{appen:union_zigzag}

\begin{proof} First, note that repair sets $Y_{j}$ for $j \in [k]$ satisfy
$$|Y_j|=p^{k-1}$$
$$|Y_{j_1}\cap Y_{j_2}|=p^{k-2}, \textmd{ for } j_1\neq j_2,$$
and in general
$$|Y_{j_1}\cap Y_{j_2}\ldots \cap Y_{j_t}|=p^{k-t}, \textmd{ for } j_1\neq j_2 \neq \ldots \neq j_t.$$

Next, we count the number of symbols stored on systematic nodes that are revealed to an eavesdropper observing repairs of $\ell_2$ systematic nodes. Let $\Ec_2 \subset [k]$ be the set of $\ell_2$ nodes where the eavesdropper observes the data downloaded during node repair.  A node participating in all $\ell_2$ eavesdropped node repairs  send $|\cup_{j\in \Ec_2}Y_j|$ distinct symbols. From inclusion-exclusion principle, we have
\begin{align*}
 |\cup_{j\in \Ec_2}Y_j| & =  \sum_{j \in \Ec_2}|Y_j| -\sum_{j_1 \neq j_2 \in \Ec_2}|Y_{j_1}\cap Y_{j_2}| & \n \\
&~~~+\sum_{j_1 \neq j_2 \neq j_3 \in \Ec_2}|Y_{j_1}\cap Y_{j_2} \cap Y_{j_3}|\ldots & \\
&=  \ell_2\cdot p^{k-1}-\binom{\ell_2}{2}\cdot p^{k-2}+\binom{\ell_2}{3}\cdot p^{k-3}\ldots & \\
 &= \sum_{i=1}^{\ell_2}(-1)^{i-1}\binom{\ell_2}{i}p^{k-i} &\nonumber \\
&=p^k-p^{k-\ell_2}(p-1)^{\ell_2}.&
\end{align*}

The eavesdropper observes data stored on $|\Ec_1 \cap [k]| + \ell_2$ systematic nodes. Therefore, the eavesdropper gets access to
\begin{align*}
 &p^k(|\Ec_1 \cap [k]| + \ell_2)+(k-|\Ec_1 \cap [k]|-\ell_2)|\cup_{j\in \Ec_2} Y_j| & \\
 &= p^k(|\Ec_1 \cap [k]|+\ell_2) & \n \\
&~~~+(k-|\Ec_1 \cap [k]|-\ell_2)(p^k-p^{k-\ell_2}(p-1)^{\ell_2})& \\
 & = kp^k-p^k(k-|\Ec_1 \cap [k]|-\ell_2)\left(1-\frac{1}{p}\right)^{\ell_2}&
 \end{align*}
systematic symbols.

We now focus on the parity symbols that are observed by the eavesdropper. The eavesdropper gains all the symbols stored on $(\ell_1 - |\Ec_1 \cap [k]|)$ parity nodes. In addition, the eavesdropper gets $|\cup_{j\in \Ec_2}Y_j| = p^k-p^{k-\ell_2}(p-1)^{\ell_2}$ symbols from the remaining parity nodes during node repair of $\ell_2$ systematic nodes. However, $p|\cup_{j\in \Ec_2}Y_j|$ of these symbols ($|\cup_{j\in \Ec_2}Y_j|$ from each parity node) are function of the systematic symbols that eavesdropper has access to. Therefore the eavesdropper observes at most
\begin{align*}
\centering
&kp^k-p^k(k-|\Ec_1 \cap [k]|-\ell_2)\left(1-\frac{1}{p}\right)^{\ell_2} & \n \\
&~~~+ (\ell_1 -|\Ec_1 \cap [k]|)(p^k - |\cup_{j\in \Ec_2}Y_j|) \\
&= kp^k  - p^k(k-|\Ec_1 \cap [k]|-\ell_2)\left(1-\frac{1}{p}\right)^{\ell_2} & \n \\
&~~~ +p^k (\ell_1 -|\Ec_1 \cap [k]|)\left(1 - \frac{1}{p}\right)^{\ell_2} \\
&=  kp^k  - p^k(k-\ell_1-\ell_2)\left(1-\frac{1}{p}\right)^{\ell_2}
\end{align*}
linearly independent symbols.
\end{proof}


\section{Proof of Theorem~\ref{thm:parameters}}
\label{ap:appendix}

\begin{proof}
In order to prove that $\Cc^{\rm loc}$ attains the bound in~(\ref{eq:upp_bound}), it is enough to show that any pattern of $E\triangleq n - \ceilb{\frac{\mathcal{M}}{\alpha}}  - \left(\ceilb{\frac{\mathcal{M}}{r\alpha}}-1\right)(\delta - 1)$ node erasures can be corrected by $\Cc^{\rm loc}$. Towards this, we  prove that any $E$ erasures of $\Cc^{\rm loc}$  correspond to at most $D-1$ rank erasures of the underlying $[N,\cM,D]_{q^m}$ Gabidulin code $\cC^{\rm{Gab}}$, and hence can be corrected by  $\cC^{\rm{Gab}}$. Here, we point out the the worst case erasure pattern is when the erasures appear in the smallest possible number of groups and the number of erasures inside a local group is the maximum possible.

First,  given $n=\frac{N}{\alpha}+\left\lceil\frac{N}{r\alpha }\right\rceil(\delta-1)$, we can rewrite $E$ in the following way:
\begin{equation}
\label{eq:bound_rewrite}
E=\frac{N}{\alpha}-\left\lceil\frac{\cM}{\alpha}\right\rceil+\left(\left\lceil\frac{N}{r \alpha}\right\rceil-\left\lceil\frac{\cM}{r \alpha}\right\rceil+1\right)(\delta-1).
\end{equation}

Let $\alpha_1, \beta_1, \gamma_1$ be the integers such that  $\cM=\alpha(\alpha_1 r+\beta_1)+\gamma_1$, where $1\leq \alpha_1\leq g =\left\lceil\frac{n}{r+\delta-1}\right\rceil$; $0 \leq \beta_1\leq r-1$; and $0\leq \gamma_1\leq \alpha-1$. We then complete the proof by treating two cases: $(r+\delta-1)|n$ and $(r+\delta-1)\nmid n$ in the following.

\begin{enumerate}
  \item If $ (r+\delta-1)|n$, then $N=g r \alpha$ and
  \begin{equation}
  \label{eq:rank_dist1}
  D-1=N-\cM=(g-\alpha_1)r\alpha-\beta_1\alpha-\gamma_1.
  \end{equation}

    \begin{itemize}
    \item If $\gamma_1=\beta_1=0$, then $\left\lceil\frac{\cM}{\alpha}\right\rceil=\alpha_1 r$ and $\left\lceil\frac{\cM}{r \alpha}\right\rceil=\alpha_1$. In this case,~(\ref{eq:bound_rewrite}) gives $E=(g-\alpha_1)(r+\delta-1)+(\delta-1)$. Therefore, in the worst case, we have $(g-\alpha_1)$ local groups with all of their nodes erased, and one additional local group with $\delta-1$ erased nodes. This corresponds to $r \alpha$ rank erasures in $(g-\alpha_1)$ groups of the corresponding Gabidulin codeword. (See  Lemma~\ref{lem:linearized_propertyI}.) Since by~(\ref{eq:rank_dist1}), $D-1=(g-\alpha_1)r\alpha$, this erasures can be corrected by the Gabidulin code.

    \item If $\gamma_1=0$ and $\beta_1>0$, then $\left\lceil\frac{\cM}{\alpha}\right\rceil=\alpha_1 r+\beta_1$ and $\left\lceil\frac{\cM}{r \alpha}\right\rceil=\alpha_1+1$. Now, it follows from (\ref{eq:bound_rewrite}) that $E=(g-\alpha_1-1)(r+\delta-1)+(r+\delta-1-\beta_1)$. Hence, in the worst case, we have $(g-\alpha_1-1)$ local groups where all of their $(r + \delta - 1)$ nodes are erased, and one additional local group with $r+\delta-1-\beta_1$ erased nodes. From Lemma~\ref{lem:linearized_propertyI} and~(\ref{eq:rank_dist1}), there erasures corresponds to $(g-\alpha_1)r\alpha- \beta_1\alpha=D-1$ rank erasures that can be corrected by the Gabidulin code.

    \item If $\gamma_1>0$, then $\left\lceil\frac{\cM}{\alpha}\right\rceil=\alpha_1 r+\beta_1+1$ and $\left\lceil\frac{\cM}{r \alpha}\right\rceil=\alpha_1+1$.  We have $E=(g-\alpha_1-1)(r+\delta-1)+(r+\delta-1-\beta_1-1)$ in this case (see \eqref{eq:bound_rewrite}). Therefore, in the worst case, there are $(g-\alpha_1-1)$ local groups with all their nodes erased, and one additional local group with $r+\delta-1-\beta_1-1$ erased nodes, which by Lemma~\ref{lem:linearized_propertyI} and by~(\ref{eq:rank_dist1}) corresponds to $ (g-\alpha_1)r\alpha- \beta_1\alpha-\alpha<D-1$ rank erasures that can be corrected by the Gabidulin code.
  \end{itemize}

  \item If $n~({\rm mod }~ r+\delta-1)-(\delta-1)\geq\left\lceil\frac{\cM}{\alpha}\right\rceil({\rm mod }~ r)>0$, then, using the fact that $n(\textmd{mod } r+\delta-1)-(\delta-1)\equiv \beta_0$, we have
   $\beta_0\geq \beta_1>0$, $N=(g-1)r\alpha+\beta_0\alpha$, $\frac{N}{\alpha}=(g-1)r+\beta_0$,
   $\left\lceil\frac{N}{r\alpha}\right\rceil=g$ and
   \begin{equation}
  \label{eq:rank_dist2}
  D-1=(g-\alpha_1-1)r\alpha+(\beta_0-\beta_1)\alpha-\gamma_1.
  \end{equation}

  \begin{itemize}
       \item If $\gamma_1=0$, then $\left\lceil\frac{\cM}{\alpha}\right\rceil=\alpha_1 r+\beta_1$ and $\left\lceil\frac{\cM}{r \alpha}\right\rceil=\alpha_1+1$. In this case, it follows from~(\ref{eq:bound_rewrite}) that $E =(g-\alpha_1-1)(r+\delta-1)+(\beta_0-\beta_1+\delta-1)$. Hence, in the worst case, we have $(g-\alpha_1-1)$ local groups with all of their nodes erased and one additional local group with $\beta_0-\beta_1+\delta-1$ erased nodes (or $\beta_0+\delta-1$ erased nodes in the smallest group, $(g-\alpha_1-2)$ groups with all the erased nodes and one group with $r+\delta-1-\beta_1$ erased nodes). This by Lemma~\ref{lem:linearized_propertyI} and by~(\ref{eq:rank_dist2}) corresponds to $ (g-\alpha_1-1)r\alpha +(\beta_0- \beta_1)\alpha=D-1$ rank erasures that can be corrected by the Gabidulin code.

    \item If $\gamma_1>0$, then $\left\lceil\frac{M}{\alpha}\right\rceil=\alpha_1 r+\beta_1+1$ and $\left\lceil\frac{M}{r \alpha}\right\rceil=\alpha_1+1$. Next, using~(\ref{eq:bound_rewrite}) we get $E=(g-\alpha_1-1) (r+\delta-1)+(\beta_0-\beta_1-1+\delta-1)$. Therefore, we have $(g-\alpha_1-1)$ local groups with all of their nodes erased, and one additional local group with $\beta_0-\beta_1-1+\delta-1$ erased nodes (or $\beta_0+\delta-1$ erased nodes in the smallest group, $(g-\alpha_1-2)$ groups with all the erased nodes and one group with $r+\delta-1-\beta_1-1$ erased nodes) in the worst case. This by Lemma~\ref{lem:linearized_propertyI}  and by~(\ref{eq:rank_dist2}) corresponds to $(g-\alpha_1-1)r\alpha +(\beta_0- \beta_1)\alpha-\alpha<D-1$ rank erasures that can be corrected by the Gabidulin code.
  \end{itemize}
\end{enumerate}
\end{proof}

%
\bibliographystyle{IEEEtran}
\bibliography{IEEEabrv,SecureLocalDSS_revise_v2}

\end{document}